%

\documentclass[final]{ectaart}

\RequirePackage[OT1]{fontenc}
\RequirePackage{graphicx}
\RequirePackage{amsthm}
\RequirePackage[cmex10]{amsmath}
\RequirePackage{natbib}
\RequirePackage[colorlinks,citecolor=blue,urlcolor=blue]{hyperref}
\usepackage{color}
\usepackage{bm}
\usepackage{graphicx}
\usepackage{epstopdf}
\usepackage{array,epsfig}
\usepackage{dsfont}

\textwidth=36pc
\textheight=46.5pc
\oddsidemargin=1pc
\evensidemargin=1pc
\headsep=15pt
\topmargin=.6cm
\parindent=1.7pc
\parskip=0pt

\usepackage{amsmath}
\usepackage{amssymb}
\usepackage{amsfonts}
\usepackage{multirow}
\usepackage{amsthm}

\setcounter{page}{1}
\newtheorem{thm}{Theorem}

\newtheorem{lem}{Lemma}

\newtheorem{ass}{Assumption}

\theoremstyle{definition}

\newtheorem{rem}{Remark}

\newcommand{\R}{\mathbb{R}}
\newcommand{\E}{\mathbb{E}}
\newcommand{\Pro}{\mathcal{P}}



\allowdisplaybreaks



\begin{document}

\begin{frontmatter}
\title{Double Cross Validation for the Number of  Factors in Approximate Factor Models}
\runtitle{Double Cross Validation for the Number of Factors in Approximate Factor Models}
\begin{aug}
\author{\fnms{Xianli} \snm{Zeng}\thanksref{a}\ead[label=author1]{stazeng@nus.edu.sg}}
\author{\fnms{Yingcun} \snm{Xia}\thanksref{b}\ead[label=author2]{staxyc@nus.edu.sg}}
\author{\fnms{Linjun} \snm{Zhang}\thanksref{c}\ead[label=author3]{linjunz@wharton.upenn.edu}}

\address[a]{Department of Statistics and Applied Probability, National University of Singapore, Singapore 117546; \printead{author1}}
\address[b]{Department of Statistics and Applied Probability, National University of Singapore, Singapore 117546; \printead{author2}}
\address[c]{Department of Statistics, the Wharton School, University of Pennsylvania, Philadelphia, PA 19104. \printead{author3}}
\runauthor{X. Zeng, Y. Xia and L. Zhang}
\end{aug}

\begin{abstract}
Determining the number of factors is essential to factor analysis.
In this paper, we propose  {an  efficient cross validation (CV)} method to determine the number of factors in approximate factor models.
The method applies 
 CV
 twice, first along the directions of observations and then variables, and hence is referred to hereafter as  double cross-validation (DCV).
Unlike most CV methods, which are prone to overfitting, the DCV is statistically consistent in determining the number of factors when both dimension of variables and sample size are sufficiently  large.
Simulation studies show that DCV has outstanding performance in comparison to existing methods in selecting the number of factors, especially when the idiosyncratic error has heteroscedasticity, or heavy tail, or relatively {large} variance.
\end{abstract}

\begin{keyword}
  cross validation, eigen-decomposition, factor model, high dimensional data, model selection
\end{keyword}

\end{frontmatter}

\baselineskip1.8em

\section{introduction}

Factor models are a special kind of latent-variable models that are widely used in economics and other disciplines of research.
Due to the ubiquitous dependence across high-dimensional economic variables, it is appealing for explaining the variation of the high-dimensional economic measurements from only a small number of latent common factors.
Well-known examples include the arbitrage pricing theory \citep{Ross1976}, rank of demand systems \citep{Lewbel1991}, multiple-factor models \citep{FF1993}, components analysis in economic activities  \citep{GH1999}, and diffusion index \citep{SW2002}.
Successful applications rely heavily on the ability to correctly specify the number of factors, which is usually unknown in practice. Therefore, determining the number of factors is an essential step in applying factor models.

 One approach to this end is to utilize the eigen-structure of data matrix; see for example  \cite{F1977}, \cite{Schott1994},  \cite{YW2003}, \cite{Onatski2010} and \cite{LL2016}, amongst  others.
However, these methods require strong assumptions on the separation of eigenvalues,  and thus most of them are only applicable to fixed dimensional data.
More popular approaches are cross-validation (CV) and information criteria (IC).
{For factor model, several cross-validation methods have been proposed}; see, for example, \cite{Wold1978} and \cite{EK1982}, amongst  others.
\cite{BKSK2008} comprehensively reviewed these CV-based approaches, and concluded that most were not statistically consistent.
In addition, they recommended an alternative method based on the expectation maximization (EM) algorithm, which turned out to be extremely  computationally expensive.
As an alternative, the ICs are  more computationally effective and consistency can be guaranteed. As far as we know, \cite{CD1997} was the first paper that used ICs to determine the  number of factors.
Subsequently, other methods based on the Akaike information criterion (AIC) and Bayesian information criterion (BIC) \citep{SW1998,FLHR2000, BaiNg2002, LLS2017}  were proposed for different settings of sample size $n$ and dimension of variable, $p$.
When the variance of the idiosyncratic component is relatively small, these methods are able to provide very efficient estimates of the number of factors \citep{BaiNg2002}. 
However, the IC-based approaches are usually not fully data-driven since their penalties depend on predetermined tuning parameters.
Even though several guides are proposed to mitigate the effect of  tuning parameters, the performances of these methods are not stable when the signal-to-noise ratio is relatively small \citep{Onatski2010}, i.e., the variation of the idiosyncratic component is relatively large, or when the idiosyncratic component has heavy tails in distribution.
Unfortunately, both of which are common in financial data.

In this paper, we propose to estimate the number of factors by a computationally-efficient CV method.
Note that the conventional CV methods, e.g., those used in the linear regression, only leave one or several \emph{observations} out.
In contrast, factor models involve a matrix, where both observations (or rows) and variables (or columns) play similar roles mathematically in the analysis.
Thus, we propose a double cross-validation (DCV) method that applies CV first to the \emph{observations} and then to the \emph{variables} in the {observations}.
Theoretically, we show that the method is consistent for high-dimensional data and allows dependency between the idiosyncratic errors.
The method  is thus applicable to approximate factor models.
In addition, computationally, the second CV can be easily calculated by the linear regression error using the full data \citep{Shao1993}, while the first CV can be done in a $K$-fold manner.
Simulation studies show that the proposed approach performs satisfactorily even when the idiosyncratic  error has relatively big variance or homoscedasticity, or heavy tail,  which are especially relevant for economic data.

The rest of this paper is organized as follows. Section~\ref{sec:method} reviews the basic notation of factor analysis.
Section~\ref{DCVsection} presents the DCV approach.
Section~\ref{sec:theory} establishes statistical consistency of the approach, while the proofs of those properties are given in the Appendix.
Sections~\ref{sec:simulation}--\ref{sec:application} present a set of simulation studies and  an empirical application to assess the finite sample performance of DCV. Concluding remarks are given in Section~\ref{sec:discussion}.

\section{Factor model and its  factors and  loadings}\label{sec:method}

For a $p$-dimensional random vector ${\bm x}=(X_1, X_2,...,X_p)^\top $, the factor model assumes the following \emph{bilinear} structure
\begin{equation*}
{\bm x}={\bm L^0{\bm f}^0}+{\bm e},
\end{equation*}
where ${\bm L}^0\in\R^{p\times d_0}$ is the loading matrix, ${\bm f}^0=(f_{1}, f_{2}, ... , f_{d_0})^\top $ is  the vector of $ \ d_0 \  $ common factors. Together ${\bm L^0{\bm f}^0}$ is called the common component of the data,  and ${\bm e}=(e_{1},e_{2},...,e_{p})^\top $ the idiosyncratic component. Let $\boldsymbol\Sigma_{\bm e}=\E[\bm e\bm e^\top]$, $\boldsymbol\Sigma_{\bm f_0}=\E[{\bm f_0\bm f_0^\top}]$, and $\boldsymbol\Sigma_{\bm x}=\E[\bm x\bm x^\top]$. Under this model, the variation of the $p$-dimensional variable is mainly generated by a small number of common factors.
In the statistical literature, the idiosyncratic component are usually assumed to be independent of the common component, and have diagonal covariance matrix, in which case {$\boldsymbol\Sigma_{\bm x}={\bm L}^0\boldsymbol\Sigma_{\bm f_0}{{\bm L}^{0}}^\top +\boldsymbol\Sigma_{\bm e}$}. However, such independence assumption and diagonal structure are usually not realistic in financial data or macroeconomic data where factor analysis are often applied. In this paper, we consider the approximate factor model where the idiosyncratic components can be weakly dependent.

Let  ${\bm X}=({\bm x}^\top _1,{\bm x}^\top _2,...,{\bm x}^\top _n)^\top $ be random samples of  $n$ observations on $ p$ variables. We have the following matrix form for the factor model:
\begin{equation}  \label{FM}
{\bm X}={\bm F}^0{{\bm L}^{0}}^\top +{\bm E},
\end{equation}
with ${\bm F}^0=({\bm f}_1^0,{\bm f}_2^0,...,{\bm f}_n^0)^\top $ and  ${\bm E}=({\bm e}^\top _1,{\bm e}^\top _2,...,{\bm e}^\top _n)^\top $.
According to the eigen-decomposition, ${\bm X^\top \bm X}$ has the  expression
\begin{equation*}
{\bm X^\top \bm X}=\boldsymbol\Phi\boldsymbol\Lambda\boldsymbol\Phi^\top,
\end{equation*}
where $\boldsymbol\Lambda$ is a $p\times p$  diagonal matrix  with diagonal  entries being  the eigenvalues of $\bm X^\top\bm X$, $\boldsymbol\Phi=(\boldsymbol\phi_1,\boldsymbol\phi_2,...,\boldsymbol\phi_n)\in\R^{n\times n}$ is the eigenvector matrix of ${\bm X^\top \bm X}$ such that $\boldsymbol\Phi^\top\boldsymbol\Phi={\bm I}_p$. Then, with working number of factors $d$, the estimators of factor loadings and factors  are respectively
\begin{equation}\label{FL1}
\widetilde{\bm L}^d=\sqrt{p}(\boldsymbol\phi_1,\boldsymbol\phi_2,...,\boldsymbol\phi_d) \ \  \mbox{and} \ \  \widetilde{\bm F}^d=\frac{1}{{p}}{\bm X}\widetilde{\bm L}^d.
\end{equation}
Denote by $\text{Tr}(\bm A)$ the trace of matrix $\bm A$. It is easy to see that
\begin{equation*} \label{LF}
(\widetilde{\bm F}^{d},\widetilde{\bm L}^{d})=\underset{\substack{{\bm F}^d\in\R^{n\times d},{\bm L}^d\in\R^{p\times d}\\{{\bm L}^{d}}^\top {\bm L}^d/p={\bm I}_d}}{\text{argmin}}\{\text{Tr}[({\bm X}-{\bm F}^d{{\bm L}^{d}}^\top )({\bm X}-{\bm F}^d{{\bm L}^{d}}^\top )^\top ]\}.
\end{equation*}

\section{Double Cross Validation for Number of Factors}\label{DCVsection}

Our basic idea to specify the number of factor $d_0$, is based on the prediction error in a cross-validated manner. Write the model element-wise as
\begin{equation}\label{reg1}
{x}_{is}={{\bm f}_i^0}^\top {\bm l}^0_s+{{e}}_{is}, \ \ \ \ i=1, ..., n, \ \  s=1,2,...,p.
\end{equation}
CV needs to make prediction of $ {x}_{is} $ based on other elements except for itself.
Because neither $ {{\bm f}_i^0} $ nor $  {\bm l}^0_s $ in \eqref{reg1} is observable, we implement the prediction by two-stage fitting: leave-\emph{observation}-out  and leave-\emph{variable}-out.

The first stage is to estimate $  {\bm l}^0_s $ from the data by the $K$-fold CV. Divide the rows of $ {\bm X} $ into $K$ folds, $ 1< K \le n  $. Denote them by $M_1, ..., M_K$. Let $n_k=\#M_k$ be the number of elements in $M_k$, and ${\bm X}_{-M_k}$ be the sub-matrix of $ {\bm X} $ with  rows in $ M_k $ being removed. Apply  the eigen-decomposition to ${\bm X}_{-M_k}$ to obtain corresponding matrices $\widetilde{\bm F}^{k,d}$ and $\widetilde{\bm L}^{k,d}$ as in \eqref{FL1}.
Here, we use the superscript to highlight the fact that the estimator is obtained from the data with the rows in $M_k$ removed, with working number of factors $d$. Similar to \cite{BaiNg2002}, we rescale the estimated factor loading matrix and let
\begin{equation*}
\widehat{\bm L}^{k,d}= (\frac{1}{n-n_k}{\bm X}^\top_{-M_k}{{\bm X}_{-M_k}}) \widetilde{\bm L}^{k,d} = (\widehat{\bm l}^{k,d}_1, ..., \widehat{\bm l}^{k,d}_p).
\end{equation*}

In the second stage, we replace $ {\bm l}^0_s $ in \eqref{reg1} by $ \widehat{\bm l}^{k,d}_s $ and rewrite \eqref{reg1} as a regression model
\begin{equation}\label{reg2}
{x}_{is}={{\bm f}_i^0}^\top \widehat{\bm l}^{k,d}_s + {{e}}'_{is}, \ \ \ \ i \in M_k, \ \  s=1,2,...,p,
\end{equation}
where $ e'_{is} $ is the regression error in place of $ e_{is} $ due to the replacement. This time,  $\widehat{\bm l}^{k,d}_s, s=1,2,...,p$ are known and treated as the `regressors', but ${\bm f}^0_i$ is treated as  `regression coefficients' that need to be estimated.
By leaving $x_{is}$ out,  we estimate ${\bm f}^0_i$ by
\begin{equation*}
 \widehat {{\bm f}}_{i,s}^{k,d} = \underset{\bm f}{\text{argmin}} \sum_{t\neq s, t=1}^p ( {x}_{it} -\bm f^\top \widehat{\bm l}^{k,d}_t)^2 ,\  i \in M_k.
\end{equation*}
Thus we can predict $x_{is}$ by $(\widehat {{\bm f}}_{i,s}^{k,d})^\top \widehat{\bm l}^{k,d}_s $.
The average squared prediction error for $\bm x_i = (x_{i1}, ..., x_{ip})^\top$ is
\begin{equation*}
 V_i^{k,d} = \frac{1}{p}\sum\limits_{s=1}^p [x_{is} - (\widehat {{\bm f}}_{i,s}^{k,d})^\top \widehat{\bm l}^{k,d}_s]^2, \ \  i \in M_k.
\end{equation*}
The calculation of $  V_i^{k,d} $ can be much simplified  as shown in \cite{Shao1993}, i.e.,
\begin{equation*}
V_i^{k,d} =\frac{1}{p}\sum\limits_{s=1}^p(1-w^{k,d}_s)^{-2}(x_{is}-(\widehat{\bm f}_{i}^{k,d})^\top\widehat{\bm l}^{k,d}_s)^2,
\end{equation*}
where  $\widehat{\bm f}_i^{k,d}=(\widehat{\bm L}^{k,d\top}\widehat{\bm L}^{k,d})^{-1}\widehat{\bm L}^{k,d\top}\bm x_i$ is the conventional least squares estimator for \eqref{reg2} and $w^{k,d}_s$  is the $s$-th diagonal element of projection matrix $\Pro_{\widehat{\bm L}^{k,d}}=\widehat{\bm L}^{k,d}(\widehat{\bm L}^{k,d\top}\widehat{\bm L}^{k,d})^{-1} \times \widehat{\bm L}^{k,d\top}$.
Finally, consider the averaged  prediction error over all the elements
\begin{equation*}
DCV(d)=\frac{1}{n} \sum\limits_{k= 1}^K \sum\limits_{i \in M_k}^nV_i^{k,d}.
\end{equation*}
 Let $  d_{\max}$ be a fixed positive integer that is large enough such that  $  p> d_{\max} > d_0$.  The DCV estimator for the number of factors is given by
\begin{equation}\label{definek}
\widehat{d}_{DCV}=\underset{0\leq d\leq d_{\max}}{\text{argmin}}DCV(d).
\end{equation}

The above approach has similarity with the vanilla row-wise CV \citep{BKSK2008}, which also predicts ${\bm x}_{i}$  by two steps. Its first step is exactly the same as ours. However, in the second step, row-wise CV uses the full data instead of cross validation method,  leading to overfitting and inconsistency of the method. To fix this problem, \cite{EK1982}  suggested the element-wise cross validation. Although their methods solve the problem of overfitting, they are costly and involve immense computation.

Our DCV  offers a simpler solution. By utilizing CV in both the first stage of leaving ``observations'' out, and the second regression step of leaving   ``variables'' out, our method  guarantees that the prediction of each element does not use any information from itself, and thus the consistency is ensured as shown in the next section. In addition, the implement of $K$-fold CV can significantly reduce the computational complexity, especially when the number of rows is large. It is interesting to see that the $K$-fold CV in the first stage will also help in simplifying the calculation in the second stage. Because of this appealing nature, to further facilitate the calculation, we can transpose $ {\bm X} $ when $n<p$.


\section{Asymptotic consistency of the estimation}\label{sec:theory}
In this section, we investigate the consistency of our derived estimator.  We start with the following assumptions.
\begin{ass}\label{ass1}
The eigenvalues of ${\bm F}^{0\top}{\bm F}^0/n$ are bounded away from zero and infinity, and  that $\E\|{\bm f}^0_i\|^4<\infty$ for $ i=1, ..., n$.
\end{ass}
\begin{ass}\label{ass2} The eigenvalues of ${\bm L}^{0\top}{\bm L}^0/p$ are bounded away from zero and infinity, and that $\E\|{\bm l}^0_i\|^4<\infty$ for  $i=1, ..., p$.
 \end{ass}

\begin{ass}\label{ass3} There exists a constant $M<\infty$ such that
\begin{enumerate}
\item[1.]  $\E e_{it}=0$  and  $\E e_{it}^4\leq M$ for $1\leq i\leq n$ and $1\leq t\leq p$;

  \item[2.] $\gamma_p(i,j) = \E({\bm e}_i^\top{\bm e}_j/p)\leq M$  and $\frac{1}n\sum\limits_{j=1}^n\gamma^2_p(i,j)\leq M$ for all $i$;

 \item[3.] $\E({\bm e}_{is}{\bm e}_{it})=\tau_{st,i}$ with $\tau_{st,i}\leq |\tau_{st}|$ for some $\tau_{st}$ and for all $i$; in addition,  $\frac{1}p\sum\limits_{s=1}^p\sum\limits_{t=1}^p\tau_{st}$ $\leq M$;

 \item[4.] $\E({\bm e}_{is}{\bm e}_{jt})=\tau_{ij,st}$ and $\frac{1}{np}\sum\limits_{i=1}^n\sum\limits_{j=1}^n\sum\limits_{s=1}^p\sum\limits_{t=1}^p\tau_{ij,st}\leq M$;

\item[5.]  $\frac{1}{  n }\E|\sum\limits_{j=1}^n[e_{jt}e_{js}-\E(e_{jt}e_{js})]|^2\leq M$ for all $(s,t)$;

\item[6.]  $\|\frac{1}{\sqrt{n}}\sum\limits_{i=1}^n{\bm f}^{0}_{i}e_{is}\|^2\leq M$ for $1\leq s\leq p$;

\item[7.]  $\|\frac{1}{\sqrt{p}}\sum\limits_{s=1}^p{\bm l}^{0}_{s}e_{is}\|^2\leq M$ for $1\leq i\leq n$.

\end{enumerate}
\end{ass}

Assumptions \ref{ass1}--\ref{ass3} are commonly used in the  analysis of  factor models, for example, in \citet{BaiNg2002} and \citet{LLS2017}.  Assumptions \ref{ass1} and  \ref{ass2} together ensure that each factor plays a nontrivial role in contributing to the variation of ${\bm X}$. Unlike the strict factor model that assumes all entries of $\bm E$ are I.I.D., Assumption~\ref{ass3} allows weak dependency for elements of  $\bm E$, making our methods applicable to the approximate factor models.  These assumptions also indicate that  all the  eigenvalues of the covariance matrix of common components would dominate the eigenvalues corresponding to idiosyncratic components, which is crucial to the identifiability for the approximate factor models.

\begin{ass}\label{ass4}
$\sup\limits_k n_k=o(n^{1/3})$ and
 $K\sup\limits_kn_k/n< \infty$, where $n_k=\#M_k$ is the number of elements in $M_k$.

\end{ass}
Assumption \ref{ass4} is a weak condition on $K$-fold CV in the first stage. The number of elements in each fold can either be fixed or tend to infinity. In particular, this assumption includes the leave-one-out CV.
\edef\oldass{\the\numexpr\value{ass}+1}

\setcounter{ass}{0}
\renewcommand{\theass}{\oldass.\alph{ass}}

\begin{ass}\label{ass5a}  The idiosyncratic components satisfies
\begin{enumerate}
\item[1.] $\E[e_{it}^2\mid {\bm F}^0_{-M_k}{\bm L}^{0\top}]=\sigma^2$ for $i\in M_k$, where ${\bm F}^0_{-M_k}$ is a submatrix of ${\bm F}^0 $ with rows in $M_k$ being removed;

\item[2.] $\lambda_1({\bm E^\top}{\bm E}/n)<2 \sigma^2$.
\end{enumerate}

\end{ass}

\begin{ass}\label{ass5b} Let $p/n\to \rho\in(0,\infty)$.  There exists a $p\times p$ positive definite matrix $\boldsymbol\Sigma_p$ such that ${\bm e}_i=\boldsymbol\Sigma_p^{1/2}{\bm y}_i$, where ${\bm y}_i=( y_{i1},y_{i2},...,y_{ip})^\top$,
\begin{enumerate}
\item[1.] $\E y_{it}=0$, $\E y_{it}^4<\infty$,  and $y_{it}$ are I.I.D. for $1\leq i\leq n, 1\leq t\leq p$;

\item[2.]  ${\bm y}_i$ is independent of ${\bm F}^0_{-M_k}$;

 \item[3.] The spectral distribution of $\boldsymbol\Sigma_p$ is convergent, i.e., there exists a distribution function $H$ such that $\frac{1}{p}\sum\limits_{s=1}^p{\mathds{1}}(\lambda_s(\boldsymbol\Sigma_p)\leq t)\to H(t)$ as $p\to \infty$;

\item[4.] $2\text{\normalfont{diag}}(\boldsymbol\Sigma_p) - \boldsymbol\Sigma_p $ is  a positive definite matrix, where $\text{\normalfont{diag}}(\boldsymbol\Sigma_p)$ is a diagonal matrix that has the same diagonal elements as $\boldsymbol\Sigma_p$.
\end{enumerate}
\end{ass}
\let\theass\origtheass

Assumptions \ref{ass5a} and  \ref{ass5b} are two technical assumptions, either of which ensures the viability of the proposed method.  They also allow weak dependence among idiosyncratic components as well as the observations, which generalize the I.I.D. assumption required in the strict factor model.  In comparison, \cite{BaiNg2002} needs weaker assumptions on  the dependence than ours, possibly due to the relatively large penalty in their methods. Of course, this large penalty is at the cost of selecting inappropriately fewer factors as shown in our simulation study.




\begin{thm} \label{thm1}
Suppose  Assumptions \ref{ass1}--\ref{ass4} hold. If  either Assumption \ref{ass5a}  or \ref{ass5b} holds, then the DCV estimator \eqref{definek} is consistent as both $ n $ and $ p $ tend to infinity, i.e.,
\begin{equation*}
\lim\limits_{n,p\to\infty}Prob(\widehat{d}_{DCV}=d_0)=1.
\end{equation*}
\end{thm}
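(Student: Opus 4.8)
The plan is to prove model-selection consistency in the standard way: show that $P(DCV(d) > DCV(d_0)) \to 1$ for every $d \in \{0,\dots,d_{\max}\}$ with $d \ne d_0$, handling the underfitting regime $d < d_0$ and the overfitting regime $d > d_0$ separately. The starting point is to rewrite $V_i^{k,d}$ through the in-sample residual vector $\bm r_i^{k,d} = (\bm I_p - \Pro_{\widehat{\bm L}^{k,d}})\bm x_i$, so that $V_i^{k,d} = p^{-1}\sum_s (1-w_s^{k,d})^{-2}(r_{is}^{k,d})^2$. Expanding $(1-w_s^{k,d})^{-2} = 1 + 2 w_s^{k,d} + O((w_s^{k,d})^2)$ and using $\sum_s w_s^{k,d} = d$ together with a delocalization bound $w_s^{k,d} = O(d/p)$ (to be derived from Assumptions \ref{ass1}--\ref{ass2} and the first-stage eigenstructure), I would decompose
\begin{equation*}
DCV(d) = \mathrm{RSS}(d) + \mathrm{Pen}(d) + o_P(p^{-1}),
\end{equation*}
where $\mathrm{RSS}(d) = (np)^{-1}\sum_k\sum_{i\in M_k}\|\bm r_i^{k,d}\|^2$ is the cross-validated fit and $\mathrm{Pen}(d) = 2(np)^{-1}\sum_k\sum_{i\in M_k}\sum_s w_s^{k,d}(r_{is}^{k,d})^2$ is the leverage penalty supplied by the leave-variable-out correction.

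The underfitting regime is the easier half. Using Assumption \ref{ass4} to show that the fold-deleted loadings $\widehat{\bm L}^{k,d}$ are uniformly close to the full-sample estimates (the perturbation from deleting $n_k = o(n^{1/3})$ rows is negligible) and then adapting the PCA-consistency arguments of \citet{BaiNg2002}, I would establish $\mathrm{RSS}(d) \approx (np)^{-1}\sum_{j>d}\lambda_j$, where $\lambda_j$ are the ordered eigenvalues of the (rescaled) sample second-moment matrix. Assumptions \ref{ass1} and \ref{ass2} force the first $d_0$ eigenvalues to be of order $np$ (signal), while the remaining ones are of order $n$ (noise, controlled through Assumption \ref{ass3}). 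Hence for $d < d_0$ the difference $\mathrm{RSS}(d) - \mathrm{RSS}(d_0)$ absorbs at least the smallest signal eigenvalue $\lambda_{d_0}/(np)$, which is bounded below by a positive constant; since $\mathrm{Pen}(d)-\mathrm{Pen}(d_0)=O(1/p)=o(1)$, the fit gap dominates and $DCV(d)-DCV(d_0)$ stays bounded away from $0$ with probability tending to one.

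The overfitting regime $d > d_0$ is the crux, and it is where Assumptions \ref{ass5a} and \ref{ass5b} enter. Here the signal is already captured at $d_0$, so each extra direction removes only idiosyncratic variation: the decrement $\mathrm{RSS}(d_0)-\mathrm{RSS}(d)$ is of order $1/p$ and bounded above, up to negligible terms, by $(d-d_0)\lambda_1(\bm E^\top\bm E/n)/p$, whereas the penalty increment $\mathrm{Pen}(d)-\mathrm{Pen}(d_0)$ concentrates around $2\sigma^2(d-d_0)/p$. Both terms share the same order $1/p$, so consistency rests on a strict inequality between their constants. Under Assumption \ref{ass5a} I would use the conditional homoscedasticity in part 1 to show that the leverage-weighted residuals average to $\sigma^2$, after which part 2, $\lambda_1(\bm E^\top\bm E/n) < 2\sigma^2$, delivers exactly the strict inequality $\mathrm{Pen}(d) > \mathrm{RSS}(d_0)-\mathrm{RSS}(d)$ that is needed. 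Under Assumption \ref{ass5b} the same comparison must instead be carried through random-matrix theory: the limiting spectral distribution $H$ of $\boldsymbol\Sigma_p$ (part 3) pins down the bulk edge of $\bm E^\top\bm E/n$ and the limit of the leverage penalty, and the positive definiteness of $2\,\mathrm{diag}(\boldsymbol\Sigma_p)-\boldsymbol\Sigma_p$ (part 4) is precisely the condition making the penalty strictly exceed the fit improvement in the limit.

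The hard part will be this overfitting analysis, because the fit improvement and the penalty are of the same $1/p$ order and the conclusion depends on a sharp comparison of constants rather than on a crude rate argument. This requires, first, a second-order expansion of $V_i^{k,d}$ that controls the $O((w_s^{k,d})^2)$ remainder uniformly in $s$ and $d$; second, concentration of the leverage-weighted residual sum $\sum_s w_s^{k,d}(r_{is}^{k,d})^2$ around $\sigma^2 d$, which under Assumption \ref{ass5b} demands the Marchenko--Pastur deterministic-equivalent machinery to separate the diagonal of $\boldsymbol\Sigma_p$ (driving the penalty) from its off-diagonal mass (driving the bulk edge); and third, uniform control across the $K$ folds so that deleting $M_k$ does not perturb these constants. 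The underfitting bound and the routine union bound over the fixed finite set $\{0,\dots,d_{\max}\}$ are comparatively standard once the eigengap from Assumptions \ref{ass1}--\ref{ass3} is in hand.
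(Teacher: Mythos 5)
Your plan reproduces the paper's argument in all essentials: the same expansion $(1-w_s^{k,d})^{-2}=1+2w_s^{k,d}+O((w_s^{k,d})^2)$ turning the Shao-type CV into ``fit plus leverage penalty,'' the same underfitting bound via the eigengap forced by Assumptions \ref{ass1}--\ref{ass2} (the paper phrases it through projections onto ${\bm L}^0{\bm H}^{k,d}$, which is equivalent), and the same overfitting comparison of constants at order $1/p$ --- $\lambda_1({\bm E}^\top{\bm E}/n)<2\sigma^2$ under Assumption \ref{ass5a}, and the random-matrix limit together with positive definiteness of $2\,\mathrm{diag}(\boldsymbol\Sigma_p)-\boldsymbol\Sigma_p$ under Assumption \ref{ass5b}. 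This is essentially the paper's own proof (its Lemmas 3--4, 7, 9 and 14), and you have correctly located the crux in the overfitting constants.
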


{\color{black}
\begin{rem}  Recent empirical findings suggest that the factor structure of economic data may change with  sample size and dimensionality of variable \citep{Onatski2010,  JLN2015,LLS2017}. In fact, Theorem \ref{thm1} can be easily extended to the case that $ d_0 $ varies slowly with both $ n $ and $ p $. {\color{red}DCV} is thus applicable to practical data with changing structure of factorization.
\end{rem}}

\section{Simulation Study}\label{sec:simulation}
In this section, we conduct simulation studies to compare the finite sample performances of the DCV with the other methods, including
the {panel} criterion $IC_1$ \citep{BaiNg2002} and $Ladle$ \citep{LL2016}.   
As shown in \cite{LL2016},  $Ladle$ estimator outperforms all the other eigen-structrue based methods in determining the number of factors, and is thus used  as a representative of those methods.
Similarly, for the IC-based methods, \cite{BaiNg2002} showed that their criterion outperforms other information criteria such as $AIC$ and $BIC$.
%
%
We do not consider other cross-validation methods as they are either inconsistent or suffer from excessive computational burden.
For our $DCV$, we consider both leave-one-out CV ($DCV_1$) and 10-fold CV ($DCV_{10}$) in the first stage.
Matlab codes for the calculations are available at \href{https://github.com/XianliZeng/Double-Cross-Validation}{https://github.com/XianliZeng/Double-Cross-Validation}.

\begin{figure*}[ht!]
\vspace{-0.3cm}
\centerline{\includegraphics[width=1.0\linewidth]{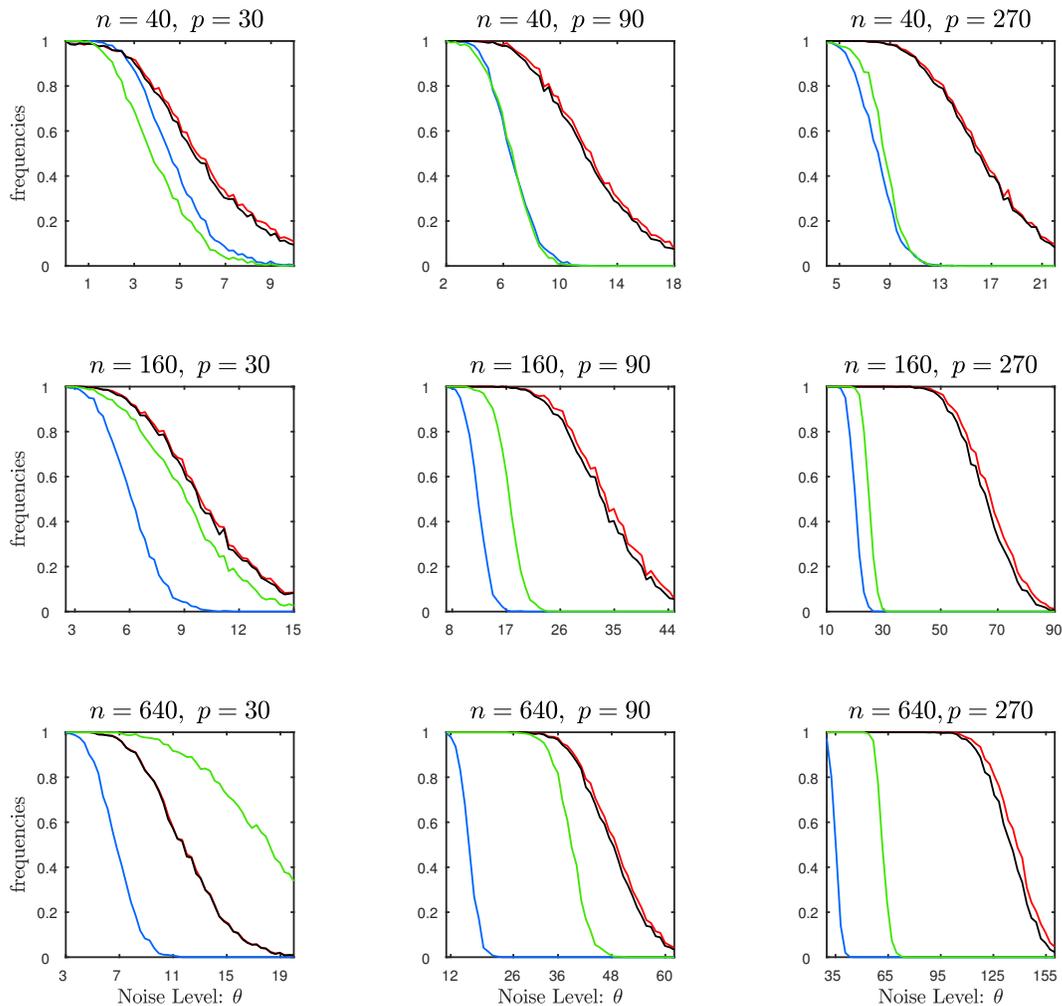} } \vspace{-0.5cm}
 \caption{\footnotesize The relative frequencies of correctly selecting the number of factors for model~\eqref{SM1} with independent Gaussian idiosyncratic errors (E1). In each panel, we use red line for $DCV$, black line for $DCV_{10}$, {blue} line for $IC_1$ and  green line for \textit{Ladle}.\vspace{-0.5cm}
}
\label{fig1}
\end{figure*}
\begin{figure*}[ht!]
\vspace{-0.3cm}
\centerline{\includegraphics[width=1.0\linewidth]{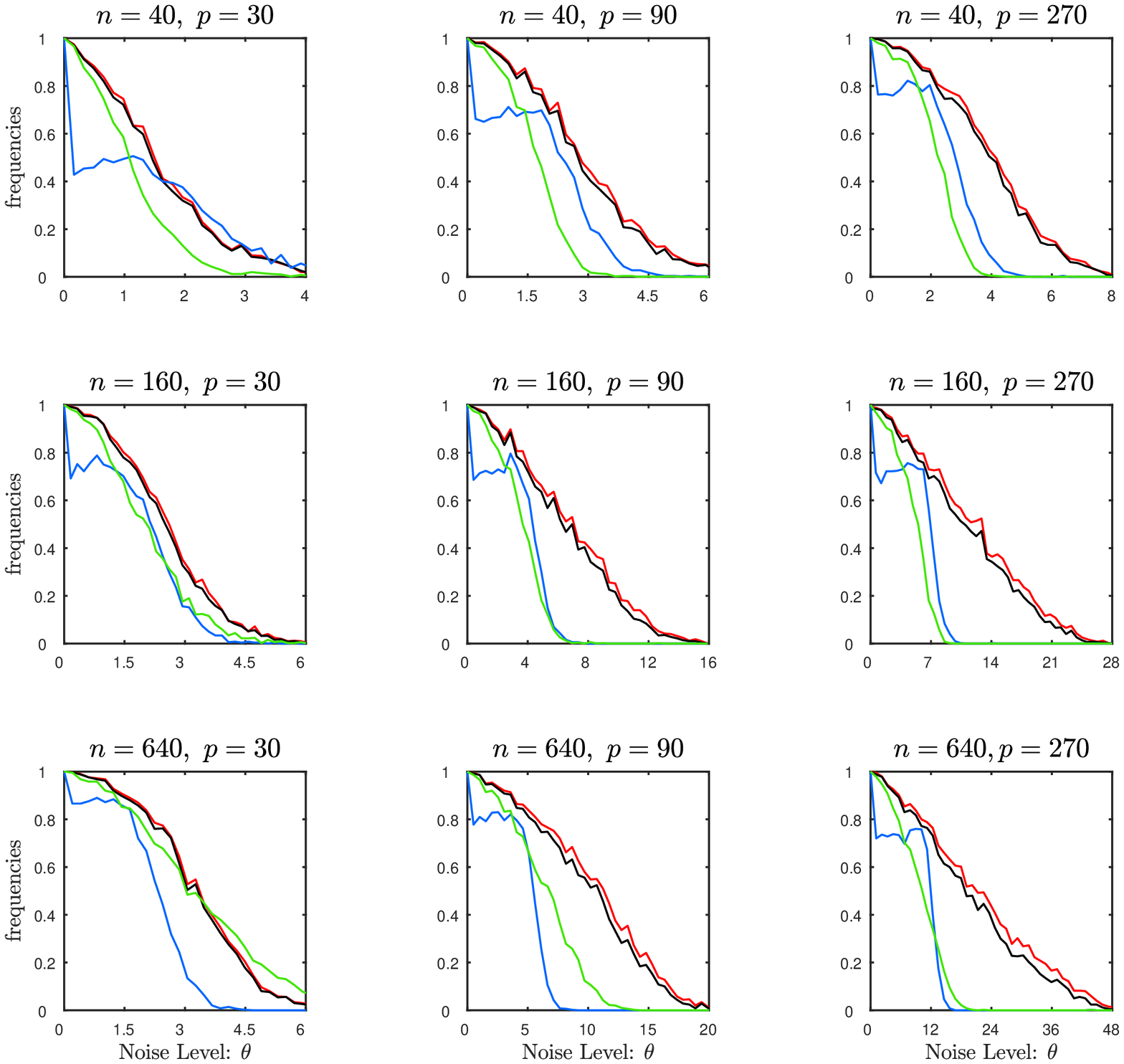} } \vspace{-0.5cm}
 \caption{\footnotesize The relative frequencies of correctly selecting the number of factors for  model \eqref{SM1} with independent $t$-distributed idiosyncratic errors (E2). In each panel, we use red line for $DCV$, black line for $DCV_{10}$, blue line for $IC_1$ and  green line for \textit{Ladle}.\vspace{-0.5cm}
}
\label{fig2}
\end{figure*}
\begin{figure*}[ht!]
\vspace{-0.3cm}
\centerline{\includegraphics[width=1.0\linewidth]{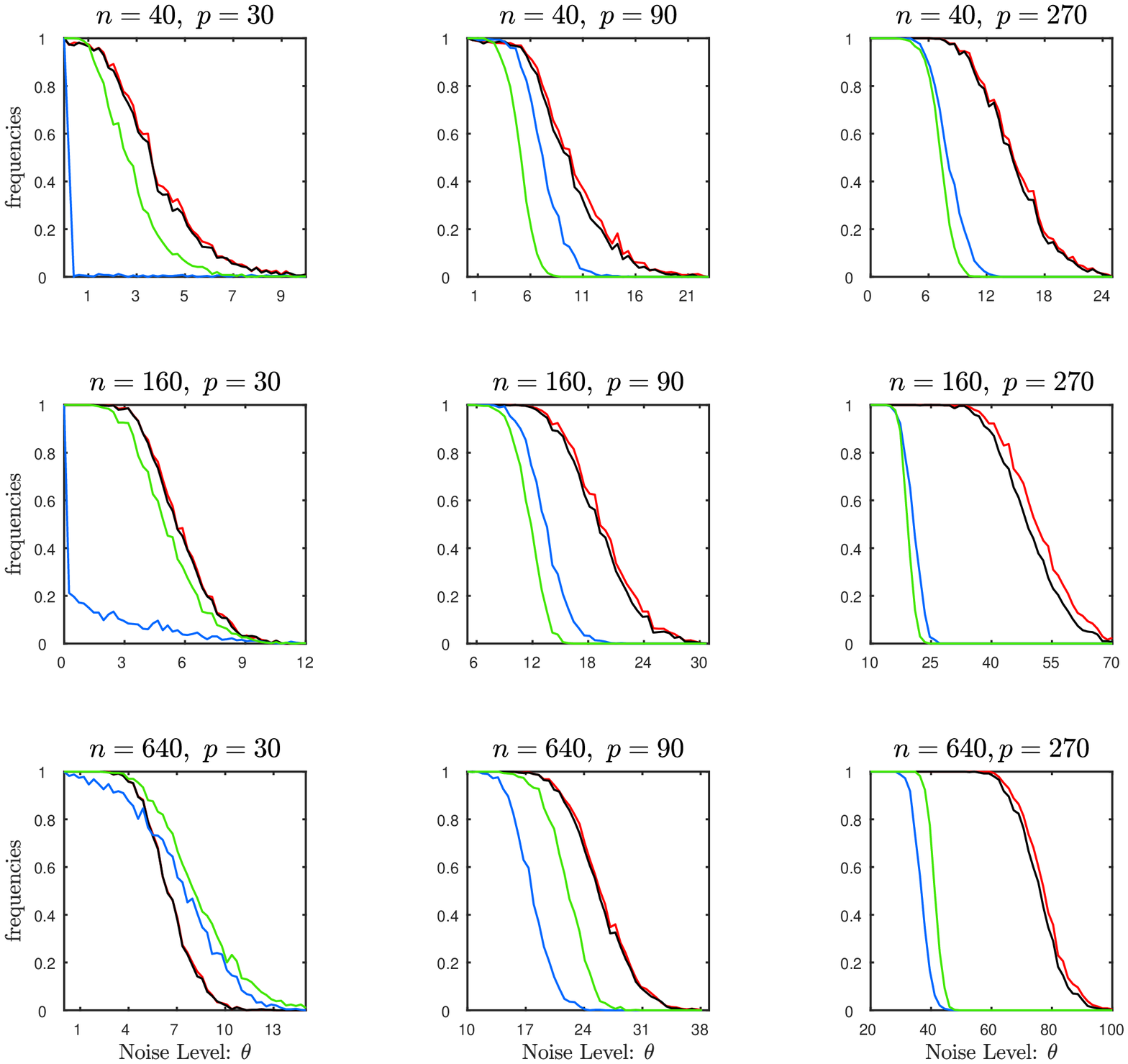} } \vspace{-0.5cm}
 \caption{\footnotesize The relative frequencies of correctly selecting the number of factors for   model~\eqref{SM1} with heteroskedastic idiosyncratic errors (E3). In each panel, we use red line for $DCV$, black line for $DCV_{10}$, blue line for $IC_1$ and  green line for \textit{Ladle}.\vspace{-0.5cm}
}
\label{fig3}
\end{figure*}

\begin{figure*}[ht!]
\vspace{-0.3cm}
\centerline{\includegraphics[width=1.0\linewidth]{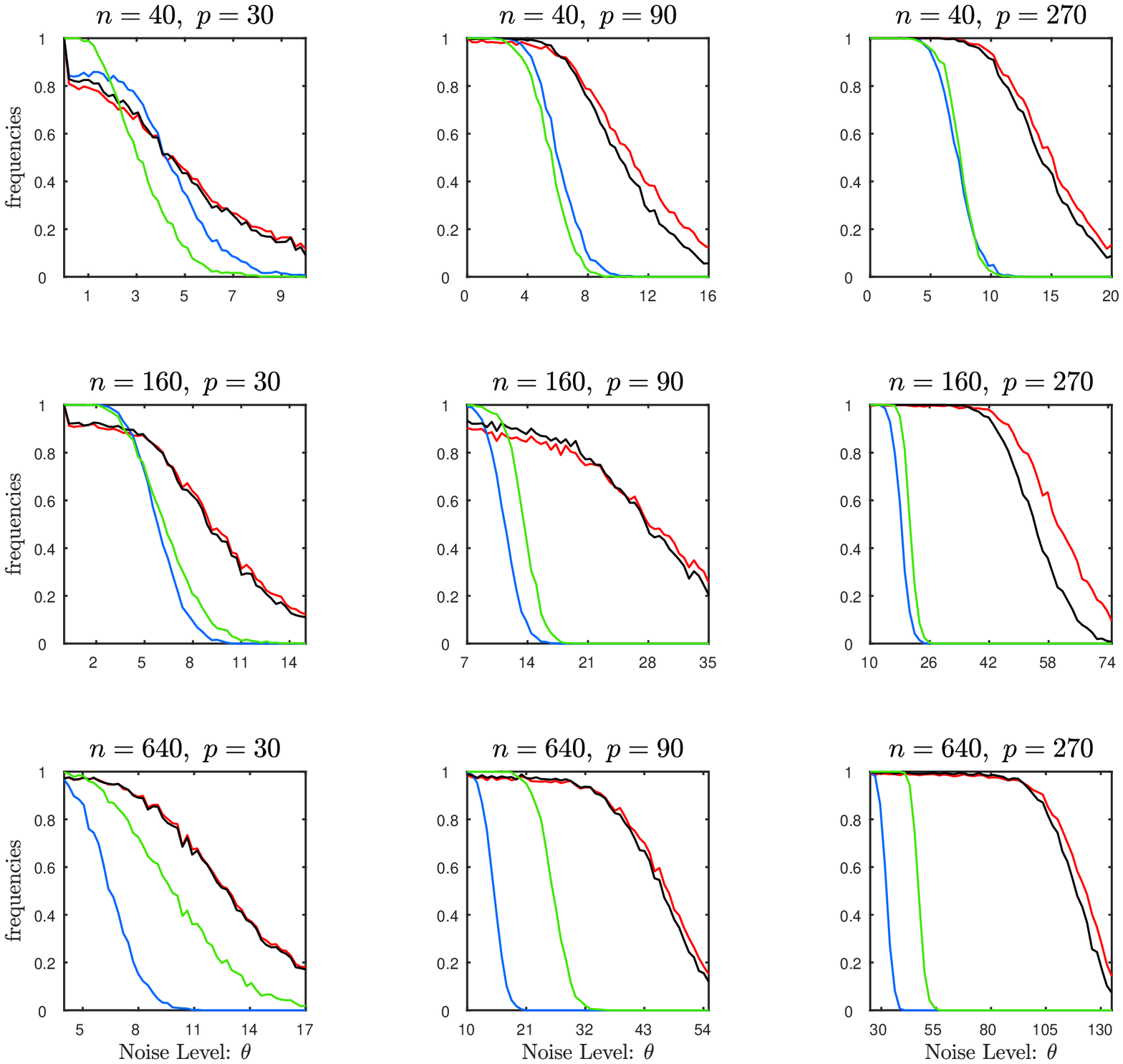} } \vspace{-0.5cm}
 \caption{\footnotesize The relative frequencies of correctly selecting the number of factors for  model~\eqref{SM1} with serial correlated  idiosyncratic errors (E4). In each panel, we use red line for $DCV$, black line for $DCV_{10}$, blue line for $IC_1$ and  green line for \textit{Ladle}.\vspace{-0.5cm}
}
\label{fig4}
\end{figure*}

\begin{figure*}[ht!]
\vspace{-0.3cm}
\centerline{\includegraphics[width=1.0\linewidth]{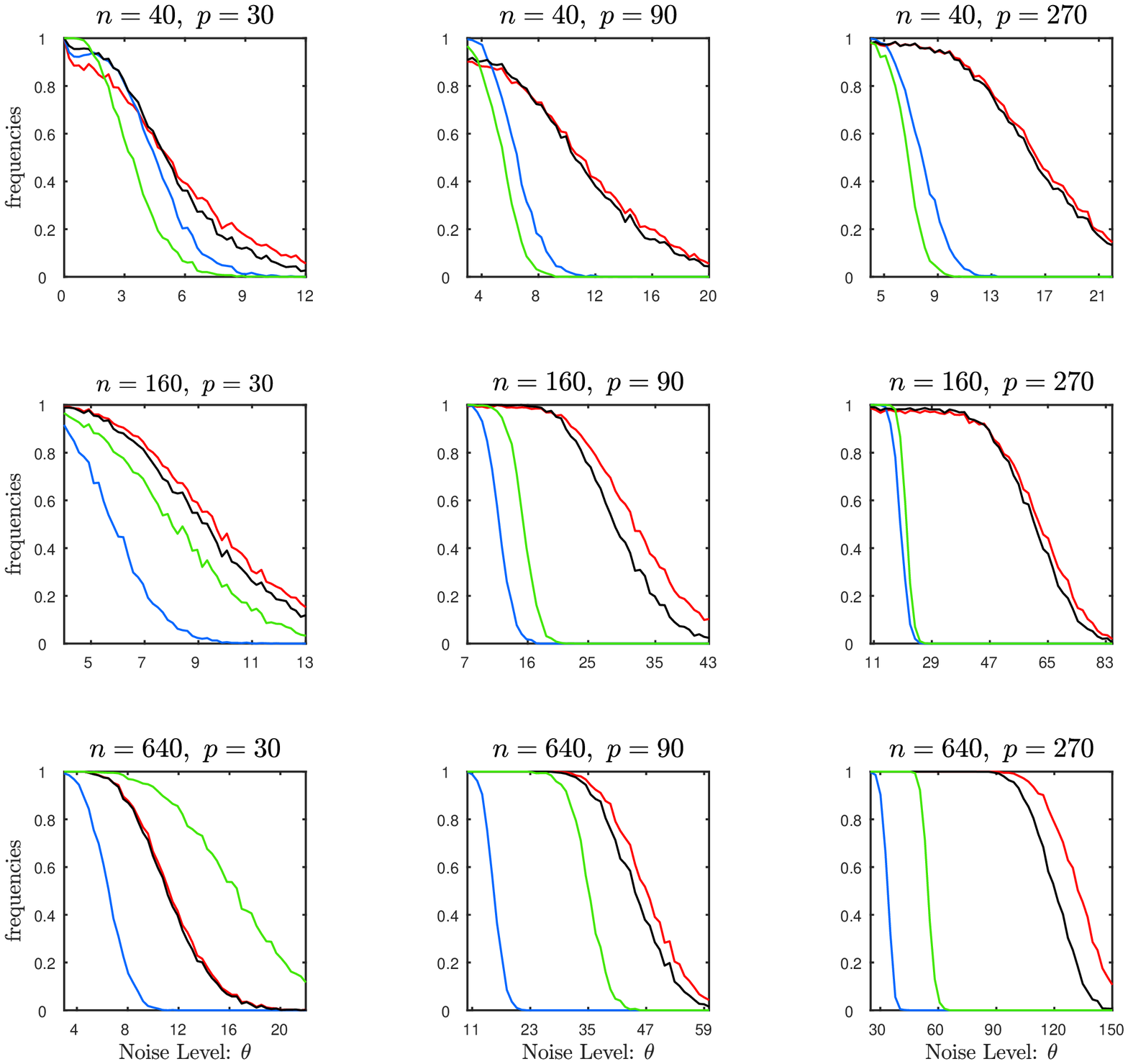} } \vspace{-0.5cm}
 \caption{\footnotesize The relative frequencies of correctly selecting the number of factors for  model~\eqref{SM1} with cross-sectional correlated idiosyncratic errors (E5). In each panel, we use red line for $DCV$, black line for $DCV_{10}$, blue line for $IC_1$ and  green line for \textit{Ladle}.\vspace{-0.5cm}
}
\label{fig5}
\end{figure*}

We simulate data from a factor model with $d_0 = 5$:
\begin{equation}
 x_{is}=\sum_{j=1}^{5}f_{ij}l_{sj}+\sqrt{\theta}e_{is}; \quad f_{ij},l_{sj}\sim N(0,1)  \ \ i=1, ..., n, \ \  s=1,2,...,p.  \label{SM1}
\end{equation}
For the idiosyncratic errors $ e_{is} $, the following settings are considered:
\begin{description}
\item[] (E1) independent Gaussian:  $ e_{is}\sim N(0,1)$;
\item[] (E2) independent t-distributed:  $ e_{is}\sim t_3 $;
\item[] (E3) heteroskedastic:  $ e_{is} \sim N(0,\delta_s), $ $ \delta_s = 1 $ if $s$ is odd,  and 2 if $s$ is even;
\item[] (E4) serial correlated: $ e_{is}=0.3 e_{i,s-1}+\nu_{is} $,  $\nu_{is}\sim N(0,1)$;
\item[] (E5) cross-sectional correlated: $  e_{is}=\sum\limits_{j=-10}^{10} 0.15^j \nu_{i-j,s}$,  $\nu_{is}\sim N(0,1)$.
\end{description}

In all simulations, we select $d$ from $\{1,2,...,d_{\max}\}$ where $d_{\max}= 8 $ as in \cite{BaiNg2002}.
We consider the combinations of { $n$ and $p$} with $n=\{40,160,640\}$ and $p=\{30,90, 270\}$.
To see the effect of  the signal-to-noise ratio on the methods, $\theta$ varies in a range of intervals depending on different settings of idiosyncratic errors and different $ n $ and $p$.
These ranges are made clear in each panel of the figures below. 
All comparisons  were evaluated based on 1000 replications.
Figures~\ref{fig1} to~\ref{fig5} summarize the simulation results, where the curves represent the frequencies of selecting the right number of factors at different noise level $\theta$.

Generally, the simulation results suggest that the performance of all the methods becomes better when $n$ and $p$ get larger, which lends support to the asymptotic consistency of the methods.
$Ladle$ usually shows better performance than $IC_1$, possibly due to its utilization of  both eigenvalues and eigenvectors.
Notably, $DCV_1$ and $DCV_{10}$ stand out and perform the best in most scenarios.
They possess a much higher frequency of correct selection when $\theta$ is large, implying their ability to select the right number of factors even when the {factors' contribution to the variation of the variables,  i.e., signal-to-noise ratio,} is low, which is particularly the situation in economic data.
Moreover, $DCV_1$ and $DCV_{10}$ present almost
identical performance for all the models, implying that the performance of the DCV is not sensitive to the choice of~$K$.

For different settings of the idiosyncratic errors, we have some additional observations.
Figure \ref{fig2} shows that $IC_1 $ is not so robust to the data with heavy-tail especially when the size of the data is small.
Figure \ref{fig3} implies that $IC_1 $ is also affected adversely by the heteroskedastic errors.
Note that both heavy tail and  heteroskedasticity are stylized facts in financial data.
In contrast, for these two types of data,  $DCV_1$ and $DCV_{10}$ still demonstrate very stable and efficient ability in selecting the number of factors.
In Figures \ref{fig4} and \ref{fig5}, where the idiosyncratic errors have serial or cross-sectional correlation,  DCV has inferior performance than $IC_1$ and $ Ladle $ when the variance of idiosyncratic components is small and  both $ n $ and $ p $ are small.
However, DCVs quickly gains advantage when either $ n $ or $ p $ increases.


\section{An empirical application: Fama-French  three-factor model}\label{sec:application}
\begin{figure*}[p]
\vspace{-0.5cm}
\centerline{Validation of three-factor model  based on value weighted return of 25 portfolios}
\centerline{\includegraphics[width=1\linewidth, height=0.56\linewidth]{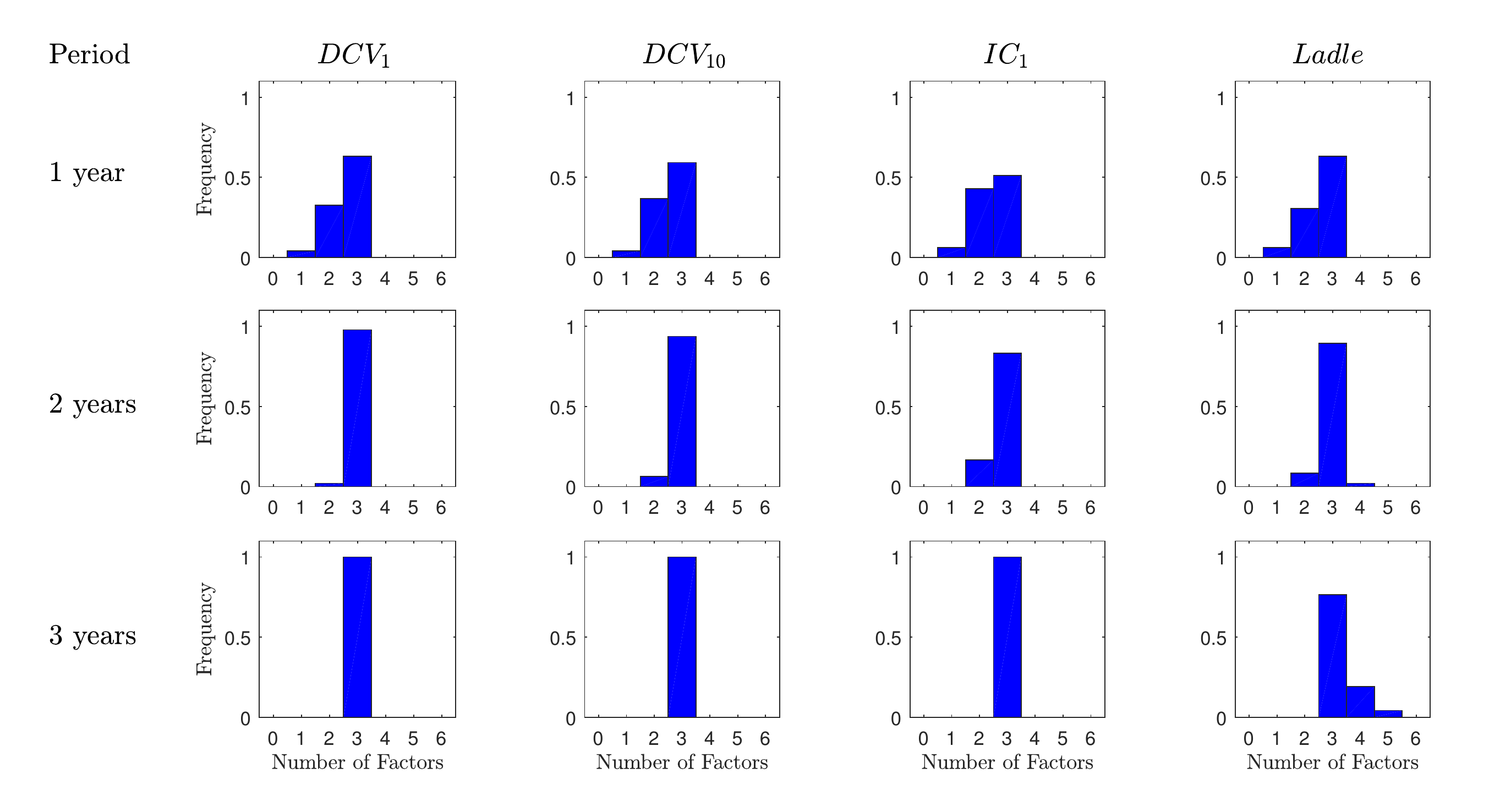}}

\centerline{Validation of three-factor model  based on value weighted return of 100 portfolios}
\centerline{\includegraphics[width=1\linewidth, height=0.56\linewidth]{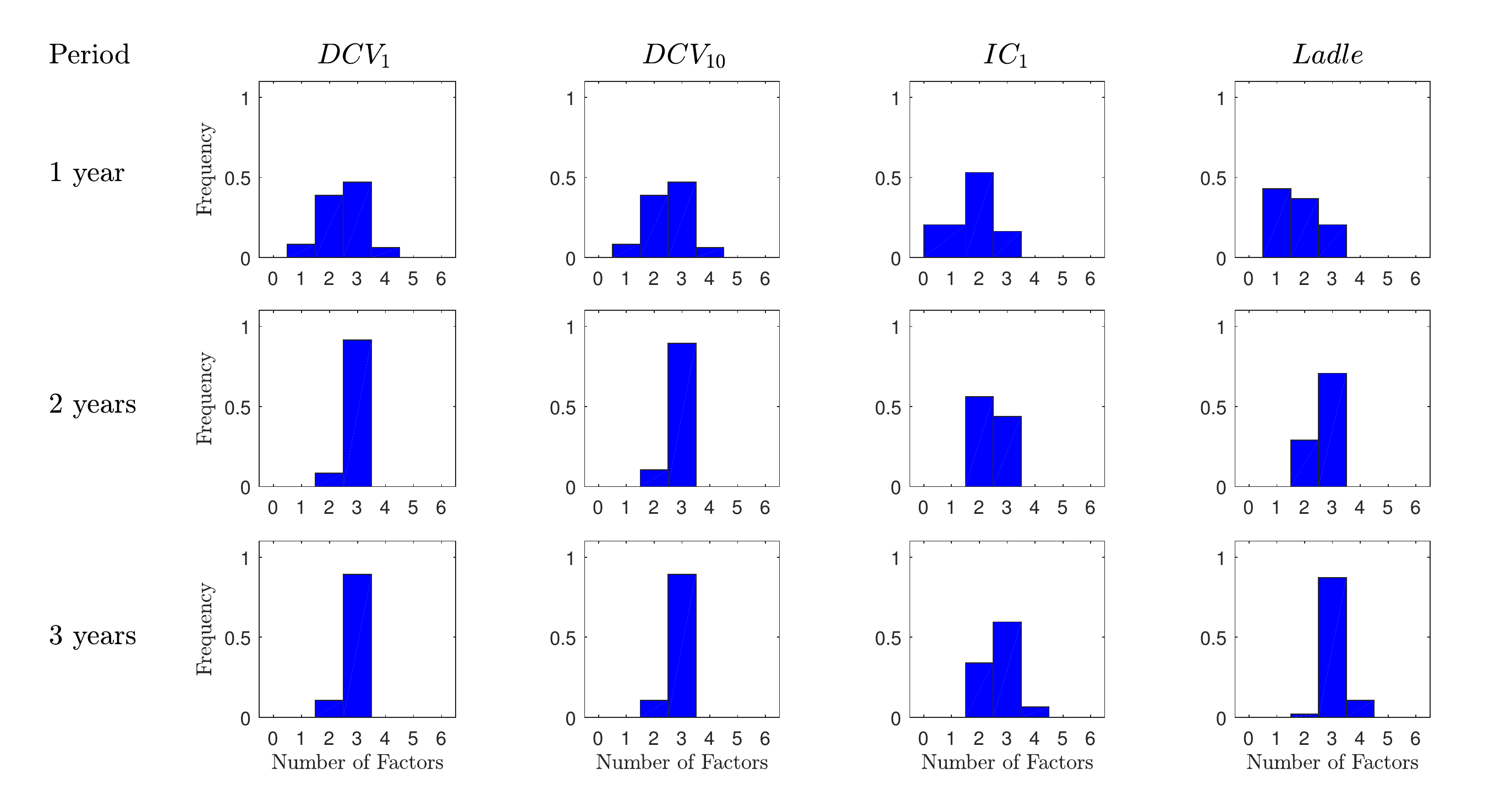}} \vspace{-0.5cm}
 \caption{\footnotesize The  frequencies of selected numbers of factors for the value weighted returns of 25 and 100 portfolios. Panels in each column represents one method; panels in the same rows are based on the same data in fixed period of years.
}
\label{fig6}
\end{figure*}

In asset pricing and portfolio management, the so-called three-factor model of \cite{FF1993} is widely applied to describe the stock returns.
It shows that the excess return of a stock or a portfolio ($R_{it}-{r_{ft}}$) can be satisfactorily explained by three common factors:  (1) excess return of the market portfolio ($R_{Mt}-r_{ft}$), (2) the outperformance of small-cap companies versus big-cap companies ($SMB$), and (3) the outperformance of companies with  high book to market ratio versus those with small book to market ratio ($HML$), i.e.,
\begin{equation*}
R_{it}-r_{ft}=\beta_1(R_{Mt}-r_{ft})+\beta_2SMB_t+\beta_{3}HML_t+\epsilon_{it},
\end{equation*}
where $ r_{ft} $ is the risk-free return.
Next,  we use the data provided by Professor  Kenneth R. French\footnote{http://mba.tuck.dartmouth.edu/pages/faculty/ken.french/data\_library.html} to mutually validate our methods and the three-factor model.
%
Stocks are divided into 6, 25 or 100 groups according to company's capitalization (market equity) and  Book-to-Market ratio, and portfolios are constructed with  value weighted daily returns or equally weighted daily returns in each group.
As those portfolios were compiled based on company's data at the end of June of every year, we set the first of July of every year as a starting date of a period.
It is understandable that the  portfolios change from year to year, thus we only consider periods of 1 to 3 years in our calculation.

Figure \ref{fig6} presents the frequencies of the numbers of factors detected by the $DCV_1$, $DCV_{10}$, $IC_1$ and $Ladle$ in two of the data sets with $d_{\max}$ set at $15$.
In most situations, all the  approaches identify three common factors for the returns of portfolios, which are in line with the \emph{three}-factor model of \cite{FF1993}.
In general,  DCV selects three factors most often amongst the approaches, showing its superior performance in practice.

\section{Concluding Remarks}\label{sec:discussion}
In the literature, both CV and IC based approaches were popularly used for determining the complexity of a model, such as  the number of factors.
As noticed in \cite{BKSK2008}, most CV approaches for factor models are not consistent due to their insufficient validation, while existing remedies for the consistency is very computationally expensive.
In contrast, ICs are consistent and easy to implement, but they depend on  predetermined penalty  functions \citep{BaiNg2002}.
Simulation results not reported here show that the penalty function in \cite{BaiNg2002} could  be modified to be more adaptive to the data.
In addition, ICs are unstable when the signal-to-noise ratio is relatively low \citep{Onatski2010}.

By validating the model twice, the proposed DCV method does not only ensure the consistency under mild conditions, but is also easy to implement. %
Because DCV is based on prediction error, it automatically selects the number of factors that balances the model complexity and stability.
Our simulation studies also demonstrate its superior efficiency over the existing methods most of the time.
The only exception is when there is strong serial dependence in the idiosyncratic errors.
However, this deficiency disappears when the sample size and dimension of the variables increase.
The advantages of our method are more  pronounced for data with relative large variation, heteroscedasticity  or heavy tails in the idiosyncratic errors. The method is thus particularly relevant for financial data.
%


\bibliographystyle{apalike}
\bibliography{paper}


\

\appendix

\section{Proofs for Theorems}\label{sec:supplement}
In this section, we provide detailed proofs for our theoretical results, mainly  based on the advanced techniques derived in  \cite{BaiNg2002} and RMT \citep{BZ2008, LP2011,BPZ2015}. We first introduce several notations that repeatedly appear hereafter. For matrix ${\bm A}=(a_{ij})^{n\times p}$, ${\bm a}_i = (a_{i1}, a_{i2}, \cdots , a_{ip})^\top $ is the transpose of the $i$-th row,  and ${\bm A}_{-M_k}$ is the sub-matrix of ${\bm A}$ with rows in $M_i$ being removed,  and $\lambda_{t}({\bm A})$ is the $t$-th largest eigenvalue of $\bm A$, and $\|\bm A\|$ is the  Frobenius norm of $\bm A$.
 Let  $\widehat{\lambda}_t, t=1,2,...,p$ be the eigenvalues of ${\bm X}^\top{\bm X}$  with decreasing order  and $\widehat{\boldsymbol\phi}_{t}, t=1,2,...,p$ be the corresponding eigenvectors. Let $({\lambda}_t,\boldsymbol\phi_t,t=1,...,p)$, $(\widehat{\lambda}^k_t,\widehat{\boldsymbol\phi}^k_t,t=1,...,p)$ and $({\lambda}^k_t,\boldsymbol\phi^k_t,t=1,...,p)$ be corresponding quantities for  ${\bm L}^0{\bm F}^{0\top}{\bm F}^0{\bm L}^{0^\top}$, ${\bm X}^\top_{-M_k}{\bm X}_{-M_k}$ and
${\bm L}^0{\bm F}^{0\top}_{-M_k}{\bm F}^0_{-M_k}{\bm L}^{0^\top}$, respectively. We use $M_{np}$ to denote $\max(n,p)$ and $m_{np}$ to denote $\min(n,p)$. In addition, we use  $C$and $C'$ to denote  constants that may vary in different places throughout the paper.

To make the proofs easy to read, we give the proof to our main result in the first subsection, and put supporting facts and their proofs in the second subsection of this Appendix.

\subsection{Proof of Theorem \ref{thm1}}
Theorem \ref{thm1} follows if we can show that
\begin{equation*}
\lim_{n,p\to\infty}\text{Prob}(DCV({d})<DCV(d_0))=0 \text{ for } d\neq d_0 \text{ and } 1\leq d\leq d_{\max}.
\end{equation*}
Let $w^{k,d}_s$ be  the $s$th diagonal element of $\Pro_{\widehat{\bm L}^{k,d}}=\widehat{\bm L}^{k,d}(\widehat{\bm L}^{k,d\top}\widehat{\bm L}^{k,d})^{-1}\widehat{\bm L}^{k,d\top}$, $W_i(d,{\bm L})=\frac{1}{p}{\bm x}_i^\top({\bm I}_d-\Pro_{{\bm L}}){\bm x}_i$ and ${\bm H}^{k,d}=\frac{1}{(n-n_k)p}{\bm F}^{0\top}_{-M_k}{\bm X}^\top_{-M_k}\widetilde{\bm L}^{k,d}$. For $1\leq d< d_0$, according to \cite{Shao1993},
\begin{eqnarray*}
DCV(d)&=&\frac{1}{np}\sum\limits_{k=1}^K\sum\limits_{i\in M_k}\sum\limits_{s=1}^p(1-w^{k,d}_s)^{-2}(x_{is}-\widehat{\bm f}_i^{k,d\top}\widehat{\bm l}^{k,d}_s)^2\\
&=&\frac{1}{np}\sum\limits_{k=1}^K\sum\limits_{i\in M_k}\sum\limits_{s=1}^p(1+2w^{k,d}_s+O({w^{k,d}_s}^2))(x_{is}-\widehat{\bm f}_i^{k,d\top}\widehat{\bm l}^{k,d}_s)^2\\&=&\frac{1}{np}\sum\limits_{k=1}^K\sum\limits_{i\in M_k}{\bm x}_i^\top({\bm I}_d-\Pro_{\widehat{\bm L}^{k,d}}){\bm x}_i+o_p(1)\\
&=&\frac{1}{n}\sum\limits_{k=1}^K\sum\limits_{i\in M_k}W_i(d,\widehat{\bm L}^{k,d})+o_p(1).
\end{eqnarray*}
Then,
\begin{eqnarray*}
DCV(d)-DCV(d_0)&=&\frac{1}{n}\sum\limits_{k=1}^K\sum\limits_{i\in M_k}W_i(d,\widehat{\bm L}^{k,d})-\frac{1}{n}\sum\limits_{k=1}^K\sum\limits_{i\in M_k}W_i(d_0,\widehat{\bm L}^{k,d_0})+o_p(1)\\
&=&\frac{1}{n}\sum\limits_{k=1}^K\sum\limits_{i\in M_k}W_i(d,\widehat{\bm L}^{k,d})-\frac{1}{n}\sum\limits_{k=1}^K\sum\limits_{i\in M_k}W_i(d,{\bm L}^0{\bm H}^{k,d})\\
&&+\frac{1}{n}\sum\limits_{k=1}^K\sum\limits_{i\in M_k}W_i(d,{\bm L}^0{\bm H}^{k,d})-\frac{1}{n}\sum\limits_{k=1}^K\sum\limits_{i\in M_k}W_i(d_0,{\bm L}^0)\\
&&+\frac{1}{n}\sum\limits_{k=1}^K\sum\limits_{i\in M_k}W_i(d_0,{\bm L}^0)-\frac{1}{n}\sum\limits_{k=1}^K\sum\limits_{i\in M_k}W_i(d_0,{\bm L}^0{\bm H}^{k,d_0})\\
&&+\frac{1}{n}\sum\limits_{k=1}^K\sum\limits_{i\in M_k}W_i(d_0,{\bm L}^0{\bm H}^{k,d_0})-\frac{1}{n}\sum\limits_{k=1}^K\sum\limits_{i\in M_k}W_i(d_0,\widehat{\bm L}^{k,d_0})+o_p(1)\\
&>&c_d+o_p(1).
\end{eqnarray*}
Here, the last inequality follows from Lemma \ref{lem3}, Lemma \ref{lem4} and the fact that $\Pro_{{\bm L}^0}=\Pro_{{\bm L}^0{\bm H}^{k,d_0}}$. Consequently,
\begin{equation*}
\lim_{n,p\to\infty}\text{Prob}(DCV({d})<DCV(d_0))=0.
\end{equation*}
Let $\mathcal{Q}_{\widehat{\bm L}^{k,d}}=\text{diag}(w^{k,d}_1,w^{k,d}_2,...,w^{k,d}_p)$ and $\text{Cov}({\bm e}_i)=\boldsymbol\Sigma_{p,i}$. For $d_0<d\leq d_{\max}$,  according to Lemma \ref{lem7},
\begin{eqnarray*}
&&DCV(d)\\
&=&\frac{1}{np}\sum\limits_{k=1}^K\sum\limits_{i\in M_k}\sum\limits_{s=1}^p(1-w^{k,d}_{s})^{-2}(x_{is}-\widehat{\bm f}_i^{k,d\top}\widehat{\bm l}^{k,d}_s)^2\\
&=&\frac{1}{np}\sum\limits_{k=1}^K\sum\limits_{i\in M_k}\sum\limits_{s=1}^p(1+2w^{k,d}_{s}+O({w^{k,d}_{s}}^2))(x_{is}-\widehat{\bm f}_i^{k,d\top}\widehat{\bm l}^{k,d}_s)^2\\
&=&\frac{1}{np}\sum\limits_{k=1}^K\sum\limits_{i\in M_k}{\bm x}_i^\top({\bm I}_d-\Pro_{\widehat{\bm L}^{k,d}}){\bm x}_i+\frac{2}{np}\sum\limits_{k=1}^K\sum\limits_{i\in M_k}\sum\limits_{s=1}^pw^{k,d}_{s}(x_{is}-\widehat{\bm f}_i^{k,d\top}\widehat{\bm l}^{k,d}_s)^2+o_p(\frac{1}{p})\\
&=&\frac{1}{np}\sum\limits_{k=1}^K\sum\limits_{i\in M_k}{\bm x}_i^\top({\bm I}_d-\Pro_{\widehat{\bm L}^{k,d}}){\bm x}_i+\frac{2}{np}\sum\limits_{k=1}^K\sum\limits_{i\in M_k} \text{Tr}(\mathcal{Q}_{\widehat{\bm L}^{k,d}}\boldsymbol\Sigma_{p,i})+o_p(\frac{1}{p}).\\
\end{eqnarray*}
Therefore,
\begin{eqnarray*}
&&DCV(d)-DCV(d_0)\\
&=&\frac{1}{np} \sum\limits_{k=1}^K\sum\limits_{i\in M_k}{\bm x}_i^\top(\Pro_{\widehat{\bm L}^{k,d_0}}-\Pro_{\widehat{\bm L}^{k,d}}){\bm x}_i+\frac{2}{np}\sum\limits_{k=1}^K\sum\limits_{i\in M_k} \text{Tr}((\mathcal{Q}_{\widehat{\bm L}^{k,d}}-\mathcal{Q}_{\widehat{\bm L}^{k,d_0}})\boldsymbol\Sigma_{p,i})+o_p(\frac{1}{p})\\
&=&-\frac{1}{np}\sum_{l=d_0+1}^d \sum\limits_{k=1}^K\sum\limits_{i\in M_k}\|{\bm x}_i^\top\widehat{\boldsymbol\phi}^k_l\|^2+\frac{2}{np}\sum\limits_{k=1}^K\sum\limits_{i\in M_k} \text{Tr}((\mathcal{Q}_{\widehat{\bm L}^{k,d}}-\mathcal{Q}_{\widehat{\bm L}^{k,d_0}})\boldsymbol\Sigma_{p,i})+o_p(\frac{1}{p}).
\end{eqnarray*}
According to Lemma \ref{lem14},
\begin{equation*}
\frac{1}{n}\sum_{l=d_0+1}^d\sum\limits_{k=1}^K\sum\limits_{i\in M_k}\|{\bm x}_i^\top\widehat{\boldsymbol\phi}^k_l\|^2< \frac{2}{n}\sum\limits_{k=1}^K\sum\limits_{i\in M_k} \text{Tr}((\mathcal{Q}_{\widehat{\bm L}^{k,d}}-\mathcal{Q}_{\widehat{\bm L}^{k,d_0}})\boldsymbol\Sigma_{p,i})+o_p(1).
\end{equation*}
It follows that,
\begin{equation*}
\lim_{n,p\to\infty}\text{Prob}(DCV(d)<DCV(d_0))=0.
\end{equation*}
We thus complete the proof of Theorem  \ref{thm1}.

\

\subsection{Supporting Lemmas and their Proofs}\label{sec73}

\begin{lem}\label{lem1}
Suppose  Assumptions \ref{ass1}--\ref{ass4} hold. Let $\widehat{\bm l}_{s}^{k,d}$ and ${\bm l}_s^0$ be the transpose of the $s$-th row of $\widehat{\bm L}^{k,d}$ and ${\bm L}^0$ respectovely. For any $k$ and $d$,  there exist a $(d_0\times d)$ matrix ${\bm H}^{k,d}$ with \normalfont{rank}$({\bm H}^{k,d})=\min(d_0,d)$ such that,
\begin{equation*}
\|\widehat{\bm l}_{s}^{k,d}-{\bm H}^{k,d\top}{\bm l}_s^0\|^2=O_p(m_{np}^{-1}),      \ \ \ \ \text{ for each } k \text { and }  s.
\end{equation*}
\end{lem}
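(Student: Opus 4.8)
The plan is to adapt the principal-component consistency argument of \citet{BaiNg2002} to the leave-$M_k$-out, rescaled loadings. Write $N=n-n_k$, and for fixed $k,d$ let ${\bm V}^{k,d}$ be the diagonal matrix of the $d$ largest eigenvalues of $\frac1{Np}{\bm X}_{-M_k}^\top{\bm X}_{-M_k}$, so that the scaled eigenvectors obey the eigen-equation $\frac1{Np}{\bm X}_{-M_k}^\top{\bm X}_{-M_k}\widetilde{\bm L}^{k,d}=\widetilde{\bm L}^{k,d}{\bm V}^{k,d}$. Since $\widehat{\bm L}^{k,d}$ is obtained from $\widetilde{\bm L}^{k,d}$ by a column rescaling with the eigenvalue matrix, and since Assumptions~\ref{ass1}--\ref{ass2} make the factor-direction eigenvalues of ${\bm X}_{-M_k}^\top{\bm X}_{-M_k}$ of exact order $Np$ (hence bounded away from $0$ and $\infty$ after the $\frac1{Np}$ normalization), this rescaling is a bounded invertible transformation on the factor directions that I would absorb into the definition of ${\bm H}^{k,d}$. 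It therefore suffices to establish the bound for the eigenvector matrix $\widetilde{\bm L}^{k,d}$.

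I would then substitute ${\bm X}_{-M_k}={\bm F}^0_{-M_k}{\bm L}^{0\top}+{\bm E}_{-M_k}$ into the eigen-equation and take as candidate rotation ${\bm H}^{k,d}=\big(\frac1N{\bm F}^{0\top}_{-M_k}{\bm F}^0_{-M_k}\big)\big(\frac1p{\bm L}^{0\top}\widetilde{\bm L}^{k,d}\big)({\bm V}^{k,d})^{-1}$, which matches the common-component block. The three remaining blocks produce the key identity
\[
\widetilde{\bm L}^{k,d}-{\bm L}^0{\bm H}^{k,d}=\frac1{Np}\big({\bm L}^0{\bm F}^{0\top}_{-M_k}{\bm E}_{-M_k}+{\bm E}^\top_{-M_k}{\bm F}^0_{-M_k}{\bm L}^{0\top}+{\bm E}^\top_{-M_k}{\bm E}_{-M_k}\big)\widetilde{\bm L}^{k,d}({\bm V}^{k,d})^{-1}.
\]
Reading off the $s$-th row gives three bilinear forms in the idiosyncratic errors and the signal. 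The two cross terms are controlled by condition~6 of Assumption~\ref{ass3}, which bounds $\|\frac1{\sqrt n}\sum_i{\bm f}^0_i e_{is}\|$ uniformly in $s$; the purely idiosyncratic term $\frac1{Np}{\bm E}^\top_{-M_k}{\bm E}_{-M_k}\widetilde{\bm L}^{k,d}$ I would split into its mean, bounded by the covariance-summability in condition~3, and its fluctuation, bounded by condition~5 of Assumption~\ref{ass3}. Combining these with $\|\widetilde{\bm L}^{k,d}\|^2=pd$ and $\|({\bm V}^{k,d})^{-1}\|=O_p(1)$ on the factor directions delivers $\|\widetilde{\bm l}^{k,d}_s-{\bm H}^{k,d\top}{\bm l}^0_s\|^2=O_p(m_{np}^{-1})$ for each fixed $s$, which transfers to $\widehat{\bm l}^{k,d}_s$ after reabsorbing the bounded rescaling. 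Uniformity over the $K$ folds follows from Assumption~\ref{ass4}, which forces $N\asymp n$ so that deleting $M_k$ does not disturb the eigen-structure.

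For the rank claim I would note that $\frac1N{\bm F}^{0\top}_{-M_k}{\bm F}^0_{-M_k}$ is $d_0\times d_0$ and, by Assumption~\ref{ass1}, invertible with eigenvalues bounded away from $0$; hence $\mathrm{rank}({\bm H}^{k,d})=\mathrm{rank}({\bm L}^{0\top}\widetilde{\bm L}^{k,d})$, and the separation between the signal and idiosyncratic spectra forces the retained eigenvectors to align with the $d_0$-dimensional column space of ${\bm L}^0$ in exactly $\min(d_0,d)$ directions, giving the asserted rank. The main obstacle is the idiosyncratic quadratic term together with the factor $({\bm V}^{k,d})^{-1}$: controlling the former requires the weak-dependence moment bounds (conditions 3 and 5 of Assumption~\ref{ass3}) to overcome the accumulation of $p$ cross-products, while the latter requires showing that the largest idiosyncratic eigenvalue $\lambda_1({\bm E}^\top_{-M_k}{\bm E}_{-M_k})$ is of order $M_{np}$ and hence negligible relative to the signal eigenvalues of order $Np$. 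This spectral separation, which simultaneously underlies the invertibility of ${\bm V}^{k,d}$ on the factor directions and the pointwise (rather than merely averaged) rate, is the delicate quantitative step and rests on conditions 1--5 of Assumption~\ref{ass3}.
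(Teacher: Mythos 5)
Your handling of the error terms themselves --- splitting $\frac{1}{(n-n_k)p}{\bm E}^\top_{-M_k}{\bm E}_{-M_k}\widetilde{\bm L}^{k,d}$ into a mean part controlled by condition~3 of Assumption~\ref{ass3} and a fluctuation part controlled by condition~5, with the factor--error cross term controlled by condition~6 --- is exactly what the paper does. The gap is in your opening reduction. You declare the passage from $\widetilde{\bm L}^{k,d}$ to $\widehat{\bm L}^{k,d}$ a bounded invertible column rescaling ``on the factor directions'' and conclude that it suffices to prove the bound for the eigenvector matrix $\widetilde{\bm L}^{k,d}$. This fails precisely in the case the lemma is needed for in the overfitting analysis, namely $d>d_0$. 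There the columns $d_0+1,\dots,d$ of $\widetilde{\bm L}^{k,d}$ are noise eigenvectors scaled by $\sqrt p$: the corresponding entries of the row $\widetilde{\bm l}^{k,d}_s$ are of order one (their squared sum over $s$ equals $p$ per column), while ${\bm H}^{k,d\top}{\bm l}^0_s$ has rank at most $d_0$ and is nearly orthogonal to those directions, so $\|\widetilde{\bm l}^{k,d}_s-{\bm H}^{k,d\top}{\bm l}^0_s\|^2$ is \emph{not} $O_p(m_{np}^{-1})$ for any choice of ${\bm H}^{k,d}$. Correspondingly, the diagonal entries of ${\bm V}^{k,d}$ beyond the $d_0$-th are of order $\lambda_1({\bm E}^\top_{-M_k}{\bm E}_{-M_k})/((n-n_k)p)=O(m_{np}^{-1})$, so the $({\bm V}^{k,d})^{-1}$ in your key identity is of order $m_{np}$ on those directions rather than $O_p(1)$. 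The rescaling is therefore not cosmetic: it is the device that damps the spurious columns by exactly the factor that makes the statement true, and its inverse is unbounded where you need it.

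The paper sidesteps the eigen-equation and the $({\bm V}^{k,d})^{-1}$ factor altogether by choosing ${\bm H}^{k,d}=\frac{1}{(n-n_k)p}{\bm F}^{0\top}_{-M_k}{\bm X}_{-M_k}\widetilde{\bm L}^{k,d}$, i.e.\ by letting ${\bm H}^{k,d}$ absorb the entire ${\bm F}^{0\top}_{-M_k}{\bm X}_{-M_k}$ block rather than only the ${\bm F}^{0\top}_{-M_k}{\bm F}^0_{-M_k}$ block. With the rescaled loadings this gives the exact identity
\begin{equation*}
\widehat{\bm L}^{k,d}-{\bm L}^0{\bm H}^{k,d}=\frac{1}{(n-n_k)p}{\bm E}^\top_{-M_k}{\bm E}_{-M_k}\widetilde{\bm L}^{k,d}+\frac{1}{(n-n_k)p}{\bm E}^\top_{-M_k}{\bm F}^0_{-M_k}{\bm L}^{0\top}\widetilde{\bm L}^{k,d},
\end{equation*}
with no inverse eigenvalue matrix and only two remainder terms, and the same moment bounds you invoke then deliver $O_p(m_{np}^{-1})$ uniformly in $d\le d_{\max}$. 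To repair your argument you would either adopt this choice of ${\bm H}^{k,d}$, or keep your decomposition but treat the columns beyond $d_0$ separately, showing directly that the rescaled entries $\widehat\lambda^k_j\,\widetilde{l}^{\,k,d}_{s,j}/((n-n_k)p)$ for $j>d_0$ are individually $O_p(m_{np}^{-1/2})$ rather than appealing to a bounded inverse of the rescaling.
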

\begin{proof}
Let ${\bm H}^{k,d}=\frac{1}{(n-n_k)p}{\bm F}^{0\top}_{-M_k}{\bm X}^\top_{-M_k}\widetilde{\bm L}^{k,d},$ then,
 \begin{eqnarray*}
\widehat{\bm L}^{k,d}-{\bm L}^0{\bm H}^{k,d}&=&\frac{1}{(n-n_k)p}({\bm X}^\top_{-M_k}-{\bm L}^{0}{\bm F}^{0\top}_{-M_k}){\bm X}_{-M_k}\widetilde{\bm L}^{k,d}\\
&=&\frac{1}{(n-n_k)p}{\bm E}^\top_{-M_k}{\bm X}_{-M_k}\widetilde{\bm L}^{k,d}\\
&=&\frac{1}{(n-n_k)p}{\bm E}^\top_{-M_k}{\bm E}_{-M_k}\widetilde{\bm L}^{k,d}+\frac{1}{(n-n_k)p}{\bm E}^\top_{-M_k}{\bm F}_{-M_k}^{0}{\bm L}^{0\top}\widetilde{\bm L}^{k,d}.
\end{eqnarray*}
Thus,
\begin{eqnarray*}
 \|\widehat{\bm l}_{s}^{k,d}-{\bm H}^{k,d\top}{\bm l}_s^0\|^2&=&\left\|\frac{1}{p}\sum\limits_{t=1}^p\widetilde{\bm l}_t^{k,d}\left[\frac{1}{n-n_k}\sum\limits_{i\notin M_k}e_{it}e_{is}\right]+\frac{1}{p}\sum\limits_{t=1}^p\widetilde{\bm l}_t^{k,d}{\bm l}_t^{0\top}\frac{1}{n-n_k}\sum\limits_{i\notin M_k}{\bm f}^{0}_{i}e_{is}\right\|^2\\
&\leq&3\left\|\frac{1}{p}\sum\limits_{t=1}^p\widetilde{\bm l}_t^{k,d}\left(\frac{1}{n-n_k}\sum\limits_{i\notin M_k}[e_{it}e_{is}-\E(e_{it}e_{is})]\right)\right\|^2\\
&&+3\left\|\frac{1}{p}\sum\limits_{t=1}^p\widetilde{\bm l}_t^{k,d}\frac{1}{n-n_k}\sum\limits_{i\notin M_k}\E(e_{it}e_{is})\right\|^2\\
&&+3\left\|\frac{1}{p}\sum\limits_{t=1}^p\widetilde{\bm l}_t^{k,d}{\bm l}_t^{0\top}\frac{1}{n-n_k}\sum\limits_{i\notin M_k}{\bm f}^{0}_{i}e_{is}\right\|^2\\
&=&3[a(k,t)+b(k,t)+c(k,t)].
\end{eqnarray*}

Next, we investigate the three terms separately as follows.
\begin{eqnarray*}
a(k,t)&\leq &\frac{1}{p^2}\sum\limits_{t=1}^p\|\widetilde{\bm l}_t^{k,d}\|^2\sum\limits_{t=1}^p\left(\frac{1}{n-n_k}\sum\limits_{i\notin M_k}[e_{it}e_{is}-\E(e_{it}e_{is})]\right)^2\\
&=&\frac{d}{p}\sum\limits_{t=1}^p\left(\frac{1}{n-n_k}\sum\limits_{i\notin M_k}[e_{it}e_{is}-\E(e_{it}e_{is})]\right)^2\\
&=&O_p(n^{-1}).
\end{eqnarray*}
\begin{eqnarray*}
b(k,t)&\leq &\frac{1}{p^2}\sum\limits_{t=1}^p\|\widetilde{\bm l}_t^{k,d}\|^2\sum\limits_{t=1}^p\left(\E\bigg(\frac{1}{n-n_k}\sum\limits_{i\notin M_k}e_{it}e_{is}\bigg)\right)^2\\
&=&\frac{d}{p}\sum\limits_{t=1}^p\left(\E\bigg(\frac{1}{n-n_k}\sum\limits_{i\notin M_k}e_{it}e_{is}\bigg)\right)^2\\
&=&O_p(p^{-1}).
\end{eqnarray*}
\begin{eqnarray*}
c(k,t)&\leq &\frac{1}{p^2}\left\|\sum\limits_{t=1}^p\widetilde{\bm l}_t^{k,d}{\bm l}_t^{0\top}\frac{1}{n-n_k}\sum\limits_{i\notin M_k}{\bm f}^{0}_{i}e_{is}\right\|^2\\
&\leq&\frac{1}{n-n_k}\left\|\frac{1}{\sqrt{n-n_k}}\sum\limits_{i\notin M_k}{\bm f}^{0}_{i}e_{is}\right\|^2\frac{1}{p}\sum\limits_{t=1}^p\|\widetilde{\bm l}_t^{k,d}\|^2\frac{1}{p}\sum\limits_{t=1}^p\|{\bm l}_t^{0}\|^2\\
&=&O_p(n^{-1}).
\end{eqnarray*}
In summary,
\begin{equation*}
\|\widehat{\bm l}_{s}^{k,d}-{\bm H}^{k,d\top}{\bm l}_s^0\|^2=O_p(m_{np}^{-1}).
\end{equation*}
\end{proof}

\begin{lem}\label{lem2}
Suppose  Assumptions \ref{ass1}--\ref{ass4} hold. For any matrix $\bm F$, let $\Pro_{\bm F}={\bm F}({\bm F}^\top{\bm F})^{-1}{\bm F}^\top$ be its projection matrix. If the random vector ${\bm v}$ satisfies $\|{\bm v}\|^2=O_p(1)$,
\begin{equation*}{\bm v}^\top(\Pro_{{\bm L}^0{\bm H}^{k,d}}-\Pro_{\widehat{\bm L}^{k,d}}){\bm v}=O_p(m_{np}^{-1/2}).
\end{equation*}
\end{lem}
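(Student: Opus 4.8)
The plan is to reduce the quadratic form to a spectral-norm bound on the difference of the two projections. Since $\|{\bm v}\|^2=O_p(1)$, Cauchy--Schwarz gives $|{\bm v}^\top(\Pro_{{\bm L}^0{\bm H}^{k,d}}-\Pro_{\widehat{\bm L}^{k,d}}){\bm v}|\le\|\Pro_{{\bm L}^0{\bm H}^{k,d}}-\Pro_{\widehat{\bm L}^{k,d}}\|\,\|{\bm v}\|^2$, so it suffices to prove $\|\Pro_{{\bm L}^0{\bm H}^{k,d}}-\Pro_{\widehat{\bm L}^{k,d}}\|=O_p(m_{np}^{-1/2})$, where $\|\cdot\|$ here denotes the spectral norm. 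Write ${\bm A}=\widehat{\bm L}^{k,d}$, ${\bm B}={\bm L}^0{\bm H}^{k,d}$ (both $p\times d$, of full column rank in the relevant regime $d\le d_0$) and ${\bm\Delta}={\bm A}-{\bm B}$. Starting from the identity $\Pro_{\bm A}-\Pro_{\bm B}=\Pro_{\bm A}({\bm I}-\Pro_{\bm B})-({\bm I}-\Pro_{\bm A})\Pro_{\bm B}$ and using $({\bm I}-\Pro_{\bm A}){\bm A}=0$ and ${\bm B}^\top({\bm I}-\Pro_{\bm B})=0$ to replace ${\bm A},{\bm B}$ by ${\bm\Delta}$ in the two cross terms, one obtains the deterministic perturbation bound
\begin{equation*}
\|\Pro_{\bm A}-\Pro_{\bm B}\|\le \|{\bm\Delta}\|\left(\frac{1}{\sigma_{\min}({\bm A})}+\frac{1}{\sigma_{\min}({\bm B})}\right),
\end{equation*}
where $\sigma_{\min}(\cdot)$ is the smallest singular value. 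The problem thus splits into bounding the numerator $\|{\bm\Delta}\|$ from above and the two smallest singular values from below.

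For the numerator I would pass to the Frobenius norm and invoke Lemma \ref{lem1} rowwise: $\|{\bm\Delta}\|^2\le\|{\bm\Delta}\|_F^2=\sum_{s=1}^p\|\widehat{\bm l}^{k,d}_s-{\bm H}^{k,d\top}{\bm l}^0_s\|^2$. Because the estimate in Lemma \ref{lem1} is uniform in $s$ (the terms $a,b,c$ appearing in its proof have expectations of order $m_{np}^{-1}$ uniformly in $s$), summing over the $p$ rows and applying Markov's inequality gives $\|{\bm\Delta}\|=O_p(\sqrt{p/m_{np}})$.

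For the denominators, the key fact is that both loading matrices have smallest singular value of exact order $\sqrt p$. For ${\bm B}$, Assumption \ref{ass2} yields ${\bm B}^\top{\bm B}={\bm H}^{k,d\top}{\bm L}^{0\top}{\bm L}^0{\bm H}^{k,d}\succeq cp\,{\bm H}^{k,d\top}{\bm H}^{k,d}$, whence $\sigma_{\min}({\bm B})\ge c'\sqrt p\,\sigma_{\min}({\bm H}^{k,d})$, and I would show $\sigma_{\min}({\bm H}^{k,d})$ is bounded away from zero using the definition ${\bm H}^{k,d}=\frac{1}{(n-n_k)p}{\bm F}^{0\top}_{-M_k}{\bm X}_{-M_k}\widetilde{\bm L}^{k,d}$ together with the eigenvalue separation guaranteed by Assumptions \ref{ass1}--\ref{ass2}. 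For ${\bm A}$, a direct computation gives that ${\bm A}^\top{\bm A}$ is (asymptotically) diagonal with entries $(\widehat\lambda^k_j)^2/((n-n_k)^2p)\asymp p$, since the leading $d_0$ eigenvalues $\widehat\lambda^k_j$ of ${\bm X}^\top_{-M_k}{\bm X}_{-M_k}$ are of order $(n-n_k)p$; alternatively Weyl's inequality with $\|{\bm\Delta}\|=o_p(\sqrt p)$ gives $\sigma_{\min}({\bm A})\ge\sigma_{\min}({\bm B})-\|{\bm\Delta}\|\asymp\sqrt p$, so only $\sigma_{\min}({\bm B})$ need be controlled directly.

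Combining the pieces yields $\|\Pro_{\bm A}-\Pro_{\bm B}\|=O_p(\sqrt{p/m_{np}})\cdot O_p(p^{-1/2})=O_p(m_{np}^{-1/2})$, and Cauchy--Schwarz then completes the argument. I expect the main obstacle to be the lower bound $\sigma_{\min}({\bm H}^{k,d})\ge c>0$: this is precisely where the factor structure must enter, namely the separation between the $d_0$ signal eigenvalues of order $(n-n_k)p$ and the idiosyncratic eigenvalues controlled by the weak-dependence conditions in Assumption \ref{ass3}, ensuring that the estimated loading space is genuinely $d$-dimensional and that ${\bm B}$ does not degenerate. Everything else is routine linear algebra together with the rowwise rate supplied by Lemma \ref{lem1}.
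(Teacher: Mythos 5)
Your proof is correct and reaches the right rate, but it takes a genuinely different route from the paper. The paper works directly with the quadratic form: writing ${\bm D}_{k,0}={\bm H}^{k,d\top}{\bm L}^{0\top}{\bm L}^0{\bm H}^{k,d}/p$ and ${\bm D}_{k,d}=\widehat{\bm L}^{k,d\top}\widehat{\bm L}^{k,d}/p$, it expands $\Pro_{{\bm L}^0{\bm H}^{k,d}}-\Pro_{\widehat{\bm L}^{k,d}}$ into four terms --- one involving ${\bm D}_{k,0}^{-1}-{\bm D}_{k,d}^{-1}$, one quadratic in ${\bm\Delta}=\widehat{\bm L}^{k,d}-{\bm L}^0{\bm H}^{k,d}$, and two cross terms --- and bounds them using $\|{\bm D}_{k,0}^{-1}-{\bm D}_{k,d}^{-1}\|=O_p(m_{np}^{-1/2})$ and $\|{\bm D}_{k,d}^{-1}\|=O(1)$ (both cited from Lemma 2 of Bai and Ng, 2002) together with the same aggregate consequence of Lemma \ref{lem1} that you use, $\sum_{s}\|\widehat{\bm l}^{k,d}_s-{\bm H}^{k,d\top}{\bm l}^0_s\|^2=O_p(p/m_{np})$; in both proofs the $O_p(m_{np}^{-1/2})$ rate is driven by the first-order (cross) terms while the term quadratic in ${\bm\Delta}$ is $O_p(m_{np}^{-1})$. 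Your operator-norm perturbation identity $\Pro_{\bm A}-\Pro_{\bm B}=\Pro_{\bm A}({\bm I}-\Pro_{\bm B})-({\bm I}-\Pro_{\bm A})\Pro_{\bm B}$ buys two things: a uniform-in-${\bm v}$ spectral-norm bound (slightly stronger than the pointwise quadratic-form statement), and the elimination of any need for the convergence rate of the difference of inverse Gram matrices --- you only need $\sigma_{\min}(\widehat{\bm L}^{k,d}),\sigma_{\min}({\bm L}^0{\bm H}^{k,d})\gtrsim\sqrt{p}$, i.e.\ boundedness of ${\bm D}_{k,d}^{-1}$ and ${\bm D}_{k,0}^{-1}$. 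The one ingredient you flag as the main obstacle, $\sigma_{\min}({\bm H}^{k,d})\ge c>0$, is exactly the ingredient the paper also does not prove but imports from Bai and Ng, so your proposal is no less complete than the paper's own argument on that point; your singular-value computation for $\widehat{\bm L}^{k,d}$ (diagonal Gram matrix with entries $(\widehat\lambda^k_j)^2/((n-n_k)^2p)\asymp p$, using the order-$np$ signal eigenvalues) is consistent with the normalization actually used in the proof of Lemma \ref{lem1}.
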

\begin{proof}
 Let ${\bm D}_{k,0}={\bm H}^{k,d\top}{\bm L}^{0\top}{\bm L}^0{\bm H}^{k,d}/p$ and ${\bm D}_{k,d}=(\widehat{\bm L}^{k,d})^\top\widehat{\bm L}^{k,d}/p$. We have $\|{\bm D}_{k,d}^{-1}\|=O(1)$ and $\|{\bm D}_{k,0}^{-1}-{\bm D}_{k,d}^{-1}\|=O_p(m_{np}^{-1/2})$ (see Lemma 2 of \cite{BaiNg2002}).
Note that,
\begin{eqnarray*}
&&{\bm v}^\top(\Pro_{{\bm L}^0{\bm H}^{k,d}}-\Pro_{\widehat{\bm L}^{k,d}}){\bm v}\\
&=&\frac{1}{p}{\bm v}^\top{\bm L}^0{\bm H}^{k,d}({\bm D}_{k,0}^{-1}-{\bm D}_{k,d}^{-1}){\bm H}^{k,d\top}{\bm L}^{0\top}{\bm v}\\
&&-\frac{1}{p}{\bm v}^\top(\widehat{\bm L}^{k,d}-{\bm L}^0{\bm H}^{k,d}){\bm D}_{k,d}^{-1}(\widehat{\bm L}^{k,d}-{\bm L}^0{\bm H}^{k,d})^\top{\bm v}\\
&&-\frac{1}{p}{\bm v}^\top(\widehat{\bm L}^{k,d}-{\bm L}^0{\bm H}^{k,d}){\bm D}_{k,d}^{-1}{\bm H}^{k,d\top}{\bm L}^{0\top}{\bm v}\\
&&-\frac{1}{p}{\bm v}^\top{\bm L}^{0}{\bm H}^{k,d}{\bm D}_{k,d}^{-1}(\widehat{\bm L}^{k,d}-{\bm L}^0{\bm H}^{k,d})^\top{\bm v}\\
&=&I+II+III+IV.
\end{eqnarray*}
We consider the four terms in turn.
\begin{eqnarray*}
 I&=&\frac{1}{p}\sum\limits_{s=1}^p\sum\limits_{t=1}^pv_{s}{\bm l}_s^{0\top}{\bm H}^{k,d}({\bm D}_{k,0}^{-1}-{\bm D}_{k,d}^{-1}){\bm H}^{k,d\top}{\bm l}_t^0v_{t}\\.
 &\leq&\frac{1}{p}\|{\bm D}_{k,0}^{-1}-{\bm D}_{k,d}^{-1}\|(\|\sum\limits_{s=1}^pv_{s}{\bm l}_s^{0\top}{\bm H}^{k,d}\|^2)\\
 &\leq&\frac{1}{p}\|{\bm D}_{k,0}^{-1}-{\bm D}_{k,d}^{-1}\|\cdot\sum\limits_{s=1}^p\|{\bm l}_s^{0\top}{\bm H}^{k,d}\|^2\cdot \|{\bm v}\|^2\\
  &=&O_p({m_{np}^{-1/2}});
\end{eqnarray*}
\begin{eqnarray*}
 II&=&\frac{1}{p}\sum\limits_{s=1}^p\sum\limits_{t=1}^pv_{s}(\widehat{\bm l}^{k,d}_s-{\bm H}^{k,d\top}{\bm l}_s^0)^\top{\bm D}_{k,d}^{-1}(\widehat{\bm l}^{k,d}_t-{\bm H}^{k,d\top}{\bm l}_t^0)v_{t}\\
 &\leq&\frac{1}{p}\Big\{\sum\limits_{s=1}^p\sum\limits_{t=1}^p\big[(\widehat{\bm l}^{k,d}_s-{\bm H}^{k,d\top}{\bm l}_s^0)^\top{\bm D}_{k,d}^{-1}(\widehat{\bm l}^{k,d}_t-{\bm H}^{k,d\top}{\bm l}_t^0)\big]^2\Big\}^{1/2}\|\bm v\|^2\\
 &\leq&\frac{1}{p}(\sum\limits_{s=1}^p\|\widehat{\bm l}^{k,d}_s-{\bm H}^{k,d\top}{\bm l}_s^0\|^2)\|{\bm D}^{-1}_{k,d}\|\cdot O_p(1)\\
 &=&O_p({m_{np}^{-1}});
\end{eqnarray*}
\begin{eqnarray*}
 III&=&\frac{1}{p}\sum\limits_{s=1}^p\sum\limits_{t=1}^px_{is}(\widehat{\bm l}^{k,d}_s-{\bm H}^{k,d\top}{\bm l}_s^0)^\top{\bm D}_{k,d}^{-1}{\bm H}^{k,d\top}{\bm l}_t^0x_{it}\\
  &\leq&\frac{1}{p}\Big\{\sum\limits_{s=1}^p\sum\limits_{t=1}^p\big[(\widehat{\bm l}^{k,d}_s-{\bm H}^{k,d\top}{\bm l}_s^0)^\top{\bm D}_{k,d}^{-1}{\bm H}^{k,d\top}{\bm l}_t^0\big]^2\Big\}^{1/2}\|\bm v\|^2\\
 &\leq&\frac{1}{p}(\sum\limits_{i=1}^p\|\widehat{\bm l}^{k,d}_s-{\bm H}^{k,d\top}{\bm l}_s^0\|^2)^{1/2}(\sum\limits_{s=1}^p\|{\bm H}^{k,d\top}{\bm l}_s^0\|^2)^{1/2}\|{\bm D}^{-1}_k\|\cdot O_p(1)\\
 &=&O_p(m_{np}^{-1/2}).
\end{eqnarray*}
Similarly,
\begin{equation*}IV=O_p(m_{np}^{-1/2}).\end{equation*}
The proof is then complete.
\end{proof}

\begin{lem}\label{lem3}
Suppose Assumptions \ref{ass1}--\ref{ass4} hold. For any  matrix ${\bm F}$, denote $W_i(d,{\bm F})=\frac{1}{p}{\bm x}_i^\top({\bm I}_d-\Pro_{\bm F}){\bm x}_i$. Then,  for $1\leq d\leq d_0$,
\begin{equation*}\frac{1}{n}\sum\limits_{k=1}^K\sum\limits_{i\in M_k}W_i(d,\widehat{\bm L}^{k,d})-\frac{1}{n}\sum\limits_{k=1}^K\sum\limits_{i\in M_k}W_i(d,{\bm L}^0{\bm H}^{k,d})=O_p(m_{np}^{-1/2}).
\end{equation*}
\end{lem}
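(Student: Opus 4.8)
The plan is to reduce this lemma directly to Lemma~\ref{lem2}. Writing out the definition of $W_i$, the summand is a difference of two quadratic forms in the \emph{same} vector ${\bm x}_i$, so the leading ${\bm x}_i^\top{\bm x}_i$ contribution cancels and only the projection terms survive:
\begin{equation*}
W_i(d,\widehat{\bm L}^{k,d})-W_i(d,{\bm L}^0{\bm H}^{k,d})=\frac{1}{p}{\bm x}_i^\top\big(\Pro_{{\bm L}^0{\bm H}^{k,d}}-\Pro_{\widehat{\bm L}^{k,d}}\big){\bm x}_i.
\end{equation*}
Setting ${\bm v}={\bm x}_i/\sqrt{p}$, this is precisely the bilinear form controlled by Lemma~\ref{lem2}, provided $\|{\bm v}\|^2=\|{\bm x}_i\|^2/p=O_p(1)$.

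Next I would verify this normalization. Using the model ${\bm x}_i={\bm L}^0{\bm f}_i^0+{\bm e}_i$,
\begin{equation*}
\frac{\|{\bm x}_i\|^2}{p}\leq \frac{2}{p}{\bm f}_i^{0\top}{\bm L}^{0\top}{\bm L}^0{\bm f}_i^0+\frac{2}{p}{\bm e}_i^\top{\bm e}_i.
\end{equation*}
The first term is $O_p(1)$ because $\lambda_1({\bm L}^{0\top}{\bm L}^0/p)=O(1)$ by Assumption~\ref{ass2} and $\E\|{\bm f}_i^0\|^4<\infty$ by Assumption~\ref{ass1}; the second is $O_p(1)$ since $\E({\bm e}_i^\top{\bm e}_i/p)=\gamma_p(i,i)\leq M$ by Assumption~\ref{ass3}. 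In fact what I actually need after summing is the averaged version $\frac{1}{np}\sum_{i=1}^n\|{\bm x}_i\|^2=\frac{1}{np}\|{\bm X}\|^2$, which has bounded expectation and hence is $O_p(1)$.

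I would then invoke Lemma~\ref{lem2} in its factorized form. Inspecting that proof, each of the four terms $I$--$IV$ is bounded by a quantity of the shape $C_{k,d}\|{\bm v}\|^2$, where $C_{k,d}$ depends only on $k$ and $d$ (through $\|{\bm D}_{k,0}^{-1}-{\bm D}_{k,d}^{-1}\|$, the aggregate loading error $\frac{1}{p}\sum_s\|\widehat{\bm l}_s^{k,d}-{\bm H}^{k,d\top}{\bm l}_s^0\|^2$, and $\frac{1}{p}\sum_s\|{\bm l}_s^0\|^2$) and satisfies $C_{k,d}=O_p(m_{np}^{-1/2})$, carrying no $i$-dependence. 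Consequently,
\begin{equation*}
\Big|\frac{1}{n}\sum_{k=1}^K\sum_{i\in M_k}\big[W_i(d,\widehat{\bm L}^{k,d})-W_i(d,{\bm L}^0{\bm H}^{k,d})\big]\Big|\leq \big(\max_{1\le k\le K}C_{k,d}\big)\,\frac{1}{np}\sum_{i=1}^n\|{\bm x}_i\|^2,
\end{equation*}
and the right-hand side is $O_p(m_{np}^{-1/2})\cdot O_p(1)=O_p(m_{np}^{-1/2})$, as claimed.

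The main obstacle is the passage to $\max_{1\le k\le K}C_{k,d}$: Lemmas~\ref{lem1} and~\ref{lem2} are stated fold-by-fold, whereas the number of folds $K$ may grow with $n$ (e.g.\ $K=n$ under leave-one-out). I would argue that the $O_p(m_{np}^{-1/2})$ rates hold \emph{uniformly} over $k$ because each deleted block ${\bm X}_{-M_k}$ differs from ${\bm X}$ by only $\sup_k n_k=o(n^{1/3})$ rows, so the eigen-quantities ${\bm D}_{k,0},{\bm D}_{k,d}$ and the loading errors are uniform perturbations of their full-sample analogues. Assumption~\ref{ass4} (both $\sup_k n_k=o(n^{1/3})$ and $K\sup_k n_k/n<\infty$) is exactly the condition preventing the number of folds from outpacing these concentration rates, and this uniformity is the only step beyond a direct term-by-term application of Lemma~\ref{lem2}.
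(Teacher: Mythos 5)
Your proposal is correct and follows essentially the same route as the paper: the paper's own proof simply notes $\E\|{\bm x}_i\|^2=O(p)$ and invokes Lemma~\ref{lem2} on the quadratic form $\frac{1}{p}{\bm x}_i^\top(\Pro_{{\bm L}^0{\bm H}^{k,d}}-\Pro_{\widehat{\bm L}^{k,d}}){\bm x}_i$. Your treatment is in fact more careful than the paper's two-line argument, since you explicitly address the uniformity of the $O_p(m_{np}^{-1/2})$ bound over the folds $k$ and the averaging over $i$, which the paper leaves implicit.
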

\begin{proof}
Note that, $E\|{\bm x}_{i}\|^2=\sum\limits_{s=1}^p\E(x_{is}^2)\leq p\max\limits_{i,s}\E(x^2_{is})=O(p)$. According to Lemma \ref{lem2},
\begin{eqnarray*}
&&\frac{1}{n}\sum\limits_{k=1}^K\sum\limits_{i\in M_k}[W_i(d,\widehat{\bm L}^{k,d})-W_i(d,{\bm F}^0{\bm H}^{k,d})]\\
&=&\frac{1}{np}\sum\limits_{k=1}^K\sum\limits_{i\in M_k}{\bm x}_i^\top(\Pro_{{\bm L}^0{\bm H}^{k,d}}-\Pro_{\widehat{\bm L}^{k,d}}){\bm x}_i\\
&=&O_p(m_{np}^{-1/2}).
\end{eqnarray*}
\end{proof}

\begin{lem}\label{lem4}
Suppose Assumptions \ref{ass1}--\ref{ass4} hold. For any $d< d_0$, there exists a number $c_d>0$ such that,
\begin{equation*}\lim_{n,p\to\infty}\frac{1}{n}\sum\limits_{k=1}^K\sum\limits_{i\in M_k}[W_i(d,{\bm L}^0{\bm H}^{k,d})-W_i(d_0,{\bm L}^0)]>c_d.
\end{equation*}
\end{lem}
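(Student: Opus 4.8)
The plan is to show that underfitting ($d<d_0$) leaves a strictly positive amount of common signal unexplained, and that this signal term dominates the noise. \emph{First}, since the columns of ${\bm L}^0{\bm H}^{k,d}$ lie in the column space of ${\bm L}^0$ and, by Lemma~\ref{lem1}, $\text{rank}({\bm H}^{k,d})=\min(d_0,d)=d$, the matrix $\Pro_{{\bm L}^0}-\Pro_{{\bm L}^0{\bm H}^{k,d}}$ is an orthogonal projection of rank $d_0-d\ge 1$. Using $({\bm I}-\Pro_{{\bm L}^0{\bm H}^{k,d}})-({\bm I}-\Pro_{{\bm L}^0})=\Pro_{{\bm L}^0}-\Pro_{{\bm L}^0{\bm H}^{k,d}}$ gives
\begin{equation*}
W_i(d,{\bm L}^0{\bm H}^{k,d})-W_i(d_0,{\bm L}^0)=\tfrac1p{\bm x}_i^\top(\Pro_{{\bm L}^0}-\Pro_{{\bm L}^0{\bm H}^{k,d}}){\bm x}_i\ge 0.
\end{equation*}
Substituting ${\bm x}_i={\bm L}^0{\bm f}_i^0+{\bm e}_i$ and $\Pro_{{\bm L}^0}{\bm L}^0={\bm L}^0$ splits the right-hand side into a signal term ${\bm f}_i^{0\top}{\bm B}_k{\bm f}_i^0$ with ${\bm B}_k=\tfrac1p{\bm L}^{0\top}(\Pro_{{\bm L}^0}-\Pro_{{\bm L}^0{\bm H}^{k,d}}){\bm L}^0$, a cross term, and a noise term $\tfrac1p{\bm e}_i^\top(\Pro_{{\bm L}^0}-\Pro_{{\bm L}^0{\bm H}^{k,d}}){\bm e}_i$.

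\emph{Second}, I would dispose of the last two terms. Since $\Pro_{{\bm L}^0}-\Pro_{{\bm L}^0{\bm H}^{k,d}}\preceq\Pro_{{\bm L}^0}$, the noise term is at most $\tfrac1p\|\Pro_{{\bm L}^0}{\bm e}_i\|^2=\tfrac{1}{p^2}\big(\sum_s{\bm l}^0_s e_{is}\big)^\top({\bm L}^{0\top}{\bm L}^0/p)^{-1}\big(\sum_s{\bm l}^0_s e_{is}\big)$, which is $O_p(p^{-1})$ by Assumption~\ref{ass2} and part 7 of Assumption~\ref{ass3}. The signal term is $O_p(1)$ because $\E\|{\bm f}^0_i\|^4<\infty$, so by Cauchy--Schwarz in the semi-inner product induced by $\Pro_{{\bm L}^0}-\Pro_{{\bm L}^0{\bm H}^{k,d}}$ the cross term is $O_p(p^{-1/2})$. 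Averaging over $i$ leaves both $o_p(1)$.

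\emph{Third}, the core is a uniform positive lower bound on $\tfrac1n\sum_k\sum_{i\in M_k}{\bm f}_i^{0\top}{\bm B}_k{\bm f}_i^0$. Writing ${\bm \Psi}={\bm L}^{0\top}{\bm L}^0/p$, a direct computation gives ${\bm B}_k={\bm \Psi}^{1/2}({\bm I}_{d_0}-\Pro_{{\bm \Psi}^{1/2}{\bm H}^{k,d}}){\bm \Psi}^{1/2}\succeq 0$, which has exactly $d_0-d$ nonzero eigenvalues, each at least $\lambda_{\min}({\bm \Psi})>0$ by Assumption~\ref{ass2}; hence $\text{Tr}({\bm B}_k)\ge(d_0-d)\lambda_{\min}({\bm \Psi})$ for every $k$. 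Crucially, ${\bm B}_k$ is built only from ${\bm L}^0$ and ${\bm H}^{k,d}$, with the latter computed from ${\bm X}_{-M_k}$, so it is independent of the held-out factors $\{{\bm f}_i^0:i\in M_k\}$. Replacing $\tfrac1{n_k}\sum_{i\in M_k}{\bm f}_i^0{\bm f}_i^{0\top}$ by its limit $\boldsymbol\Sigma_{\bm f_0}$,
\begin{equation*}
\tfrac1n\sum_{k}\sum_{i\in M_k}{\bm f}_i^{0\top}{\bm B}_k{\bm f}_i^0=\tfrac1n\sum_k n_k\,\text{Tr}({\bm B}_k\boldsymbol\Sigma_{\bm f_0})+\tfrac1n\sum_k\sum_{i\in M_k}\big({\bm f}_i^{0\top}{\bm B}_k{\bm f}_i^0-\text{Tr}({\bm B}_k\boldsymbol\Sigma_{\bm f_0})\big),
\end{equation*}
where the leading term is at least $(d_0-d)\lambda_{\min}({\bm \Psi})\lambda_{\min}(\boldsymbol\Sigma_{\bm f_0})>0$, since $\tfrac1n\sum_k n_k=1$ and $\text{Tr}({\bm B}_k\boldsymbol\Sigma_{\bm f_0})\ge\lambda_{\min}(\boldsymbol\Sigma_{\bm f_0})\text{Tr}({\bm B}_k)$ for each $k$ (the positivity of $\lambda_{\min}(\boldsymbol\Sigma_{\bm f_0})$ is forced by Assumption~\ref{ass1}). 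Choosing $c_d$ below this positive limit then gives the claim.

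The main obstacle is the remainder in the last display. Under leave-one-out ($n_k\equiv1$) no single fold self-averages, so the positivity must be extracted by aggregating across folds rather than within them; this is precisely why subtracting the population $\boldsymbol\Sigma_{\bm f_0}$ (instead of reducing every ${\bm B}_k$ to a common subspace, which would require a factor eigenvalue gap) is the convenient route. Because $\|{\bm B}_k\|\le\lambda_{\max}({\bm \Psi})=O(1)$ and ${\bm B}_k$ is independent of the held-out ${\bm f}_i^0$, each summand is conditionally mean zero with bounded variance; the delicate point is the weak cross-fold dependence, as ${\bm B}_k$ does involve ${\bm f}_j^0$ for $j\notin M_k$, though only through an $O(n_k/n)$ contribution. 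A second-moment bound, controlled by Assumption~\ref{ass4} ($\sup_k n_k=o(n^{1/3})$ and $K\sup_k n_k/n<\infty$), shows this remainder is $o_p(1)$, which completes the proof.
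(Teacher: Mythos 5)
Your proposal is correct in substance but reaches the key positivity through a genuinely different mechanism than the paper. The decomposition into signal, cross, and noise terms is identical (the paper simply keeps the noise term as $I\geq 0$ rather than showing it is $O_p(p^{-1})$ as you do --- only nonnegativity is needed for a lower bound --- and it kills the cross term the same way, via $\|\tfrac{1}{\sqrt p}{\bm L}^{0\top}{\bm e}_i\|=O_p(1)$). The divergence is in the signal term: the paper invokes the convergence ${\bm H}^{k,d}\to{\bm H}^d_0$ (citing Stock--Watson) so that the $k$-dependence disappears, the sum over $i$ telescopes to the full-sample Gram matrix ${\bm F}^{0\top}{\bm F}^0/n$, and positivity follows from the rank deficiency of ${\bm H}^d_0$ together with Assumptions~\ref{ass1}--\ref{ass2}. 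You instead prove the deterministic per-fold bound $\text{Tr}({\bm B}_k)\geq(d_0-d)\lambda_{\min}({\bm L}^{0\top}{\bm L}^0/p)$ via the identity ${\bm B}_k={\bm \Psi}^{1/2}({\bm I}-\Pro_{{\bm \Psi}^{1/2}{\bm H}^{k,d}}){\bm \Psi}^{1/2}$ (which is correct), and then exploit the independence of ${\bm B}_k$ from the held-out factors plus a cross-fold law of large numbers to replace the held-out empirical second moments by $\boldsymbol\Sigma_{\bm f_0}$. What your route buys is that it does not require ${\bm H}^{k,d}$ to converge to a common limit at all; what it costs is that the concentration step leans on regularity of the factor process --- i.i.d.\ or stationary-ergodic factors with $\lambda_{\min}(\boldsymbol\Sigma_{\bm f_0})>0$ --- that goes slightly beyond the literal content of Assumption~\ref{ass1}, which only constrains the sample Gram matrix and fourth moments. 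This is a comparable amount of unstated regularity to what the paper's appeal to the external convergence result implicitly assumes, so I would not call it a gap, but if you were writing this up you should state the factor-process assumption you are using explicitly and verify that the $O(n_k/n)$ cross-fold leakage in ${\bm B}_k$ is indeed negligible under Assumption~\ref{ass4}, as you sketch.
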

\begin{proof}
\begin{eqnarray*}
&&\frac{1}{n}\sum\limits_{k=1}^K\sum\limits_{i\in M_k}[W_i(d,{\bm L}^0{\bm H}^{k,d})-W_i(d_0,{\bm L}^0)]\\
&=&\frac{1}{np}\sum\limits_{k=1}^K\sum\limits_{i\in M_k}{\bm x}_i^\top(\Pro_{{\bm L}^0}-\Pro_{{\bm L}^0{\bm H}^{k,d}}){\bm x}_i\\
&=&\frac{1}{np}\sum\limits_{k=1}^K\sum\limits_{i\in M_k}{\bm e}_i^\top(\Pro_{{\bm L}^0}-\Pro_{{\bm L}^0{\bm H}^{k,d}}){\bm e}_i\\
&&+\frac{2}{np}\sum\limits_{k=1}^K\sum\limits_{i\in M_k}{\bm e}_i^\top(\Pro_{{\bm L}^0}-\Pro_{{\bm L}^0{\bm H}^{k,d}}){\bm L}^0{\bm f}_i^0\\
&&+\frac{1}{np}\sum\limits_{k=1}^K\sum\limits_{i\in M_k}{\bm f}_i^{0\top}{\bm L}^{0\top}(\Pro_{{\bm L}^0}-\Pro_{{\bm L}^0{\bm H}^{k,d}}){\bm L}^0{\bm f}_i^0\\
&=&I+II+III.
\end{eqnarray*}
We have $I\geq0$ since for each $d$, $(\Pro_{{\bm L}^0}-\Pro_{{\bm L}^0{\bm H}^{k,d}})$ is positive semi-definite. For II, note that for each $d$,
\begin{equation*}
\frac{2}{p}{\bm e}_i^\top(\Pro_{{\bm L}^0}-\Pro_{{\bm L}^0{\bm H}^{k,d}}){\bm L}^0{\bm f}_t^0=\frac{2}{p}[{\bm e}_i^\top{\bm L}^0{\bm f}_t^0-{\bm e}_i^\top\Pro_{{\bm L}^0{\bm H}^{k,d}}{\bm L}^{0}{\bm f}_t^0]=o_p(1),
\end{equation*}
since $\|\frac{1}{\sqrt{p}}{\bm L}^{0\top}{\bm e}_i\|=O_p(1)$.

For $III$, according to \cite{SW1998}, ${\bm H}^{k,d}$  converges to a matrix ${\bm H}_0^d$ with rank $d$, then
\begin{eqnarray*}
III&=&\frac{1}{np}\sum\limits_{k=1}^K\sum\limits_{i\in M_k}{\bm f}_i^{0\top}{\bm L}^{0\top}(\Pro_{{\bm L}^0}-\Pro_{{\bm L}^0{\bm H}^{k,d}}){\bm L}^0{\bm f}_i^0\\
&\to&\frac{1}{np}\sum\limits_{k=1}^K\sum\limits_{i\in M_k}{\bm f}_i^{0\top}{\bm L}^{0\top}(\Pro_{{\bm L}^0}-\Pro_{{\bm L}^0{\bm H}^{d}_0}){\bm L}^0{\bm f}_i^0\\
&=&\text{Tr}(\frac{1}{p}[{\bm L}^{0\top}{\bm L}^0-{\bm L}^{0\top}{\bm L}^0{\bm H}^{d}_0({\bm H}^{d\top}_{0}{\bm L}_0^\top{\bm L}_0{\bm H}^{d}_0)^{-1}{\bm H}^{d\top}_0{\bm L}^{0\top}{\bm L}^0]\cdot\frac{1}{n}{\bm F}^{0\top}{\bm F}^0)\\
&>&c_d>0,
\end{eqnarray*}
since ${\bm H}^{d}_0$ has rank $d$ and $\lambda_{d_0}(\frac{1}{n}{\bm F}^{0\top}{\bm F}^0)$ and $\lambda_{d_0}(\frac{1}{p}{\bm L}^{0\top}{\bm L}^0)$ are bounded away from $0$.
\end{proof}

\begin{lem}\label{lem5}
Suppose  Assumptions \ref{ass1}--\ref{ass4} hold. For $1\leq d_1 \leq d_0$ and $r<d_2\leq d_{\max}$,
\begin{equation*}
  \boldsymbol\phi^{\top}_{d_1}{\widehat{\boldsymbol\phi}}^k_{d_2}=O_{p}(m_{np}^{-1}).
  \end{equation*}
 \end{lem}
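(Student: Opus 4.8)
The plan is to exploit the separation between the ``signal'' eigenvalues (of order $np$) and the ``noise'' eigenvalues (of order $M_{np}$) through the two eigenvalue equations that define $\boldsymbol\phi_{d_1}$ and $\widehat{\boldsymbol\phi}^k_{d_2}$ (here I read the hypothesis as a signal index $d_1\le d_0$ and a noise index $d_2>d_0$). First I would reduce the statement, which mixes the full signal matrix with the leave-fold-out data matrix, to one involving only leave-fold-out quantities. Since $\bm F^{0\top}\bm F^0$ and $\bm F^{0\top}_{-M_k}\bm F^0_{-M_k}$ are both nonsingular by Assumptions~\ref{ass1} and~\ref{ass4}, the column spaces of $\bm L^0\bm F^{0\top}\bm F^0\bm L^{0\top}$ and of $\bm L^0\bm F^{0\top}_{-M_k}\bm F^0_{-M_k}\bm L^{0\top}$ both coincide with $\mathrm{col}(\bm L^0)$, a $d_0$-dimensional space. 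Hence $\boldsymbol\phi_{d_1}\in\mathrm{col}(\bm L^0)=\mathrm{span}\{\boldsymbol\phi^k_1,\dots,\boldsymbol\phi^k_{d_0}\}$, so I can write $\boldsymbol\phi_{d_1}=\sum_{j=1}^{d_0}\alpha_j\boldsymbol\phi^k_j$ with $\sum_j\alpha_j^2=1$, and it suffices to prove $\boldsymbol\phi^{k\top}_j\widehat{\boldsymbol\phi}^k_{d_2}=O_p(m_{np}^{-1})$ for every $j\le d_0<d_2$.

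Next I would subtract the two relevant eigen-equations. Writing $\bm A^k=\bm L^0\bm F^{0\top}_{-M_k}\bm F^0_{-M_k}\bm L^{0\top}$ with eigenpair $(\lambda^k_j,\boldsymbol\phi^k_j)$ and $\widehat{\bm B}=\bm X^\top_{-M_k}\bm X_{-M_k}$ with eigenpair $(\widehat\lambda^k_{d_2},\widehat{\boldsymbol\phi}^k_{d_2})$, the identity
\begin{equation*}
(\lambda^k_j-\widehat\lambda^k_{d_2})\,\boldsymbol\phi^{k\top}_j\widehat{\boldsymbol\phi}^k_{d_2}=-\boldsymbol\phi^{k\top}_j\bm R\,\widehat{\boldsymbol\phi}^k_{d_2},\qquad \bm R:=\widehat{\bm B}-\bm A^k=\bm L^0\bm F^{0\top}_{-M_k}\bm E_{-M_k}+\bm E^\top_{-M_k}\bm F^0_{-M_k}\bm L^{0\top}+\bm E^\top_{-M_k}\bm E_{-M_k},
\end{equation*}
holds. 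Assumptions~\ref{ass1} and~\ref{ass2} give $\lambda^k_j\asymp np$ bounded away from $0$, while Assumption~\ref{ass5a} or~\ref{ass5b} together with the cited random matrix results force the noise eigenvalue $\widehat\lambda^k_{d_2}=O_p(M_{np})=o(np)$ for $d_2>d_0$; consequently the gap satisfies $\lambda^k_j-\widehat\lambda^k_{d_2}\asymp np$ with probability tending to one, and the claim reduces to the numerator bound $\boldsymbol\phi^{k\top}_j\bm R\,\widehat{\boldsymbol\phi}^k_{d_2}=O_p(M_{np})$.

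To bound the numerator I would write $\boldsymbol\phi^k_j=\bm L^0\bm c_j$ with $\|\bm c_j\|=O_p(p^{-1/2})$ and treat the three pieces of $\bm R$ separately. The purely idiosyncratic piece equals $(\bm E_{-M_k}\boldsymbol\phi^k_j)^\top(\bm E_{-M_k}\widehat{\boldsymbol\phi}^k_{d_2})$; using Assumption~\ref{ass3}(7) for $\|\bm E_{-M_k}\bm L^0\bm c_j\|=O_p(\sqrt n)$ and $\|\bm E_{-M_k}\widehat{\boldsymbol\phi}^k_{d_2}\|\le\|\bm E_{-M_k}\|_{op}=O_p(\sqrt{M_{np}})$, it is $O_p(\sqrt{nM_{np}})$, which divided by the gap is $O_p(m_{np}^{-1})$ in both regimes of $n,p$. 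Each of the two cross pieces reduces, after the substitution, to a bilinear form of $\bm E_{-M_k}$ against the factor-weighted vector $\bm h_j:=\bm F^0_{-M_k}\bm L^{0\top}\boldsymbol\phi^k_j$, whose norm I would compute exactly as $\|\bm h_j\|^2=\boldsymbol\phi^{k\top}_j\bm A^k\boldsymbol\phi^k_j=\lambda^k_j=O_p(np)$.

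The hard part will be this cross term $\bm h_j^\top\bm E_{-M_k}\widehat{\boldsymbol\phi}^k_{d_2}$. A naive estimate $\|\bm h_j^\top\bm E_{-M_k}\|\,\|\widehat{\boldsymbol\phi}^k_{d_2}\|$ only yields $O_p(p\sqrt n)$, i.e.\ the suboptimal rate $O_p(m_{np}^{-1/2})$, because it ignores that $\widehat{\boldsymbol\phi}^k_{d_2}$ is a specific bulk eigenvector rather than the worst-case direction. Getting the sharp $O_p(M_{np})$, hence the stated $O_p(m_{np}^{-1})$, is exactly where the random matrix theory of \cite{BZ2008,LP2011,BPZ2015} and Assumptions~\ref{ass5a}/\ref{ass5b} enter: one must show that the noise eigenvector is asymptotically uncorrelated with the fixed direction $\bm E^\top_{-M_k}\bm h_j$, which I would make rigorous by re-expressing $\widehat{\boldsymbol\phi}^k_{d_2}$ through the resolvent $(\widehat{\bm B}-\widehat\lambda^k_{d_2}\bm I)^{-1}$ and invoking eigenvector delocalization / isotropic local-law estimates that supply the extra order-$m_{np}^{-1/2}$ cancellation. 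The self-referential term this generates, coming from $\bm L^{0\top}\widehat{\boldsymbol\phi}^k_{d_2}=\sum_m(\boldsymbol\phi^{k\top}_m\widehat{\boldsymbol\phi}^k_{d_2})\bm g_m$, carries a coefficient of order $(np)^{-1/2}=o_p(1)$ after division by the gap and is absorbed into the left-hand side. Finally, summing $\boldsymbol\phi^{k\top}_j\widehat{\boldsymbol\phi}^k_{d_2}=O_p(m_{np}^{-1})$ against the bounded coefficients $\alpha_j$ recovers $\boldsymbol\phi^\top_{d_1}\widehat{\boldsymbol\phi}^k_{d_2}=O_p(m_{np}^{-1})$.
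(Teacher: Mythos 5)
Your reduction to showing $\boldsymbol\phi^{k\top}_j\widehat{\boldsymbol\phi}^k_{d_2}=O_p(m_{np}^{-1})$ for $j\leq d_0<d_2$, and the perturbation identity $(\lambda^k_j-\widehat\lambda^k_{d_2})\,\boldsymbol\phi^{k\top}_j\widehat{\boldsymbol\phi}^k_{d_2}=-\boldsymbol\phi^{k\top}_j\bm R\,\widehat{\boldsymbol\phi}^k_{d_2}$, are both sound, and your accounting of the eigengap ($\asymp np$) and of the purely idiosyncratic piece of $\bm R$ is consistent with what the paper uses. But the proposal is not a proof: the decisive estimate $\bm h_j^\top\bm E_{-M_k}\widehat{\boldsymbol\phi}^k_{d_2}=O_p(M_{np})$ --- which you yourself single out as ``the hard part'' --- is never established. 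You defer it to a resolvent expansion plus isotropic local-law / delocalization estimates, but those tools are not among the random-matrix results the paper actually imports (its RMT citations are used only for bulk eigenvalue behaviour, in Lemmas~\ref{lem8}, \ref{lem10} and~\ref{lem12}); they are not obviously available under the lemma's stated hypotheses, which are Assumptions~\ref{ass1}--\ref{ass4} only (you are already borrowing Assumption~\ref{ass5a}/\ref{ass5b} just to separate the eigenvalues); and under the weak-dependence structure permitted by Assumption~\ref{ass3} an isotropic estimate of that kind would itself be a nontrivial theorem. As written, your argument delivers only $O_p(m_{np}^{-1/2})$, the rate needed in Lemmas~\ref{lem2} and~\ref{lem3} but not the $O_p(m_{np}^{-1})$ claimed here.

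The paper takes a different and shorter route. It applies the Davis--Kahan $\sin\Theta$ theorem to the whole leading $d_0$-dimensional eigenspace, producing an orthogonal matrix $\widehat{\bm O}$ with $\|\widehat{\boldsymbol\Phi}^k_{d_0}\widehat{\bm O}-\boldsymbol\Phi_{d_0}\|\leq \sqrt{d_0}\,\|{\bm X}^\top{\bm X}-{\bm L}^0{\bm F}^{0\top}_{-M_k}{\bm F}^0_{-M_k}{\bm L}^{0\top}\|_{\text{op}}/(\lambda_{d_0}-\lambda_{d_0+1})$, asserts that the numerator is $O_p(M_{np})$, divides by the gap of order $np$, and then writes $\boldsymbol\phi^{\top}_{d_1}\widehat{\boldsymbol\phi}^k_{d_2}=\widehat{\boldsymbol\phi}^{k\top}_{d_2}(\boldsymbol\phi_{d_1}-\widehat{\boldsymbol\Phi}^k_{d_0}\widehat{\bm o}_{d_1})+\widehat{\boldsymbol\phi}^{k\top}_{d_2}\widehat{\boldsymbol\Phi}^k_{d_0}\widehat{\bm o}_{d_1}$, where the second term vanishes exactly because $\widehat{\boldsymbol\phi}^k_{d_2}$ is orthogonal to the first $d_0$ eigenvectors of the same matrix. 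This converts the eigenspace bound directly into the inner-product bound with no per-direction bilinear form to control. I would add that the operator-norm step in the paper conceals precisely the cross term you are worried about ($\|{\bm L}^0{\bm F}^{0\top}_{-M_k}\|_{\text{op}}\|{\bm E}_{-M_k}\|_{\text{op}}$ is naively of order $\sqrt{npM_{np}}$, not $M_{np}$), so your diagnosis of where the $m_{np}^{-1}$ rate has to be earned is astute; the gap is that you have not earned it.
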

\begin{proof}
 Let $\widehat{\boldsymbol\Phi}^k_{d_0}=(\widehat{\boldsymbol\phi}^k_1,\widehat{\boldsymbol\phi}^k_2,...,\widehat{\boldsymbol\phi}^k_{d_0})$ and ${\boldsymbol\Phi}_{d_0}=(\boldsymbol\phi_1,\boldsymbol\phi_2,...,\boldsymbol\phi_{d_0})$. According to Davis-Kahan $\sin\Theta$ theorem \citep{DK1970,YWS2014}, there exist a orthogonal matrix $\widehat{\bm O}\in\R^{d_0\times d_0}$ such that
\begin{equation*}\|\widehat{\boldsymbol\Phi}^k_{d_0}\widehat{\bm O}-\boldsymbol\Phi_{d_0}\|\leq \frac{\sqrt{d_0}\|{\bm X}^\top{\bm X}-{\bm L}^0{\bm F}^{0\top}_{-M_k}{\bm F}^0_{-M_k}{\bm L}^{0\top}\|_{\text{op}}}{\lambda_{d_0}-{\lambda}_{d_0+1}},
\end{equation*}
where $\|\cdot\|_{\text{op}}$ is the operator norm.
Note that,
\begin{eqnarray*}
\|{\bm F}^{0\top}_{-M_k}{\bm E}_{-M_k}\|_{\text{op}}&=&\sup_{\|\boldsymbol\xi=1\|}\|{\bm F}^{0\top}_{-M_k}{\bm E}_{-M_k}\boldsymbol\xi\|\\
&\leq&\sup_{\|\boldsymbol\xi=1\|}\left[(\sum_{s=1}^p\|\sum_{i\notin M_k}{\bm f}_i^0e_{is}\|^2)^{1/2}\cdot \|\boldsymbol\xi\|\right]\\
&=&O_{p}(\sqrt{np}).
\end{eqnarray*}
Then,
\begin{eqnarray*}
&&\|{\bm X}^\top{\bm X}-{\bm L}^0{\bm F}^{0\top}_{-M_k}{\bm F}^0_{-M_k}{\bm L}^{0\top}\|_{\text{op}}\\
&=&\|\sum_{i\in M_k}{\bm x}_i{\bm x}_i^\top+{\bm L}^0{\bm F}^{0\top}_{-M_k}{\bm E}_{-M_k}+{\bm E}_{-M_k}{\bm F}^0_{-M_k}{\bm L}^{0\top}+{\bm E}^\top_{-M_k}{\bm E}_{-M_k}\|_{\text{op}}\\
&\leq&\|\sum_{i\in M_k}{\bm x}_i{\bm x}_i^\top\|_{\text{op}}+\|{\bm L}^0{\bm F}^{0\top}_{-M_k}{\bm E}_{-M_k}\|_{\text{op}}+\|{\bm E}_{-M_k}{\bm F}^0_{-M_k}{\bm L}^{0\top}\|_{\text{op}}+\|{\bm E}^\top_{-M_k}{\bm E}_{-M_k}\|_{\text{op}}\\
&\leq&\|\sum_{i\in M_k}{\bm x}_i{\bm x}_i^\top\|_{\text{op}}+2\|{\bm L}^0{\bm F}^{0\top}_{-M_k}\|_{\text{op}}\cdot\|{\bm E}_{-M_k}\|_{\text{op}}+\|{\bm E}^\top_{-M_k}{\bm E}_{-M_k}\|_{\text{op}}\\
&=&O_{p}(M_{np}).
\end{eqnarray*}
Consequently,
\begin{equation*}
\|\widehat{\boldsymbol\Phi}^k_{d_0}\widehat{\bm O}-\boldsymbol\Phi_{d_0}\|=O_{p}(\frac{M_{np}}{np})=O_{p}({m^{-1}_{np}}),
\end{equation*}
Let $\widehat{\bm o}_{d_1}$ be the transpose of the $d_1$-th row of $\widehat{\bm O}$, we can obtain that,
\begin{equation*}
\boldsymbol\phi^{\top}_{d_1}{\widehat{\boldsymbol\phi}}^k_{d_2}=\widehat{\boldsymbol\phi}^{k\top}_{d_2}({{\boldsymbol\phi}}_{d_1}-\widehat{\boldsymbol\Phi}_{d_0}\widehat{\bm o}_{d_1})+\widehat{\boldsymbol\phi}^{k\top}_{d_2}\widehat{\boldsymbol\Phi}_{d_0}\widehat{\bm o}_{d_1}=O_{p}({m^{-1}_{np}}).\end{equation*}
\end{proof}

\begin{lem}\label{lem6}
Suppose Assumptions \ref{ass1}--\ref{ass4} hold. For $d_0<d\leq d_{\max}$,
\begin{equation*}
\|\widehat{\boldsymbol\phi}_{d}^{k\top}{\bm L}^0{\bm f}^0_i\|^2=O_p(pm_{np}^{-2}).
\end{equation*}
\end{lem}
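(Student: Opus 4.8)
The plan is to exploit the near-orthogonality between the empirical ``noise'' eigenvectors $\widehat{\boldsymbol\phi}_d^k$ (with $d>d_0$) and the signal subspace $\text{col}({\bm L}^0)$, which is exactly what Lemma \ref{lem5} quantifies. First I would record the geometric fact that $\text{col}({\bm L}^0)$ coincides with the span of the top $d_0$ eigenvectors $\boldsymbol\phi_1,\dots,\boldsymbol\phi_{d_0}$ of ${\bm L}^0{\bm F}^{0\top}{\bm F}^0{\bm L}^{0\top}$: under Assumptions \ref{ass1}--\ref{ass2} this matrix has rank $d_0$ and its range equals $\text{col}({\bm L}^0)$, so $\boldsymbol\Phi_{d_0}=(\boldsymbol\phi_1,\dots,\boldsymbol\phi_{d_0})$ is an orthonormal basis of $\text{col}({\bm L}^0)$. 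Since ${\bm L}^0{\bm f}_i^0$ lies in that column space, the projection $\boldsymbol\Phi_{d_0}\boldsymbol\Phi_{d_0}^\top$ fixes it, yielding the expansion ${\bm L}^0{\bm f}_i^0=\sum_{l=1}^{d_0}(\boldsymbol\phi_l^\top{\bm L}^0{\bm f}_i^0)\boldsymbol\phi_l$.

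Next I would substitute this expansion into the target quantity to get
\[
\widehat{\boldsymbol\phi}_d^{k\top}{\bm L}^0{\bm f}_i^0=\sum_{l=1}^{d_0}\bigl(\widehat{\boldsymbol\phi}_d^{k\top}\boldsymbol\phi_l\bigr)\bigl(\boldsymbol\phi_l^\top{\bm L}^0{\bm f}_i^0\bigr),
\]
and then bound the right-hand side by Cauchy--Schwarz as the product $\bigl(\sum_{l=1}^{d_0}(\widehat{\boldsymbol\phi}_d^{k\top}\boldsymbol\phi_l)^2\bigr)^{1/2}\bigl(\sum_{l=1}^{d_0}(\boldsymbol\phi_l^\top{\bm L}^0{\bm f}_i^0)^2\bigr)^{1/2}$. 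For the first factor, since $d>d_0\ge l$, Lemma \ref{lem5} gives each $\boldsymbol\phi_l^\top\widehat{\boldsymbol\phi}_d^k=O_p(m_{np}^{-1})$, and summing the finitely many ($d_0$) terms leaves the factor $O_p(m_{np}^{-1})$. The second factor equals $\|\boldsymbol\Phi_{d_0}^\top{\bm L}^0{\bm f}_i^0\|=\|{\bm L}^0{\bm f}_i^0\|$, which I would bound through $\|{\bm L}^0{\bm f}_i^0\|^2\le\lambda_1({\bm L}^{0\top}{\bm L}^0)\|{\bm f}_i^0\|^2=O(p)\cdot O_p(1)=O_p(p)$, using $\lambda_1({\bm L}^{0\top}{\bm L}^0/p)=O(1)$ from Assumption \ref{ass2} and $\|{\bm f}_i^0\|^2=O_p(1)$ from the fourth-moment bound in Assumption \ref{ass1}. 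Multiplying the two factors gives $|\widehat{\boldsymbol\phi}_d^{k\top}{\bm L}^0{\bm f}_i^0|=O_p(\sqrt{p}\,m_{np}^{-1})$, and squaring delivers the claimed $O_p(pm_{np}^{-2})$.

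The delicate point will be the bookkeeping in the first step: Lemma \ref{lem5} pairs the \emph{full-sample} signal eigenvectors $\boldsymbol\phi_l$ with the \emph{leave-out} empirical eigenvectors $\widehat{\boldsymbol\phi}_d^k$, so I must make sure the expansion of ${\bm L}^0{\bm f}_i^0$ is taken in exactly the $\boldsymbol\phi_l$ arising from ${\bm L}^0{\bm F}^{0\top}{\bm F}^0{\bm L}^{0\top}$, and that the rank/range identification of $\text{col}({\bm L}^0)$ holds under Assumptions \ref{ass1}--\ref{ass2}. Everything else reduces to Cauchy--Schwarz together with the moment bounds already in force, so no probabilistic machinery beyond Lemma \ref{lem5} is required.
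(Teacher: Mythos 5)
Your proposal is correct and follows essentially the same route as the paper's own proof: both expand ${\bm L}^0{\bm f}_i^0$ in the orthonormal basis $\boldsymbol\phi_1,\dots,\boldsymbol\phi_{d_0}$ of the signal eigenspace, apply Cauchy--Schwarz, invoke Lemma \ref{lem5} for the $O_p(m_{np}^{-1})$ inner products, and use $\|{\bm L}^0{\bm f}_i^0\|^2=O_p(p)$ from Assumptions \ref{ass1}--\ref{ass2}. Your extra care about which eigenvectors span $\mathrm{col}({\bm L}^0)$ is exactly the identification the paper takes for granted.
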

\begin{proof}
It is easy to see that ${\bm L}^0{\bm f}^0_i\in span\{\boldsymbol\phi_1,\boldsymbol\phi_2,...,\boldsymbol\phi_{d_0}\}$. Let ${\bm L}^0{\bm f}^0_i=\sum\limits_{s=1}^{d_0}b_s\boldsymbol\phi_s$ with $\sum\limits_{s=1}^{d_0}b_s^2=\|{\bm L}^0{\bm f}^0_i\|^2$. Note that,
\begin{eqnarray*}
\|\widehat{\boldsymbol\phi}_{d}^{k\top}{\bm L}^0{\bm f}^0_i\|^2&=&\|\sum\limits_{s=1}^{d_0}b_s\widehat{\boldsymbol\phi}_{d}^{k\top}\boldsymbol\phi_s\|^2\\
&\leq&\|{\bm L}^0{\bm f}^0_i\|^2\sum\limits_{s=1}^{d_0}\|\widehat{\boldsymbol\phi}_{d}^{k\top}\boldsymbol\phi_s\|^2\\
\end{eqnarray*}
Since $\|{\bm L}^0{\bm f}^0_i\|^2=O_p(p)$, according to Lemma \ref{lem5}, we have that,
\begin{equation*}
\|\widehat{\boldsymbol\phi}_{d}^{k\top}{\bm L}^0{\bm f}^0_i\|^2=O_p(pm_{np}^{-2}).
\end{equation*}
\end{proof}

\begin{lem}\label{lem7}
Suppose Assumptions \ref{ass1}--\ref{ass4} hold. Let $\text{\normalfont Cov}({\bm e}_i)=\boldsymbol\Sigma_{p,i}$, for $d_0<d\leq d_{\max}$,
\begin{equation*}
\frac{1}{n}\sum\limits_{k=1}^K\sum\limits_{i\in M_k}\sum\limits_{s=1}^pw^{k,d}_{s}(x_{is}-\widehat{\bm f}_i^{k,d\top}\widehat{\bm l}^{k,d}_s)^2=\frac{1}{n}\sum\limits_{k=1}^K\sum\limits_{i\in M_k} \text{\normalfont Tr}(\mathcal{Q}_{\widehat{\bm L}^{k,d}}\boldsymbol\Sigma_{p,i})+o_p(1).\\
\end{equation*}
\end{lem}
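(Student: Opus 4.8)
The plan is to prove the equivalent statement $\frac1n\sum_{k=1}^K\sum_{i\in M_k}\sum_{s=1}^p w_s^{k,d}(r_{is}^2-\sigma_{is}^2)=o_p(1)$, where $r_{is}=x_{is}-\widehat{\bm f}_i^{k,d\top}\widehat{\bm l}_s^{k,d}$ and $\sigma_{is}^2=(\boldsymbol\Sigma_{p,i})_{ss}=\E e_{is}^2$, so that $\text{Tr}(\mathcal{Q}_{\widehat{\bm L}^{k,d}}\boldsymbol\Sigma_{p,i})=\sum_{s}w_s^{k,d}\sigma_{is}^2$. First I would note that $\widehat{\bm f}_i^{k,d\top}\widehat{\bm l}_s^{k,d}=(\Pro_{\widehat{\bm L}^{k,d}}\bm x_i)_s$ and, using $\bm x_i=\bm L^0\bm f_i^0+\bm e_i$, rewrite the residual as $r_{is}=e_{is}+a_{is}-b_{is}$, where $\bm a_i=(\bm I-\Pro_{\widehat{\bm L}^{k,d}})\bm L^0\bm f_i^0$ is the signal leakage and $b_{is}=(\Pro_{\widehat{\bm L}^{k,d}}\bm e_i)_s$ is the idiosyncratic part projected onto the fitted loading space. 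Writing $r_{is}^2-\sigma_{is}^2=(e_{is}^2-\sigma_{is}^2)+(r_{is}^2-e_{is}^2)$ then splits the target into a \emph{concentration} term and an \emph{approximation} term, which I would treat in turn.

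For the approximation term I would first control $\frac1n\sum_k\sum_{i\in M_k}\sum_s w_s^{k,d}(r_{is}-e_{is})^2$ and then recover $r_{is}^2-e_{is}^2=(r_{is}-e_{is})^2+2e_{is}(r_{is}-e_{is})$ by Cauchy--Schwarz, since $\frac1n\sum\sum\sum_s w_s^{k,d}e_{is}^2=O_p(1)$. As $(r_{is}-e_{is})^2\le2a_{is}^2+2b_{is}^2$, it suffices to bound the leakage and the projected-noise pieces separately. Because $d\ge d_0$, the fitted eigenspace contains the true signal space up to the Davis--Kahan error, so Lemma~\ref{lem5} (equivalently Lemma~\ref{lem6}) gives $\|\bm a_i\|^2=O_p(pm_{np}^{-2})$; since $w_s^{k,d}\le1$, the leakage contributes $O_p(pm_{np}^{-2})=o_p(1)$. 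For the projected noise, the decisive fact is that the leverages are uniformly small: by Lemma~\ref{lem1} and Assumption~\ref{ass2}, $\max_s w_s^{k,d}\le\|(\widehat{\bm L}^{k,d\top}\widehat{\bm L}^{k,d})^{-1}\|\max_s\|\widehat{\bm l}_s^{k,d}\|^2=O_p(p^{-1})O_p(p^{1/2})=O_p(m_{np}^{-1/2})$, where $\max_s\|\bm l_s^0\|^2=O_p(p^{1/2})$ follows from the fourth-moment bound in Assumption~\ref{ass2}. Hence $\sum_s w_s^{k,d}b_{is}^2\le(\max_s w_s^{k,d})\|\Pro_{\widehat{\bm L}^{k,d}}\bm e_i\|^2$, and since $\frac1n\sum\sum\|\Pro_{\widehat{\bm L}^{k,d}}\bm e_i\|^2=O_p(1)$ by Assumption~\ref{ass3}, this piece is $o_p(1)$.

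The concentration term $\frac1n\sum_k\sum_{i\in M_k}\sum_s w_s^{k,d}(e_{is}^2-\sigma_{is}^2)$ is where the cross-validation design is indispensable. Because $w_s^{k,d}$ is a function of $\bm X_{-M_k}$ only, for $i\in M_k$ the random leverages are decoupled from the held-out errors $\bm e_i$; I would therefore condition on the training fold, argue that the conditional bias is negligible under the weak-dependence bounds of Assumption~\ref{ass3}, and bound the conditional variance $\sum_{s,t}w_s^{k,d}w_t^{k,d}\,\text{Cov}(e_{is}^2,e_{it}^2)$ through the fourth-moment bound of Assumption~\ref{ass3} together with $\sum_s w_s^{k,d}=d$. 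Summing over $i\in M_k$ and over folds, the weak serial dependence across observations (parts~2 and~5 of Assumption~\ref{ass3}) and the fold-size condition $\sup_k n_k=o(n^{1/3})$, $K\sup_k n_k/n<\infty$ of Assumption~\ref{ass4} make the aggregate conditional variance vanish, so this term is $o_p(1)$.

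The main obstacle is this last concentration step: the weights $w_s^{k,d}$ are data-driven and, under Assumptions~\ref{ass1}--\ref{ass4} alone, the errors are only weakly (not independently) distributed across both $s$ and $i$, so neither the averaging over the $p$ variables nor that over the $n$ observations is automatic. The leave-fold-out construction is exactly what makes it go through: deleting the rows $M_k$ before forming $\widehat{\bm L}^{k,d}$ severs the dependence between the leverages and the held-out residuals, and Assumptions~\ref{ass3} and~\ref{ass4} quantify that the remaining dependence is too weak to affect either the conditional bias or the conditional variance. Combining the two $o_p(1)$ pieces establishes the lemma.
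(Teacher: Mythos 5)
Your proposal is correct and follows essentially the same route as the paper's proof: both reduce the weighted residual sum to the quadratic form $\frac{1}{n}\sum_{k}\sum_{i\in M_k}{\bm e}_i^\top\mathcal{Q}_{\widehat{\bm L}^{k,d}}{\bm e}_i$ by showing that the signal-leakage and projected-noise remainders are negligible (via consistency of $\widehat{\bm L}^{k,d}$ and $\sup_s w_s^{k,d}=o_p(1)$), and both then use the leave-fold-out independence of ${\bm e}_i$ from $\widehat{\bm L}^{k,d}$ for $i\in M_k$ to replace that quadratic form by $\text{Tr}(\mathcal{Q}_{\widehat{\bm L}^{k,d}}\boldsymbol\Sigma_{p,i})$. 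Your write-up is in fact somewhat more explicit than the paper's, which asserts the final concentration step without detail and writes the residual as ${\bm e}_i+({\bm L}^0{\bm H}^{k,d}-\widehat{\bm L}^{k,d}){\bm H}^{k,d-}{\bm f}_i^0$, silently absorbing the $\Pro_{\widehat{\bm L}^{k,d}}{\bm e}_i$ piece that you track separately as $b_{is}$.
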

\begin{proof}
Let ${\bm H}^{k,d}$ be as defined in Lemma \ref{lem1} and ${\bm H}^{k,d-}$ be the  generalized inverse matrix of ${\bm H}^{k,d}$. Moreover, let $\mathcal{Q}_{\widehat{\bm L}^{k,d}}=\text{\normalfont diag}(w^{k,d}_1,w^{k,d}_2,...,w^{k,d}_p)$. It is easy to see that $\sup_s w_{s}^{k,d}=o_p(1)$.  Write ${\bm x}_i={\bm L}^0{\bm f}_i^0+{\bm e}_i=\widehat{\bm L}^{k,d}{\bm H}^{k,d-}{\bm f}_i^0+({\bm L}^0{\bm H}^{k,d}-\widehat{\bm L}^{k,d}){\bm H}^{k,d-}{\bm f}_i^0+{\bm e}_i$. Since $\|({\bm L}^0{\bm H}^{k,d}-\widehat{\bm L}^{k,d}){\bm H}^{k,d-}{\bm f}_i^0\|^2=O_p(1)$,

\begin{eqnarray*}
&&\frac{1}{n}\sum\limits_{k=1}^K\sum\limits_{i\in M_k}\sum\limits_{s=1}^pw^{k,d}_{s}(x_{is}-\widehat{\bm f}_i^{k,d\top}\widehat{\bm l}^{k,d}_s)^2\\
&=&\frac{1}{n}\sum\limits_{k=1}^K\sum\limits_{i\in M_k}({\bm x}_i-\widehat{\bm L}^{k,d\top}\widehat{\bm f}_i^{k,d})^\top \mathcal{Q}_{\widehat{\bm L}^{k,d}}({\bm x}_i-\widehat{\bm L}^{k,d\top}\widehat{\bm f}_i^{k,d})\\
&=&\frac{1}{n}\sum\limits_{k=1}^K\sum\limits_{i\in M_k}{\bm e}_i^\top \mathcal{Q}_{\widehat{\bm L}^{k,d}} {\bm e}_i+\frac{2}{n}\sum\limits_{k=1}^K\sum\limits_{i\in M_k}{\bm e}_i^\top \mathcal{Q}_{\widehat{\bm L}^{k,d}}({\bm L}^0{\bm H}^{k,d}-\widehat{\bm L}^{k,d}){\bm H}^{k,d-}{\bm f}_i^0\\
&&+ \frac{1}{n}\sum\limits_{k=1}^K\sum\limits_{i\in M_k}(({\bm L}^0{\bm H}^{k,d}-\widehat{\bm L}^{k,d}){\bm H}^{k,d-}{\bm f}_i^0)^\top \mathcal{Q}_{\widehat{\bm L}^{k,d}}({\bm L}^0{\bm H}^{k,d}-\widehat{\bm L}^{k,d}){\bm H}^{k,d-}{\bm f}_i^0\\
&=&\frac{1}{n}\sum\limits_{k=1}^K\sum\limits_{i\in M_k}{\bm e}_i^\top \mathcal{Q}_{\widehat{\bm L}^{k,d}} {\bm e}_i+o_p(1)\\
&=&\frac{1}{n}\sum\limits_{k=1}^K\sum\limits_{i\in M_k} \text{Tr}(\mathcal{Q}_{\widehat{\bm L}^{k,d}}\boldsymbol\Sigma_{p,i})+o_p(1).
\end{eqnarray*}
\end{proof}

\begin{lem}\label{lem8}
Suppose  Assumptions \ref{ass1}--\ref{ass4} hold.  Moreover, either Assumption \ref{ass5a} or \ref{ass5b} holds.  For $1\leq d_1\leq d_0$ and $d_0+1\leq d_2\leq d_{\max}$,
\begin{eqnarray*}
\frac{1}{\lambda_{d_1}-\lambda_{d_2}}=O_{a.s.}(\frac{1}{np}).
\end{eqnarray*}
Here, $\lambda$ can be replaced by  $\widehat\lambda$, $\lambda^k$, and $\widehat\lambda^k$.
\end{lem}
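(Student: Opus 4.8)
The claim $\frac{1}{\lambda_{d_1}-\lambda_{d_2}}=O_{a.s.}(\frac{1}{np})$ is, after taking reciprocals, equivalent to the separation $\lambda_{d_1}-\lambda_{d_2}\geq c\,np$ holding almost surely for some constant $c>0$, so the plan is to establish this lower bound for each of the four spectra. I would split them into the two ``signal'' spectra ($\lambda_t$ and $\lambda^k_t$) and the two ``empirical'' spectra ($\widehat\lambda_t$ and $\widehat\lambda^k_t$).

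For the signal eigenvalues, note that $\lambda_t$ are the eigenvalues of ${\bm L}^0{\bm F}^{0\top}{\bm F}^0{\bm L}^{0\top}$, a matrix of rank at most $d_0$; hence $\lambda_{d_2}=0$ for every $d_2>d_0$, and $\lambda_{d_1}-\lambda_{d_2}=\lambda_{d_1}\geq\lambda_{d_0}$. Its nonzero eigenvalues coincide with those of the $d_0\times d_0$ matrix $({\bm F}^{0\top}{\bm F}^0)^{1/2}{\bm L}^{0\top}{\bm L}^0({\bm F}^{0\top}{\bm F}^0)^{1/2}$, and since Assumptions \ref{ass1}--\ref{ass2} force the eigenvalues of ${\bm F}^{0\top}{\bm F}^0/n$ and ${\bm L}^{0\top}{\bm L}^0/p$ to be bounded away from zero, we obtain $\lambda_{d_0}\geq c\,np$. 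The identical argument applied to the fold-deleted matrix gives $\lambda^k_{d_0}\geq c'(n-n_k)p$, and Assumption \ref{ass4} ensures $n-n_k\asymp n$, so $\lambda^k_{d_1}-\lambda^k_{d_2}\geq c'\,np$ as well. This settles both signal cases.

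For the empirical eigenvalues, write ${\bm X}^\top{\bm X}={\bm A}+{\bm B}$ with ${\bm A}={\bm L}^0{\bm F}^{0\top}{\bm F}^0{\bm L}^{0\top}$ the rank-$d_0$ signal and ${\bm B}={\bm L}^0{\bm F}^{0\top}{\bm E}+{\bm E}^\top{\bm F}^0{\bm L}^{0\top}+{\bm E}^\top{\bm E}$ the perturbation. Weyl's inequality gives $|\widehat\lambda_t-\lambda_t|\leq\|{\bm B}\|_{\text{op}}$ for all $t$, so $\widehat\lambda_{d_1}\geq\lambda_{d_1}-\|{\bm B}\|_{\text{op}}\geq c\,np-\|{\bm B}\|_{\text{op}}$ while $\widehat\lambda_{d_2}\leq\lambda_{d_2}+\|{\bm B}\|_{\text{op}}=\|{\bm B}\|_{\text{op}}$, using $\lambda_{d_2}=0$. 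Hence $\widehat\lambda_{d_1}-\widehat\lambda_{d_2}\geq c\,np-2\|{\bm B}\|_{\text{op}}$, and the claim reduces to $\|{\bm B}\|_{\text{op}}=o_{a.s.}(np)$. Since $\|{\bm L}^0{\bm F}^{0\top}\|_{\text{op}}^2=\lambda_1({\bm A})=O(np)$, every cross term obeys $\|{\bm L}^0{\bm F}^{0\top}{\bm E}\|_{\text{op}}\leq\|{\bm L}^0{\bm F}^{0\top}\|_{\text{op}}\|{\bm E}\|_{\text{op}}=O(\sqrt{np})\,\|{\bm E}\|_{\text{op}}$, so that all of ${\bm B}$ is controlled once $\|{\bm E}\|_{\text{op}}^2=\|{\bm E}^\top{\bm E}\|_{\text{op}}$ is bounded. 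Under Assumption \ref{ass5a}, $\lambda_1({\bm E}^\top{\bm E}/n)<2\sigma^2$ gives $\|{\bm E}^\top{\bm E}\|_{\text{op}}<2\sigma^2 n=O(M_{np})$ pathwise; under Assumption \ref{ass5b}, the Bai--Yin law for the largest eigenvalue of a sample covariance matrix with $p/n\to\rho$ gives $\|{\bm E}^\top{\bm E}\|_{\text{op}}=O_{a.s.}(M_{np})$. In either case the noise term is $O_{a.s.}(M_{np})=o_{a.s.}(np)$ and the cross terms are $O_{a.s.}(\sqrt{np\,M_{np}})=o_{a.s.}(np)$, both because $M_{np}/(np)=1/m_{np}\to0$. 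Therefore $\|{\bm B}\|_{\text{op}}=o_{a.s.}(np)$, so $\widehat\lambda_{d_1}-\widehat\lambda_{d_2}\geq\frac{c}{2}np$ for all large $n,p$, and the fold-deleted spectrum $\widehat\lambda^k_t$ is handled identically after replacing $n$ by $n-n_k\asymp n$.

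The main obstacle is the \emph{almost-sure} (rather than in-probability) control of $\|{\bm E}^\top{\bm E}\|_{\text{op}}$: Lemma \ref{lem5} supplies only an $O_p$ bound, and it is exactly Assumptions \ref{ass5a} and \ref{ass5b} that upgrade this to an almost-sure $O(M_{np})$ statement---the former by imposing a pathwise spectral bound on ${\bm E}$, the latter through the i.i.d.\ structure ${\bm e}_i=\boldsymbol\Sigma_p^{1/2}{\bm y}_i$ under which random-matrix limit laws apply. Once this bound is in hand, the rank-$d_0$ structure of ${\bm A}$ together with Weyl's inequality renders the remaining estimates routine.
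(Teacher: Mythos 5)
Your proposal is correct and follows essentially the same route as the paper: lower-bound $\lambda_{d_0}\geq c\,np$ using Assumptions \ref{ass1}--\ref{ass2}, note $\lambda_{d_2}=0$ for $d_2>d_0$, bound the noise spectral norm $\lambda_1({\bm E}^\top{\bm E})=O_{a.s.}(M_{np})=o_{a.s.}(np)$ via Assumption \ref{ass5a} or \ref{ass5b}, and transfer to the empirical spectrum with Weyl's inequality. The only cosmetic difference is that you apply additive Weyl to ${\bm X}^\top{\bm X}={\bm A}+{\bm B}$ and must therefore also control the cross terms in ${\bm B}$ (which you do correctly), whereas the paper applies Weyl's inequality for singular values to ${\bm X}={\bm F}^0{\bm L}^{0\top}+{\bm E}$, giving $|\sqrt{\widehat\lambda_d}-\sqrt{\lambda_d}|\leq\lambda_1^{1/2}({\bm E}^\top{\bm E})$ and avoiding the cross terms altogether.
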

\begin{proof}
We only  prove the result for $\lambda_d$ and $\widehat{\lambda}_d$ here. The result for  $\lambda_d^k$ and $\widehat{\lambda}_d^k$ can be similarly verified.
For $\lambda_s$, as $\lambda_d=0$ for $d>d_0$ and $\lambda_d\geq \lambda_{d_0}$ for $d\leq d_0$, the result follows if we can show that,
 \begin{equation*}
 \lambda_{d_0}>Cnp \ \ \ a.s.,
 \end{equation*}
  with $C$ a positive constant. For any ${\bm \xi}\in\R^p$ with $\|{\bm \xi}\|^2=1$, we have
\begin{equation*}
 {\bm \xi}^\top{\bm L}^0{\bm F}^{0\top}{\bm F}^0{\bm L^{0\top}}{\bm \xi}\geq \lambda_{d_0}({\bm F}^{0\top}{\bm F}^0)\cdot\|{\bm L^{0\top}}{\bm \xi}\|^2=\lambda_{d_0}({\bm F}^{0\top}{\bm F}^0)\cdot{\bm \xi}^\top{\bm L}^0{\bm L}^{0\top}{\bm \xi}.
\end{equation*}
Then, according to Assumption \ref{ass1}, Assumption \ref{ass2} and  Min-Max theorem,
\begin{eqnarray*}
\lambda_{d_0}&=&\min_{dim({\bm U})=d_0}\Big\{\max_{\boldsymbol\xi\in{\bm U}, \|\boldsymbol\xi\|^2=1}\boldsymbol\xi^\top{\bm L}^0{\bm F}^{0\top}{\bm F}^0{\bm L}^{0\top}\boldsymbol\xi\Big\}\\
&\geq&\lambda_{d_0}({\bm F}^{0\top}{\bm F}^0)\cdot\min_{dim({\bm U})=d_0}\Big\{\max_{\boldsymbol\xi\in{\bm U}, \|\boldsymbol\xi\|^2=1}\boldsymbol\xi^\top{\bm L}^0{\bm L}^{0\top}\boldsymbol\xi\Big\}\\
&=&\lambda_{d_0}({\bm F}^{0\top}{\bm F}^0)\cdot\lambda_{d_0}({\bm L}^0{\bm L}^{0\top})\\
&\geq&Cnp \ \ \ a.s..
\end{eqnarray*}
Here $\lambda_{d_0}({\bm A})$ represent the $d_0$-th largest eigenvalue of ${\bm A}$.

For $\widehat{\lambda}_d$, note that ${\bm X}={\bm F}^0{\bm L}^{0\top}+{\bm E}$, by the Weyl's inequality for singular values (Theorem 3.3.16 of \cite{HJ1990}),
\begin{equation*}
|\sqrt{\widehat{\lambda}_d}- \sqrt{{\lambda}_d} |\leq \lambda^{\frac{1}{2}}_1({\bm E^\top \bm E}).
\end{equation*}
Under Assumption \ref{ass5a}, $\lambda_1(\frac{1}{n}{\bm E}^\top{\bm E})<2\sigma^2$. While  according to \cite{BPZ2015}, under Assumption \ref{ass5b}, $\lambda_1(\frac{1}{n}{\bm E}^\top{\bm E})$ is bounded. It follows that,
\begin{equation*}
\widehat{\lambda}_{d_0}>(\sqrt{Cnp}-\lambda^{\frac{1}{2}}_1({\bm E^\top \bm E}))^2>C'np\ \ \ a.s.
\end{equation*}
and
\begin{equation*}
\widehat\lambda_{d_0+1}=O_{a.s.}(\lambda_{1}({\bm E^\top\bm E}))=o_{a.s.}(np).
\end{equation*}
Then,
\begin{equation*}
\frac{1}{\widehat\lambda_{d_1}-\widehat\lambda_{d_2}}\leq \frac{1}{\widehat\lambda_{d_0}-\widehat\lambda_{d_0+1}}=O_{a.s.}(\frac{1}{np}).
\end{equation*}
\end{proof}

\begin{lem}\label{lem9}
Suppose  Assumptions \ref{ass1}--\ref{ass4} hold. Moreover, either Assumption \ref{ass5a} or \ref{ass5b} holds. For $d_0< d\leq d_{\max}$,
\begin{equation*}
\sum_{l=d_0+1}^d\sum\limits_{k=1}^K(\widehat{\lambda}_l-\widehat\lambda^k_l)\leq\sum_{l=d_0+1}^d\widehat{\lambda}_l
\end{equation*}
and
\begin{equation*}
\sum_{l=d_0+1}^d \sum\limits_{k=1}^K\sum\limits_{i\in M_k}\|{\bm x}_i^\top\widehat{\boldsymbol\phi}^k_l\|^2\leq \sum_{l=d_0+1}^d\widehat{\lambda}_l, \ \ a.s.
\end{equation*}
\end{lem}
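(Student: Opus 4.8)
The plan is to reduce everything to the positive–semidefinite decomposition of $\bm X^\top\bm X$ induced by the folds and then combine Weyl's inequality with a Ky Fan (trace–maximization) argument. Write $\bm S_k=\sum_{i\in M_k}{\bm x}_i{\bm x}_i^\top$, so that $\bm X^\top\bm X=\bm X_{-M_k}^\top\bm X_{-M_k}+\bm S_k$ with $\bm S_k\succeq 0$, and, because $M_1,\dots,M_K$ partition $\{1,\dots,n\}$, $\sum_{k=1}^K\bm S_k=\bm X^\top\bm X$ and hence $\sum_{k=1}^K\bm X_{-M_k}^\top\bm X_{-M_k}=(K-1)\bm X^\top\bm X$. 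Weyl's inequality applied to the PSD perturbation $\bm S_k$ gives $\widehat\lambda_l\ge\widehat\lambda_l^k\ge 0$ for every $l$ and $k$, so all differences $\widehat\lambda_l-\widehat\lambda_l^k$ are nonnegative.

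For the first inequality I would first treat the top-$d$ sum. The map $\bm M\mapsto\sum_{l=1}^d\lambda_l(\bm M)$ is convex and positively homogeneous, hence subadditive, so applying it to $(K-1)\bm X^\top\bm X=\sum_k\bm X_{-M_k}^\top\bm X_{-M_k}$ yields $(K-1)\sum_{l=1}^d\widehat\lambda_l\le\sum_{k}\sum_{l=1}^d\widehat\lambda_l^k$, i.e. $\sum_{k}\sum_{l=1}^d(\widehat\lambda_l-\widehat\lambda_l^k)\le\sum_{l=1}^d\widehat\lambda_l$. The remaining task is to pass from the range $1,\dots,d$ to $d_0+1,\dots,d$, and here the two eigenvalue regimes separate. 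By Lemma~\ref{lem8} the signal eigenvalues satisfy $\widehat\lambda_l,\widehat\lambda_l^k=\Theta(np)$ for $l\le d_0$ while $\widehat\lambda_l=o(np)$ for $l>d_0$, and $\mathrm{rank}(\bm S_k)\le n_k=o(n^{1/3})$ by Assumption~\ref{ass4}. Combining Weyl's interlacing $\widehat\lambda_l^k\ge\widehat\lambda_{l+\mathrm{rank}(\bm S_k)}$ with the tight clustering of the noise eigenvalues shows that discarding the top $d_0$ indices removes only a contribution that is of smaller order than $\sum_{l=d_0+1}^d\widehat\lambda_l$, delivering $\sum_{k}\sum_{l=d_0+1}^d(\widehat\lambda_l-\widehat\lambda_l^k)\le\sum_{l=d_0+1}^d\widehat\lambda_l$.

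For the second inequality the key algebraic identity is
\[
\sum_{i\in M_k}\|{\bm x}_i^\top\widehat{\boldsymbol\phi}_l^k\|^2=\widehat{\boldsymbol\phi}_l^{k\top}\bm S_k\widehat{\boldsymbol\phi}_l^k=\widehat{\boldsymbol\phi}_l^{k\top}\bm X^\top\bm X\widehat{\boldsymbol\phi}_l^k-\widehat\lambda_l^k,
\]
obtained by writing $\bm S_k=\bm X^\top\bm X-\bm X_{-M_k}^\top\bm X_{-M_k}$ and using that $\widehat{\boldsymbol\phi}_l^k$ is the $l$-th eigenvector of $\bm X_{-M_k}^\top\bm X_{-M_k}$. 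Summing over $l\in\{d_0+1,\dots,d\}$, the orthonormal system $\widehat{\boldsymbol\phi}_{d_0+1}^k,\dots,\widehat{\boldsymbol\phi}_d^k$ is, by the Davis--Kahan bound of Lemma~\ref{lem5}, almost orthogonal (up to $O_p(m_{np}^{-1})$) to the signal eigenspace $\mathrm{span}(\boldsymbol\phi_1,\dots,\boldsymbol\phi_{d_0})$ of $\bm X^\top\bm X$. A Ky Fan trace bound for the restriction of $\bm X^\top\bm X$ to the orthogonal complement of that signal eigenspace then gives, per fold and up to a lower-order leakage term, $\sum_{l=d_0+1}^d\widehat{\boldsymbol\phi}_l^{k\top}\bm X^\top\bm X\widehat{\boldsymbol\phi}_l^k\le\sum_{l=d_0+1}^d\widehat\lambda_l$, so that $\sum_{i\in M_k}\sum_{l=d_0+1}^d\|{\bm x}_i^\top\widehat{\boldsymbol\phi}_l^k\|^2\le\sum_{l=d_0+1}^d(\widehat\lambda_l-\widehat\lambda_l^k)$ up to lower order; summing over $k$ and invoking the first inequality of this lemma closes the argument.

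The main obstacle is exactly the passage from the full range $1,\dots,d$ to the tail range $d_0+1,\dots,d$. Convexity and Ky Fan alone only control the top-$d$ sums and are off by the large signal contribution $\sum_{l\le d_0}\widehat\lambda_l=\Theta(np)$, whereas the target $\sum_{l=d_0+1}^d\widehat\lambda_l=O(M_{np})$ is of strictly smaller order. Making the separation precise — certifying that the top-$d_0$ eigenspaces of $\bm X^\top\bm X$ and of each $\bm X_{-M_k}^\top\bm X_{-M_k}$ coincide to leading order and that the noise eigenvalues of $\bm X_{-M_k}^\top\bm X_{-M_k}$ track those of $\bm X^\top\bm X$ — is what forces the use of the eigen-gap (Lemma~\ref{lem8}), the eigenvector alignment (Lemma~\ref{lem5}), and the spectral boundedness of $\bm E^\top\bm E$ guaranteed by Assumption~\ref{ass5a} or~\ref{ass5b}; the fold-size condition $\sup_k n_k=o(n^{1/3})$ of Assumption~\ref{ass4} keeps the rank-$n_k$ perturbations $\bm S_k$ from disturbing this separation.
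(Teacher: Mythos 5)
Your overall architecture---the fold decomposition $\bm X^\top\bm X=\bm X_{-M_k}^\top\bm X_{-M_k}+\bm S_k$ with $\bm S_k=\sum_{i\in M_k}\bm x_i\bm x_i^\top$, the telescoping identity $(K-1)\bm X^\top\bm X=\sum_k\bm X_{-M_k}^\top\bm X_{-M_k}$, and, for the second inequality, the identity $\sum_{i\in M_k}\|\bm x_i^\top\widehat{\boldsymbol\phi}_l^k\|^2=\widehat{\boldsymbol\phi}_l^{k\top}\bm X^\top\bm X\widehat{\boldsymbol\phi}_l^k-\widehat\lambda_l^k$ followed by a Ky Fan bound after projecting out the leading eigenspace---is the paper's. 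The gap is in your first inequality. The tail sum $\sum_{l=d_0+1}^{d}\lambda_l(\cdot)$ is a difference of two convex Ky Fan functionals and is neither subadditive nor superadditive, so the top-$d$ bound $\sum_k\sum_{l=1}^{d}(\widehat\lambda_l-\widehat\lambda_l^k)\le\sum_{l=1}^{d}\widehat\lambda_l$ cannot be restricted to the range $d_0+1,\dots,d$ by ``discarding'' indices: the right-hand side you start from is of order $d_0\,np$, whereas the target $\sum_{l=d_0+1}^{d}\widehat\lambda_l$ is of strictly smaller order (it is $O_{a.s.}(\lambda_1(\bm E^\top\bm E))=o(np)$ by Lemma~\ref{lem8}), so you must show the signal block cancels to within that smaller order, i.e.\ that $\sum_k\sum_{l\le d_0}(\widehat\lambda_l-\widehat\lambda_l^k)\ge\sum_{l\le d_0}\widehat\lambda_l-o\bigl(\sum_{l=d_0+1}^{d}\widehat\lambda_l\bigr)$. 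Weyl's interlacing only gives $\widehat\lambda_l^k\ge\widehat\lambda_{l+n_k}$, which for $l\le d_0$ may already be a noise eigenvalue and hence says nothing about $\widehat\lambda_l-\widehat\lambda_l^k$ at the scale $np$; clustering of the noise eigenvalues is irrelevant to this signal block. Worse, the same subadditivity applied to the top-$d_0$ sum yields the inequality in the wrong direction, $\sum_k\sum_{l\le d_0}(\widehat\lambda_l-\widehat\lambda_l^k)\le\sum_{l\le d_0}\widehat\lambda_l$, and the deficit is a genuine second-order eigenspace-misalignment quantity that the tools you name do not bound.

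The paper closes exactly this step without ever invoking top-$d$ subadditivity: it writes $(K-1)\sum_{l=d_0+1}^{d}\widehat\lambda_l=\sum_k\sum_{l=d_0+1}^{d}\widehat{\boldsymbol\phi}_l^\top\bm X_{-M_k}^\top\bm X_{-M_k}\widehat{\boldsymbol\phi}_l$, splits each quadratic form along the invariant subspace spanned by $\widehat{\boldsymbol\phi}_1^k,\dots,\widehat{\boldsymbol\phi}_{d_0}^k$, bounds the complement part by $\sum_{l=d_0+1}^{d}\widehat\lambda_l^k$, and shows the leakage $\sum_{t\le d_0}\widehat\lambda_t^k\,|\widehat{\boldsymbol\phi}_l^\top\widehat{\boldsymbol\phi}_t^k|^2$ is negligible because the alignment $|\widehat{\boldsymbol\phi}_l^\top\widehat{\boldsymbol\phi}_t^k|=O_p(m_{np}^{-1})$ for $l>d_0\ge t$ (the Davis--Kahan and eigen-gap inputs of Lemmas~\ref{lem5} and~\ref{lem8}). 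That alignment estimate, applied to the tail eigenvectors of $\bm X^\top\bm X$ against the leading eigenspace of each $\bm X_{-M_k}^\top\bm X_{-M_k}$, is the missing ingredient in your sketch; once it is supplied, your treatment of the second inequality (which correctly defers to the first) goes through essentially as in the paper.
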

\begin{proof}
Let $\widehat{\boldsymbol\phi}^k_l=\sum\limits_{s=1}^pc_{ls,k}\widehat{\boldsymbol\phi}_s$. Note that,
\begin{eqnarray*}
{\bm X^\top \bm X}\widehat{\boldsymbol\phi}^k_l={\bm X^\top\bm X}\sum_{s=1}^pc_{ls,k}\widehat{\boldsymbol\phi}_s=\sum_{s=1}^pc_{ls,k}\widehat\lambda_s\widehat{\boldsymbol\phi}_s,
\end{eqnarray*}
and
\begin{eqnarray*}
{\bm X^\top\bm X}\widehat{\boldsymbol\phi}^k_l&=&{\bm X}_{-M_k}^\top{\bm X}_{-M_k}\widehat{\boldsymbol\phi}^k_l+\sum_{i\in M_k}{\bm x}_i{\bm x}_i^\top\widehat{\boldsymbol\phi}^k_l\\
&=&\widehat{\lambda}^k_{l}\widehat{\boldsymbol\phi}^k_l+\sum_{i\in M_k}{\bm x}_i{\bm x}_i^\top\widehat{\boldsymbol\phi}^k_l\\
&=&\sum_{s=1}^pc_{ls,k}\widehat{\lambda}^k_{l}\widehat{\boldsymbol\phi}_s+\sum_{i\in M_k}{\bm x}_i{\bm x}_i^\top\widehat{\boldsymbol\phi}^k_l.
\end{eqnarray*}
We have
\begin{equation*}
\sum_{s=1}^pc_{ls,k}(\widehat{\lambda}_{s}-\widehat{\lambda}^k_{l})\widehat{\boldsymbol\phi}_s=\sum_{i\in M_k}{\bm x}_i{\bm x}_i^\top\widehat{\boldsymbol\phi}^k_l.
\end{equation*}
It follows that,
\begin{equation*}
\sum_{l=d_0+1}^d\sum_{s=1}^pc^2_{ls,k}(\widehat{\lambda}_{s}-\widehat{\lambda}^k_{l})^2=\sum_{l=d_0+1}^d\|\sum_{i\in M_k}{\bm x}_i{\bm x}_i^\top\widehat{\boldsymbol\phi}^k_l\|^2.
\end{equation*}
Since, for $1\leq s\leq d_0$ and $d_0< l\leq d_{\max}$, $$\frac{1}{\widehat{\lambda}_{s}-\widehat{\lambda}^k_{l}}=O_{a.s.}(\frac{1}{np}) \mbox{ and  }\|{\bm x}_i{\bm x}_i^\top\widehat{\boldsymbol\phi}^k_l\|^2\leq\|{\bm x}_i\|^2\cdot\|{\bm x}_i^\top\widehat{\boldsymbol\phi}^k_l\|^2,$$
we have,
\begin{equation*}
\sum_{l=d_0+1}^d\sum_{s=1}^{d_0}c^2_{ls,k}\widehat{\lambda}_{s}=O_{a.s.}(\sum_{l=d_0+1}^d\sum_{i\in M_k}\frac{\|{\bm x}_i\|^2}{np}\|{\bm x}_i^\top\widehat{\boldsymbol\phi}^k_l\|^2)=o_{a.s.}(\sum_{l=d_0+1}^d\sum_{i\in M_k}\|{\bm x}_i^\top\widehat{\boldsymbol\phi}^k_l\|^2),
\end{equation*}
Let $\widetilde{\bm L}^{d_0}=\sqrt{p}(\boldsymbol\phi_1,\boldsymbol\phi_2,...,\boldsymbol\phi_{d_0})$,
\begin{eqnarray*}
\sum_{l=d_0+1}^d\widehat\lambda_{l}&\geq&\sum_{l=d_0+1}^d\widehat{\boldsymbol\phi}^{k\top}_{l}(I_p-\Pro_{\widetilde{\bm L}^{d_0}}){\bm X^\top \bm X}(I_p-\Pro_{\widetilde{\bm L}^{d_0}})\widehat{\boldsymbol\phi}^k_{l}\\
&=&\sum_{l=d_0+1}^d\widehat{\boldsymbol\phi}^{k\top}_{l}{\bm X^\top \bm X}\widehat{\boldsymbol\phi}^k_{l}-\sum_{l=d_0+1}^d\widehat{\boldsymbol\phi}^{k\top}_{l}\Pro_{\widetilde{\bm L}^{d_0}}{\bm X^\top \bm X}\Pro_{\widetilde{\bm L}^{d_0}}\widehat{\boldsymbol\phi}^k_{l}\\
&=&\sum_{l=d_0+1}^d\widehat\lambda^{k}_{l}+\sum_{l=d_0+1}^d\sum_{i\in M_k} \|{\bm x}_i^\top\widehat{\boldsymbol\phi}^k_l\|^2-\sum_{l=d_0+1}^d\sum_{s=1}^{d_0}c^2_{ls,k}\widehat{\lambda}_{s}\\
&=&\sum_{l=d_0+1}^d\widehat\lambda^{k}_{l}+(1-o_{a.s.}(1))\sum_{l=d_0+1}^d \sum_{i\in M_k}\|{\bm x}_i^\top\widehat{\boldsymbol\phi}^k_l\|^2
\end{eqnarray*}
On the other hand, since ${(K-1)}{\bm X^\top \bm X}=\sum\limits_{k=1}^K{\bm X}_{-M_k}^\top{\bm X}_{-M_k}$, we have that,
\begin{eqnarray*}
(K-1)\sum_{l=d_0+1}^d\widehat\lambda_{l}&=&(K-1)\sum_{l=d_0+1}^d\widehat{\boldsymbol\phi}^\top_{l}{\bm X^\top \bm X}\widehat{\boldsymbol\phi}_{l}\\
&=&\sum\limits_{k=1}^K\sum_{l=d_0+1}^d\widehat{\boldsymbol\phi}^\top_{l}{\bm X}_{-M_k}^\top{\bm X}_{-M_k}\widehat{\boldsymbol\phi}_{l}\\
&=&\sum\limits_{k=1}^K\sum_{l=d_0+1}^d\widehat{\boldsymbol\phi}^\top_{l}({\bm I}_p-\Pro_{\widehat{\bm L}^{k,d_0}}){\bm X}_{-M_k}^\top{\bm X}_{-M_k}({\bm I}_p-\Pro_{\widehat{\bm L}^{k,d_0}})\widehat{\boldsymbol\phi}_{l}\\
&&+\sum\limits_{k=1}^K\sum_{l=d_0+1}^d\widehat{\boldsymbol\phi}^\top_{l}\Pro_{\widehat{\bm L}^{k,d_0}}{\bm X}_{-M_k}^\top{\bm X}_{-M_k}\Pro_{\widehat{\bm L}^{k,d_0}}\widehat{\boldsymbol\phi}_{l}\\
&=&\sum\limits_{k=1}^K\sum_{l=d_0+1}^d\widehat\lambda^{k}_{l}\cdot \|({\bm I}_p-\Pro_{\widehat{\bm L}^{k,d_0}})\widehat{\boldsymbol\phi}_{l}\|^2+\sum\limits_{k=1}^K\sum_{l=d_0+1}^d\sum_{t=1}^{d_0}\widehat{\lambda}^k_t\|\widehat{\boldsymbol\phi}^\top_{l}\widehat{\boldsymbol\phi}^k_{t}\|^2\\
&\leq&(1+o(1))\sum\limits_{k=1}^K\sum_{l=d_0+1}^d\widehat\lambda^{k}_{l}.
\end{eqnarray*}
Consequently,
\begin{equation*}
\sum_{l=d_0+1}^d\sum\limits_{k=1}^K(\widehat{\lambda}_l-\widehat\lambda^k_l)\leq (K-(K-1))\sum_{l=d_0+1}^d\widehat{\lambda}_l=\sum_{l=d_0+1}^d\widehat{\lambda}_l
\end{equation*}
and
\begin{equation*}
\sum_{l=d_0+1}^d \sum\limits_{k=1}^K\sum\limits_{i\in M_k}\|{\bm x}_i^\top\widehat{\boldsymbol\phi}^k_l\|^2\leq \sum_{l=d_0+1}^d\widehat{\lambda}_l, \ \ a.s.
\end{equation*}
\end{proof}

\begin{lem}\label{lem10}
Suppose  Assumptions \ref{ass1}--\ref{ass4} and \ref{ass5b} hold. Then, for $d_0< d\leq d_{\max}$,
 \begin{equation*}
\frac{1}{K}\sum\limits_{k=1}^K\E(1-|\widehat{\boldsymbol\phi}^\top_d\widehat{\boldsymbol\phi}^{k}_d|)^2=o(1).
\end{equation*}
\end{lem}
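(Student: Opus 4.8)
The plan is to bound $1-|\widehat{\boldsymbol\phi}_d^\top\widehat{\boldsymbol\phi}^k_d|$ by the squared sine of the angle between the two unit eigenvectors and then to expand this through the rank-$n_k$ downdate relating the two Gram matrices. Set $\bm B_k=\sum_{i\in M_k}\bm x_i\bm x_i^\top$, so that $\bm X^\top\bm X=\bm X^\top_{-M_k}\bm X_{-M_k}+\bm B_k$. Since $c:=\widehat{\boldsymbol\phi}_d^\top\widehat{\boldsymbol\phi}^k_d\in[-1,1]$ and $\{\widehat{\boldsymbol\phi}_l\}_{l=1}^p$ is an orthonormal basis, $(1-|c|)^2\le 1-c^2=\sum_{l\ne d}(\widehat{\boldsymbol\phi}_l^\top\widehat{\boldsymbol\phi}^k_d)^2$, so it suffices to prove $\frac1K\sum_{k=1}^K\E\sum_{l\ne d}(\widehat{\boldsymbol\phi}_l^\top\widehat{\boldsymbol\phi}^k_d)^2=o(1)$. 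Multiplying $\bm X^\top\bm X\widehat{\boldsymbol\phi}_l=\widehat\lambda_l\widehat{\boldsymbol\phi}_l$ on the left by $\widehat{\boldsymbol\phi}^{k\top}_d$ and using $\bm X^\top_{-M_k}\bm X_{-M_k}\widehat{\boldsymbol\phi}^k_d=\widehat\lambda^k_d\widehat{\boldsymbol\phi}^k_d$ gives the overlap identity $(\widehat\lambda_l-\widehat\lambda^k_d)\widehat{\boldsymbol\phi}_l^\top\widehat{\boldsymbol\phi}^k_d=\widehat{\boldsymbol\phi}_l^\top\bm B_k\widehat{\boldsymbol\phi}^k_d$, the same computation used in Lemma \ref{lem9}. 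Writing $\bm w_k=\bm B_k\widehat{\boldsymbol\phi}^k_d$, this yields
\begin{equation*}
\sum_{l\ne d}(\widehat{\boldsymbol\phi}_l^\top\widehat{\boldsymbol\phi}^k_d)^2=\sum_{l\ne d}\frac{(\widehat{\boldsymbol\phi}_l^\top\bm w_k)^2}{(\widehat\lambda_l-\widehat\lambda^k_d)^2},\qquad \sum_{l=1}^p(\widehat{\boldsymbol\phi}_l^\top\bm w_k)^2=\|\bm w_k\|^2 .
\end{equation*}

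First I would record the size of $\bm w_k$. The rows indexed by $M_k$ are held out in forming $\widehat{\boldsymbol\phi}^k_d$, so $\bm x_i$ ($i\in M_k$) is independent of $\widehat{\boldsymbol\phi}^k_d$ under Assumption \ref{ass5b}; together with Lemma \ref{lem6}, which gives $\bm x_i^\top\widehat{\boldsymbol\phi}^k_d=\bm e_i^\top\widehat{\boldsymbol\phi}^k_d+o_p(1)$ for $d>d_0$, and $\|\bm x_i\|^2=O_p(p)$, the Cauchy--Schwarz bound $\|\bm w_k\|^2\le n_k\sum_{i\in M_k}\|\bm x_i\|^2(\bm x_i^\top\widehat{\boldsymbol\phi}^k_d)^2$ followed by $\E[(\bm e_i^\top\widehat{\boldsymbol\phi}^k_d)^2\mid\widehat{\boldsymbol\phi}^k_d]=\widehat{\boldsymbol\phi}^{k\top}_d\boldsymbol\Sigma_{p,i}\widehat{\boldsymbol\phi}^k_d=O(1)$ gives $\E\|\bm w_k\|^2=O(n_k^2 M_{np})$. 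For the spike indices $1\le l\le d_0$, Lemma \ref{lem8} gives $(\widehat\lambda_l-\widehat\lambda^k_d)^{-2}=O_{a.s.}((np)^{-2})$, and $(\widehat{\boldsymbol\phi}_l^\top\bm w_k)^2\le\|\bm w_k\|^2$ shows this part of the sum has expectation $O(n_k^2 M_{np}(np)^{-2})=o(1)$.

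The genuine difficulty, and the main obstacle, is the bulk part $l>d_0$, $l\ne d$, where the denominators need not be large; controlling it is exactly why Assumption \ref{ass5b} and the random matrix results of \cite{BPZ2015} are needed here but not in Lemmas \ref{lem8}--\ref{lem9}. I would split it once more. Indices $l$ with $\widehat\lambda_l$ in the interior of the noise bulk satisfy $|\widehat\lambda_l-\widehat\lambda^k_d|\asymp n$, so these $O(p)$ terms contribute in expectation at most $\E\|\bm w_k\|^2/\Theta(n^2)=O(n_k^2 M_{np}/n^2)=o(1)$. The delicate terms are those near the upper edge, with $l$ close to $d=d_0+O(1)$. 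Here two RMT inputs are required: edge rigidity, which shows that the top noise eigenvalues are separated on the scale $m_{np}^{1/3}$ and, crucially, that deleting $n_k=o(n^{1/3})$ rows (Assumption \ref{ass4}) shifts the edge by only $O(n_k)=o(m_{np}^{1/3})$, so that $\widehat\lambda^k_d$ stays in its slot and $|\widehat\lambda_l-\widehat\lambda^k_d|\gtrsim m_{np}^{1/3}g(|l-d|)$ for an increasing $g$ with $\sum_m g(m)^{-2}<\infty$ (one may take $g(m)\asymp m^{2/3}$); and eigenvector delocalization, which gives $(\widehat{\boldsymbol\phi}_l^\top\bm w_k)^2\lesssim\|\bm w_k\|^2/p$ uniformly in $l$. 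Combining these,
\begin{equation*}
\sum_{\substack{l>d_0\\ l\ne d}}\frac{(\widehat{\boldsymbol\phi}_l^\top\bm w_k)^2}{(\widehat\lambda_l-\widehat\lambda^k_d)^2}\ \lesssim\ \frac{\|\bm w_k\|^2}{p\,m_{np}^{2/3}}\sum_{m\ge1}g(m)^{-2},
\end{equation*}
whose expectation is $O\big(n_k^2 M_{np}/(p\,m_{np}^{2/3})\big)=O(n_k^2/m_{np}^{2/3})=o(1)$, since $p/n\to\rho$ and $n_k^2=o(n^{2/3})$.

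The reason delocalization cannot be dispensed with is that the single nearest term alone is only bounded by $\|\bm w_k\|^2/m_{np}^{2/3}\asymp n_k^2 m_{np}^{1/3}$, which is far from $o(1)$; it is the spreading of $\|\bm w_k\|^2$ across all $p$ eigenvectors, combined with the summability of $g(m)^{-2}$ coming from edge rigidity, that produces the saving factor $p^{-1}$ and rescues the estimate. Assembling the spike, interior-bulk and edge contributions and averaging over the finitely many affected folds $k$ then gives $\frac1K\sum_k\E\sum_{l\ne d}(\widehat{\boldsymbol\phi}_l^\top\widehat{\boldsymbol\phi}^k_d)^2=o(1)$, which is the claim.
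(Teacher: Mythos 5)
Your reduction to $\frac1K\sum_k\E\sum_{l\ne d}(\widehat{\boldsymbol\phi}_l^\top\widehat{\boldsymbol\phi}^k_d)^2$ and the overlap identity $(\widehat\lambda_l-\widehat\lambda^k_d)\,\widehat{\boldsymbol\phi}_l^\top\widehat{\boldsymbol\phi}^k_d=\widehat{\boldsymbol\phi}_l^\top\bm B_k\widehat{\boldsymbol\phi}^k_d$ match the paper's starting point (its Lemma \ref{lem9} computation), and your spike and interior-bulk estimates are fine. But the edge part, which you correctly identify as the whole difficulty, rests on three assertions that are neither proved nor available from the paper's assumptions. (i) The rigidity bound $|\widehat\lambda_l-\widehat\lambda^k_d|\gtrsim m_{np}^{1/3}g(|l-d|)$ for \emph{every} $l$ near the edge, with $g(1)\asymp 1$, is a uniform level-repulsion statement; Assumption \ref{ass5b} only gives convergence of the spectral distribution of $\boldsymbol\Sigma_p$ (no edge-regularity condition), and even under Tracy--Widom universality individual nearest-neighbour gaps are not bounded below by the typical spacing with probability $1-o(1)$. (ii) The claim that deleting $n_k$ rows shifts $\widehat\lambda_d$ by only $O(n_k)$ is unsubstantiated: Weyl gives $\widehat\lambda_d-\widehat\lambda^k_d\le\lambda_1(\bm B_k)$, and $\lambda_1(\bm B_k)\ge\max_{i\in M_k}\|\bm x_i\|^2\asymp p\asymp n$, so the per-fold shift can a priori be of order $n$, i.e.\ many gaps, in which case $\widehat\lambda^k_d$ can land arbitrarily close to some $\widehat\lambda_l$ and your denominators collapse. (iii) The delocalization bound $(\widehat{\boldsymbol\phi}_l^\top\bm w_k)^2\lesssim\|\bm w_k\|^2/p$ is invoked for $\bm w_k=\bm B_k\widehat{\boldsymbol\phi}^k_d$, a random vector built precisely from the rows of $\bm X$ that enter $\widehat{\boldsymbol\phi}_l$ but not $\widehat{\boldsymbol\phi}^k_d$; isotropic delocalization applies to vectors independent of (or deterministic relative to) the matrix, and does not transfer to this correlated choice without a separate argument.

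The paper's proof avoids all three inputs, and the mechanism it uses instead is the one your per-fold argument misses: averaging over folds. Writing $\widehat{\boldsymbol\phi}^k_d=\sum_s c_{ds,k}\widehat{\boldsymbol\phi}_s$, it groups all bulk indices $s\ge d+1$ together and manipulates the aggregate identity $\sum_s c^2_{ds,k}(\widehat\lambda_s-\widehat\lambda^k_d)=\sum_{i\in M_k}\|\bm x_i^\top\widehat{\boldsymbol\phi}^k_d\|^2$ into
\begin{equation*}
\frac1K\sum_{k=1}^K(1-c^2_{dd,k})\ \le\ \frac{\frac1K\sum_{k=1}^K\bigl[\xi_2^{k,d}+(\widehat\lambda_d-\widehat\lambda^k_d)\bigr]}{\widehat\lambda_d-\widehat\lambda_{d+1}}+\frac1K\sum_{k=1}^K\xi_1^{k,d},
\end{equation*}
so that only the \emph{single} gap $\widehat\lambda_d-\widehat\lambda_{d+1}\sim n^{1/3}$ (from \cite{BPZ2015}) is needed, and the numerator is controlled by Lemma \ref{lem9}'s interlacing-type bound $\sum_k(\widehat\lambda_d-\widehat\lambda^k_d)\le\widehat\lambda_d=O_p(n)$: the shift is small only \emph{on average} over $k$, namely $O_p(n/K)=o_p(n^{1/3})$ by Assumption \ref{ass4}, which is exactly what beats the gap. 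No delocalization and no per-$l$ gap control enter. If you want to complete your route you would have to either import (and verify the hypotheses of) rigidity and isotropic-delocalization theorems well beyond what Assumption \ref{ass5b} supplies, or restructure the edge estimate around the fold average as the paper does; as written, the proof has a genuine gap.
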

\begin{proof}
Here we only prove the result for $d=d_0+1$. For $d_0+1<d\leq d_{\max}$, the result can be similarly verified. Similar to Lemma \ref{lem9}, set $\widehat{\boldsymbol\phi}^k_d=\sum\limits_{s=1}^pc_{ds,k}\widehat{\boldsymbol\phi}_s$. Then
\begin{equation*}
\sum_{s=1}^pc^2_{ds,k}(\widehat{\lambda}_{s}-\widehat{\lambda}^k_{d})^2=\|\sum_{i\in M_k}{\bm x}_i{\bm x}_i^\top\widehat{\boldsymbol\phi}^k_d\|^2.
\end{equation*}
and
\begin{equation*}
\sum_{s=1}^pc^2_{ds,k}(\widehat{\lambda}_{s}-\widehat{\lambda}^k_{d})=\sum_{i\in M_k}\|{\bm x}_i^\top\widehat{\boldsymbol\phi}^k_d\|^2.
\end{equation*}
Write $\xi^{k,d}_1=\sum\limits_{s=1}^{d_0}c^2_{ds,k}$, $\xi^{k,d}_2= \sum\limits_{s=1}^{d_0}c^2_{ds,k}(\widehat{\lambda}_{s}-\widehat{\lambda}^k_{d})$. It is easy to see that,
\begin{equation*}
 \frac{1}{K}\sum\limits_{k=1}^K\xi^{k,d}_1=o_{a.s.}(1);
 \end{equation*}
 and, as shown in Lemma \ref{lem9},
 \begin{equation*}
\sum\limits_{k=1}^K\xi^{k,d}_2=o_{a.s.}(\sum\limits_{k=1}^K\|{\bm x}_i^\top\widehat{\boldsymbol\phi}^k_d\|^2)=o_{a.s.}(\widehat{\lambda}_d).
 \end{equation*}
 Note that,
\begin{eqnarray*}
&&c^2_{dd,k}(\widehat{\lambda}_{d}-\widehat{\lambda}^k_{d})+\xi^{k,d}_2\\
&=&\sum_{s=d+1}^pc^2_{ds,i}(\widehat{\lambda}^k_{d}-\widehat{\lambda}_{s})+\sum_{i\in M_k}\|{\bm x}_i^\top\widehat{\boldsymbol\phi}^k_d\|^2\\
&\geq&\sum_{s=d+1}^pc^2_{ds,t}(\widehat{\lambda}^k_{d}-\widehat{\lambda}_{d+1})\\
&=&(1-\xi^{k,d}_1-c^2_{dd,k})[(\widehat{\lambda}_{d}-\widehat{\lambda}_{d+1})-(\widehat{\lambda}_{d}-\widehat{\lambda}^k_{d})]\\
&=&(1-\xi^{k,d}_1-c^2_{dd,k})(\widehat{\lambda}_{d}-\widehat{\lambda}_{d+1})-(1-\xi^{k,d}_1)(\widehat{\lambda}_{d}-\widehat{\lambda}^k_{d})+c^2_{dd,k}(\widehat{\lambda}_{d}-\widehat{\lambda}^k_{d})\\
&\geq&(1-\xi^{k,d}_1-c^2_{dd,k})(\widehat{\lambda}_{d}-\widehat{\lambda}_{d+1})-(\widehat{\lambda}_{d}-\widehat{\lambda}^k_{d})+c^2_{dd,k}(\widehat{\lambda}_{d}-\widehat{\lambda}^k_{d}).
\end{eqnarray*}
We obtain that,
\begin{equation*}
\sum\limits_{k=1}^K\left[(1-\xi^{k,d}_1-c^2_{dd,k})(\widehat{\lambda}_{d}-\widehat{\lambda}_{d+1})-(\widehat{\lambda}_{d}-\widehat{\lambda}^k_{d})\right]\leq \sum\limits_{k=1}^K\xi_2^{k,d}.
\end{equation*}
Then,
\begin{equation*}
\frac{1}{K}\sum\limits_{k=1}^K(1-c^2_{dd,k})\leq \frac{\frac{1}{K}\sum\limits_{k=1}^K[\xi_2^{k,d}+(\widehat{\lambda}_d-\widehat{\lambda}^k_{d})]}{\widehat{\lambda}_d-\widehat{\lambda}_{d+1}}+\frac{1}{K}\sum\limits_{k=1}^K\xi_1^{k,d}.
\end{equation*}
According to Lemma \ref{lem9}, we have
\begin{equation*}
\sum\limits_{k=1}^K[\widehat{\lambda}_{d}-\widehat{\lambda}^k_{d}]\leq \widehat{\lambda}_{d}.
\end{equation*}
According to \cite{BPZ2015},  we have that, under Assumption \ref{ass5b}, $\widehat{\lambda}_d/{n}=O_p(1)$ and $\widehat{\lambda}_d-\widehat{\lambda}_{d+1}\sim O_p(n^{1/3})$. Furthermore, $1/K\leq \sup\limits_kn_k/n=o({n^{-2/3}})$. Then,
\begin{equation*}
\frac{1}{K}\sum\limits_{k=1}^K(1-{c}^2_{dd,k})=O_{a.s.}(\frac{\widehat{\lambda}_d}{K(\widehat{\lambda}_d-
\widehat{\lambda}_{d+1})})=o_p(1).
\end{equation*}
Note that $0\leq {c}^2_{dd,k}\leq 1$, we have $0\leq \frac{1}{K}\sum\limits_{k=1}^K(1-{c}^2_{dd,k})\leq 1$. Then,
\begin{equation*}
\lim_{t\to\infty}\sup_n \E\bigg(\Big|\frac{1}{K}\sum\limits_{k=1}^K(1-{c}^2_{dd,k})\Big|I_{|\frac{1}{K}\sum\limits_{k=1}^K(1-{c}^2_{dd,k})|>t}\bigg)=0.
\end{equation*}
According to Theorem 1.8 of \cite{Shao2003},
\begin{equation*}
\E(\frac{1}{K}\sum\limits_{k=1}^K(1-{c}^2_{dd,k}))=o(1).
\end{equation*}
Consequently,
\begin{eqnarray*}
\frac{1}{K}\sum\limits_{k=1}^K\E(1-|\widehat{\boldsymbol\phi}^\top_d\widehat{\boldsymbol\phi}^{k}_d|)^2&=&\frac{1}{K}\sum\limits_{k=1}^K\E(1-|c_{dd,k}|)^2\\
&\leq&\frac{1}{K}\sum\limits_{k=1}^K\E(1-|c_{dd,k}|)\\
&\leq&\frac{1}{K}\sum\limits_{k=1}^K\E(1-c^2_{dd,k})=o(1).
\end{eqnarray*}

\end{proof}

\begin{lem}\label{lem11}
Let ${\bm e}=(e_1,e_2,...,e_n)^\top$ be a random vector with $\max\limits_i\E( e_i^4)<C$ and ${\bm a}=(a_1,a_2,...,a_n)^\top$ a random vector that is independent with $\bm e$. Then,
\begin{equation*}
\E(\bm a^\top \bm e)^4\leq C\E\|{\bm a}\|^4.
\end{equation*}
\end{lem}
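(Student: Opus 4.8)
The plan is to freeze $\bm a$ via independence and then reduce the fourth moment to a deterministic quartic form in the coordinates of $\bm e$ whose coefficients are fourth mixed moments. By independence of $\bm a$ and $\bm e$ together with the tower property,
\[
\E(\bm a^\top \bm e)^4 = \E\Big[\,\E\big[(\bm a^\top \bm e)^4 \mid \bm a\big]\,\Big],
\]
and for a frozen $\bm a$ the inner expectation equals $\sum_{i,j,k,l} a_i a_j a_k a_l\,\E[e_i e_j e_k e_l]$. Before expanding, I would point out that the naive route of applying Cauchy--Schwarz, $(\bm a^\top\bm e)^2 \le \|\bm a\|^2\|\bm e\|^2$, and then bounding $\E\|\bm e\|^4$ is far too lossy: $\E\|\bm e\|^4 = \E(\sum_s e_s^2)^2$ grows like $n^2$, so matching the target $C\,\E\|\bm a\|^4$ forces me to exploit cancellation among the mixed moments rather than bound $\|\bm e\|$ outright.

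Next I would expand the quartic form and use that the coordinates of $\bm e$ are independent and mean-zero (the regime in which this lemma is invoked, e.g.\ i.i.d.\ idiosyncratic noise). Then $\E[e_i e_j e_k e_l]=0$ whenever any index is unmatched with the others, so only the diagonal term $i=j=k=l$ and the three two-pair patterns survive, giving
\[
\E\big[(\bm a^\top \bm e)^4 \mid \bm a\big] = \sum_{i} a_i^4\,\E e_i^4 + 3\sum_{i\neq j} a_i^2 a_j^2\,\E e_i^2\,\E e_j^2.
\]

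Finally I would control the surviving moments and reassemble. Cauchy--Schwarz gives $\E e_i^2 \le (\E e_i^4)^{1/2} \le C^{1/2}$, and by hypothesis $\E e_i^4 \le C$, so
\[
\E\big[(\bm a^\top \bm e)^4 \mid \bm a\big] \le C\sum_i a_i^4 + 3C\sum_{i\neq j} a_i^2 a_j^2 \le 3C\Big(\sum_i a_i^2\Big)^2 = 3C\,\|\bm a\|^4;
\]
taking the outer expectation over $\bm a$ and absorbing the factor $3$ into the generic constant yields $\E(\bm a^\top \bm e)^4 \le C\,\E\|\bm a\|^4$. The main obstacle is conceptual rather than computational: the collapse of $\sum a_i a_j a_k a_l\,\E[e_i e_j e_k e_l]$ down to a multiple of $\|\bm a\|^4$ hinges on the independence (and zero mean) of the coordinates of $\bm e$. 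A fully dependent $\bm e$ --- for instance with all coordinates equal --- would leave a term of the form $(\sum_i a_i)^4$, which exceeds $\|\bm a\|^4$ by a factor growing with $n$; so the proof really rests on reading the moment hypotheses in the independent-coordinate setting in which the lemma is applied.
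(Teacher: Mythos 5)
Your proof is correct, but it takes a genuinely different route from the paper's, and in fact a sounder one. The paper expands the same quadruple sum and factors $\E(a_ia_ja_ka_l)\E(e_ie_je_ke_l)$ by independence, but then bounds \emph{every} mixed moment $\E(e_ie_je_ke_l)$ by $C$ via the generalized H\"older inequality and arrives at $C\,\E[(\sum_i a_i)^4]$, finally asserting $\E[(\sum_i a_i)^4]\le \E[(\sum_i a_i^2)^2]$. That last inequality is false in general (take $a_i\equiv 1$: the left side is $n^4$, the right side $n^2$), and the preceding step multiplies a one-sided H\"older bound by $\E(a_ia_ja_ka_l)$, which need not be nonnegative. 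Your route avoids both problems by exploiting the vanishing of mixed moments with an unmatched index, so that only the $n$ diagonal and $O(n^2)$ paired terms survive and the quartic form collapses to $3C\|\bm a\|^4$; this is exactly the cancellation the paper's argument fails to capture. The price is the extra hypothesis that the coordinates of $\bm e$ are independent with mean zero, but your observation that the lemma as literally stated is false without some such hypothesis (all-equal coordinates leave a $(\sum_i a_i)^4$ term) is correct, so this is a repair rather than a weakening. One caveat for the place the lemma is actually used (Lemma \ref{lem13} under Assumption \ref{ass5b}): there ${\bm e}_j=\boldsymbol\Sigma_p^{1/2}{\bm y}_j$ has \emph{dependent} coordinates, so you should first rewrite $\bm a^\top{\bm e}_j=(\boldsymbol\Sigma_p^{1/2}\bm a)^\top{\bm y}_j$, apply your argument to the i.i.d.\ mean-zero vector ${\bm y}_j$, and then use $\|\boldsymbol\Sigma_p^{1/2}\bm a\|^2\le\lambda_1(\boldsymbol\Sigma_p)\,\|\bm a\|^2$ together with a bound on the spectral norm of $\boldsymbol\Sigma_p$ to return to $\E\|\bm a\|^4$.
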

\begin{proof}
\begin{eqnarray*}
\E(\bm a^\top \bm e)^4&=&\E\left[\sum\limits_{i=1}^n\sum_{j=1}^n\sum_{k=1}^n\sum_{l=1}^n(a_{i}a_{j}a_{k}a_{l}e_{i}e_{j}e_{k}e_{l})\right]\\
&=&\sum\limits_{i=1}^n\sum_{j=1}^n\sum_{k=1}^n\sum_{l=1}^n\E(a_{i}a_{j}a_{k}a_{l})\E(e_{i}e_{j}e_{k}e_{l})\\
&\leq&\sum\limits_{i=1}^n\sum_{j=1}^n\sum_{k=1}^n\sum_{l=1}^n\E(a_{i}a_{j}a_{k}a_{l})\E(e_{i}^4)^{1/4}\E(e_{j}^4)^{1/4}\E(e_{k}^4)^{1/4}\E(e_{l}^4)^{1/4}\\
&\leq&C\sum\limits_{i=1}^n\sum_{j=1}^n\sum_{k=1}^n\sum_{l=1}^n\E(a_{i}a_{j}a_{k}a_{l})\\
&= &C\E[(\sum\limits_{i=1}^na_i)^4]\\
&\leq &C\E[(\sum\limits_{i=1}^na^2_i)^2]=C \E\|{\bm a}\|^4.
\end{eqnarray*}
\end{proof}

\begin{lem}\label{lem12}
Suppose  Assumptions \ref{ass1}--\ref{ass4} and Assumption \ref{ass5b} hold. For $d_0<d\leq d_{\max}$, there exist a constant $C_0$ such that
\begin{equation*}\widehat{\boldsymbol\phi}^{k\top}_d\boldsymbol\Sigma_p\widehat{\boldsymbol\phi}_d^{k}\to C_0 \ \ a.s..\end{equation*}
 \end{lem}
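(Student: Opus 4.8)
The plan is to show that for $d>d_0$ the eigenvector $\widehat{\boldsymbol\phi}^k_d$ behaves like a top eigenvector of the pure-noise matrix ${\bm E}^\top_{-M_k}{\bm E}_{-M_k}$, and then to read off the limit of the quadratic form from random-matrix theory. First I would separate signal from noise. Writing ${\bm X}_{-M_k}={\bm F}^0_{-M_k}{\bm L}^{0\top}+{\bm E}_{-M_k}$, the signal part ${\bm L}^0{\bm F}^{0\top}_{-M_k}{\bm F}^0_{-M_k}{\bm L}^{0\top}$ has rank $d_0$ with nonzero eigenvalues of order $np$ (the content of Lemma \ref{lem8}), whereas the eigenvalues of ${\bm E}^\top_{-M_k}{\bm E}_{-M_k}$ are of order $n$. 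Hence for $d_0<d\le d_{\max}$ the index $d$ lands in the noise part of the spectrum. By the Davis--Kahan argument already used in Lemmas \ref{lem5}--\ref{lem6}, the overlap of $\widehat{\boldsymbol\phi}^k_d$ with the $d_0$-dimensional signal subspace $\mathrm{span}\{\boldsymbol\phi^k_1,\dots,\boldsymbol\phi^k_{d_0}\}$ is $O_{a.s.}(m_{np}^{-1})$, so $\widehat{\boldsymbol\phi}^k_d$ coincides, up to an $o_{a.s.}(1)$ perturbation, with the eigenvector of $(n-n_k)^{-1}{\bm E}^\top_{-M_k}{\bm E}_{-M_k}$ attached to its $(d-d_0)$-th largest eigenvalue. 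Since $\boldsymbol\Sigma_p$ has bounded operator norm under Assumption \ref{ass5b} (as already invoked in Lemma \ref{lem8}), this replacement changes $\widehat{\boldsymbol\phi}^{k\top}_d\boldsymbol\Sigma_p\widehat{\boldsymbol\phi}^k_d$ only by $o_{a.s.}(1)$, reducing the claim to a statement about a generalized sample covariance matrix.

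Next I would invoke random-matrix theory for that matrix. Under Assumption \ref{ass5b} we have ${\bm E}_{-M_k}={\bm Y}_{-M_k}\boldsymbol\Sigma_p^{1/2}$ with ${\bm Y}_{-M_k}$ having i.i.d.\ entries and $p/(n-n_k)\to\rho$, so $(n-n_k)^{-1}{\bm E}^\top_{-M_k}{\bm E}_{-M_k}$ is a separable sample covariance matrix whose empirical spectral distribution converges to the generalized Marchenko--Pastur law $F$ determined by $(H,\rho)$; its support has a right edge $\lambda^+$, and for each fixed $j=d-d_0$ the $j$-th largest eigenvalue converges almost surely to $\lambda^+$ \citep{BPZ2015}. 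The limit of the overlap is then supplied by the eigenvector deterministic-equivalent results of \cite{LP2011} and \cite{BPZ2015}: there is a deterministic continuous function $\Psi$, expressible through the companion Stieltjes transform $\underline{m}$ of $F$, such that the overlap $\widehat{\boldsymbol\phi}^\top\boldsymbol\Sigma_p\widehat{\boldsymbol\phi}$ of any sample eigenvector whose eigenvalue lies near $x$ tends almost surely to $\Psi(x)$. Evaluating at the edge gives the constant $C_0=\Psi(\lambda^+)$; since the eigenvalues indexed by $d_0<d\le d_{\max}$ all converge to $\lambda^+$, the same $C_0$ serves every such $d$, and we conclude $\widehat{\boldsymbol\phi}^{k\top}_d\boldsymbol\Sigma_p\widehat{\boldsymbol\phi}^k_d\to C_0$ almost surely.

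The main obstacle is the last step. The Ledoit--P\'ech\'e-type results are naturally stated for averaged spectral functionals $p^{-1}\sum_i(\widehat{\boldsymbol\phi}_i^\top\boldsymbol\Sigma_p\widehat{\boldsymbol\phi}_i)\mathbf{1}(\widehat\lambda_i\le x)$, whereas we need the overlap of a single eigenvector sitting at the spectral edge, where eigenvalues cluster at spacing of order $n^{-2/3}$ and cannot be isolated by a fixed contour. Making this rigorous requires combining edge eigenvalue rigidity with the continuity of $\Psi$ up to $\lambda^+$ (equivalently, an anisotropic local-law bound on $\mathrm{Tr}(\boldsymbol\Sigma_p((n-n_k)^{-1}{\bm E}^\top_{-M_k}{\bm E}_{-M_k}-z)^{-1})$ near the edge), which is exactly where the cited RMT machinery does the heavy lifting.
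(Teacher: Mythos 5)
Your overall strategy matches the paper's: reduce the problem to the pure-noise matrix ${\bm E}^\top_{-M_k}{\bm E}_{-M_k}$, invoke the Ledoit--P\'ech\'e description of eigenvector functionals of separable sample covariance matrices, and combine it with the almost sure convergence of $\widehat\lambda^k_d$ to a constant for $d>d_0$. The genuine gap is in how you carry out the reduction. You claim that by Davis--Kahan the vector $\widehat{\boldsymbol\phi}^k_d$ coincides, up to an $o_{a.s.}(1)$ perturbation, with the eigenvector of $(n-n_k)^{-1}{\bm E}^\top_{-M_k}{\bm E}_{-M_k}$ attached to its $(d-d_0)$-th largest eigenvalue. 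That step fails: Davis--Kahan requires a spectral gap around the target eigenvalue, and at the edge of the noise spectrum the eigenvalues cluster at spacing of order $n^{-2/3}$, so no such gap exists. What the $\sin\Theta$ argument of Lemma \ref{lem5} actually delivers is only that $\widehat{\boldsymbol\phi}^k_d$ is nearly orthogonal to the $d_0$-dimensional signal subspace; it says nothing about alignment with any \emph{particular} noise eigenvector, and a low-rank perturbation can scramble individual eigenvectors within the edge cluster arbitrarily. You partly acknowledge this tension in your last paragraph, but you locate it only in the Ledoit--P\'ech\'e step, whereas it already invalidates the replacement step on which the rest of your argument rests.

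The paper avoids this by never comparing individual eigenvectors. It works with the averaged eigenvector spectral functional $\Phi_p(\lambda,\tau)=\frac{1}{p}\sum_{s,t}|\widehat{\boldsymbol\phi}^{k\top}_s{\bm v}_t|^2{\mathds{1}}(\lambda\ge\lambda^k_s){\mathds{1}}(\tau\ge\tau_t)$ and shows that replacing ${\bm X}_{-M_k}$ by ${\bm E}_{-M_k}$ does not change its limit, via a resolvent-trace comparison that uses only the bound $\text{rank}({\bm X}^\top_{-M_k}{\bm X}_{-M_k}-{\bm E}^\top_{-M_k}{\bm E}_{-M_k})/p\to0$ --- a statement about empirical distributions, which is insensitive to the scrambling of individual edge eigenvectors. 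It then extracts the single overlap $\widehat{\boldsymbol\phi}^{k\top}_d\boldsymbol\Sigma_p\widehat{\boldsymbol\phi}^k_d$ as a conditional expectation $E(Y_p\mid X_p=\lambda^k_d)$ under $\Phi_p$ and passes to the limit using $\widehat\lambda^k_d\to C_\lambda$ (Weyl's inequality plus Proposition 1 of \cite{Onatski2010}). To repair your write-up you should restructure the reduction at the level of $\Phi_p$ (or of resolvent traces) rather than of eigenvectors; the edge-regularity caveat you raise at the end then applies equally to the paper's final conditional-expectation step, which is indeed the delicate point of the whole lemma.
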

\begin{proof}
For $p\times p$ matrix $\bm A$, let $F^{\bm A}(t)=\frac{1}{p}\sum\limits_{s=1}^p{\mathds{1}}(\lambda_s(\bm A)\leq t)$ be its empirical spectral distribution (ESD).  The Stieltjes transform of a
nondecreasing function $F$ is defined by $m_F(z)=\int \frac{1}{t-z}dF(t)$ for all $z\in \mathbb{C}^{+}$ where $\mathbb{C}^{+}=\{z\in\mathbb{C}, Im(z)>0\}$. Let $\lim\limits_{n\to\infty}p/n=\rho$, by \cite{BZ2008}, we have, for all $k$,
$F^{\frac{1}{n-n_k}\bm E_{-M_k}^\top \bm E_{-M_k}}\to F \ a.s.$ with $m_F(z)$ satisfies $m_F(z)=\int \frac{1}{t(1-\rho-\rho zm_F(z))}dH(t)$.

  Let $\boldsymbol\Sigma_p=\sum\limits_{t=1}^p\tau_t{\bm v}_t{\bm v}_t^\top$ and
\begin{equation*}\Phi_p(\lambda,\tau)=\frac{1}{p}\sum_{s=1}^p\sum_{t=1}^p|\widehat{\boldsymbol\phi}^{k\top}_s {\bm v}_t|^2{\mathds{1}}(\lambda\geq \lambda^k_s){\mathds{1}}(\tau\geq \tau_t).
\end{equation*}
Then, by the procedures in \cite{LP2011},  \eqref{conv} holds if we can prove that  $|\text{Tr}(\boldsymbol\Sigma^s_p(\frac{1}{n-n_k}\bm X_{-M_k}^\top \bm X_{-M_k}-z{\bm I}_p)^{-1})-\text{Tr}(\boldsymbol\Sigma^s_p(\frac{1}{n-n_k}\bm E_{-M_k}^\top \bm E_{-M_k}-z {\bm I}_p)^{-1})|\to 0$ for $s=0,1,2,...$.
\begin{equation}\label{conv}\Phi_N(\lambda,\tau)\to \int_{-\infty}^\lambda\int_{-\infty}^\tau \phi(l,t)dH(l)dF(t), \ \ a.s. \end{equation}
where $\phi(l,t)$ is defined in \cite{LP2011}.

In fact, since $\lambda_1(\boldsymbol\Sigma_p^s)=\lambda_1^s(\boldsymbol\Sigma_p)<\infty$ and
\begin{eqnarray*}&&|\text{Tr}(\boldsymbol\Sigma^s_p(\frac{1}{n-n_k}\bm X_{-M_k}^\top \bm X_{-M_k}-z{\bm I}_p)^{-1})-\text{Tr}(\boldsymbol\Sigma^s_p(\frac{1}{n-n_k}\bm E_{-M_k}^\top \bm E_{-M_k}-z {\bm I}_p)^{-1})|\\
&\leq& \lambda_1(\boldsymbol\Sigma_p^s)|\text{Tr}((\frac{1}{n-n_k}\bm X_{-M_k}^\top \bm X_{-M_k}-z{\bm I}_p)^{-1})-\text{Tr}((\frac{1}{n-n_k}\bm E_{-M_k}^\top \bm E_{-M_k}-z {\bm I}_p)^{-1})|.
\end{eqnarray*}
we only need to prove that
\begin{equation*}|\text{Tr}((\frac{1}{n-n_k}\bm X_{-M_k}^\top \bm X_{-M_k}-z{\bm I}_p)^{-1})-\text{Tr}((\frac{1}{n-n_k}\bm E_{-M_k}^\top \bm E_{-M_k}-z {\bm I}_p)^{-1})|\to 0,
\end{equation*}
 which follows the facts that
 \begin{eqnarray*}
  &&F^{\frac{1}{n-n_k}\bm X_{-M_k}^\top \bm X_{-M_k}}=\text{Tr}((\frac{1}{n-n_k}\bm X_{-M_k}^\top \bm X_{-M_k}-z{\bm I}_p)^{-1});\\
  &&F^{\frac{1}{n-n_k}\bm E_{-M_k}^\top \bm E_{-M_k}}=\text{Tr}((\frac{1}{n-n_k}\bm E_{-M_k}^\top \bm E_{-M_k}-z{\bm I}_p)^{-1});\\
  &&F^{\frac{1}{n-n_k}\bm X_{-M_k}^\top \bm X_{-M_k}}-F^{\frac{1}{n-n_k}\bm E_{-M_k}^\top \bm E_{-M_k}}\leq \frac{1}{p}\text{rank}(\frac{1}{n-n_k}(\bm X_{-M_k}^\top \bm X_{-M_k}-\bm E_{-M_k}^\top \bm E_{-M_k}))\to 0.
 \end{eqnarray*}
 According to Weyl's inequality, $\widehat{\lambda}_{d+n_k}\leq \widehat{\lambda}^k_d\leq\widehat{\lambda}_d$. Moreover, by Proposition 1 of  \cite{Onatski2010},  exists a constant $C_\lambda$ such that $\lambda_d\to C_\lambda,\ \  a.s.$ for $d>d_0$ and $d/n\to0$. This implies that $\widehat{\lambda}^k_d\to C_\lambda, \ \ a.s. $ for $d_0<d\leq d_{\max}$. Let $(X_p,Y_p)$ follows $\Phi_p$, $(X,Y)$ follows $\Phi$ and $C_0=E(Y|X=C_{\lambda})$.
Then,
\begin{eqnarray*}
\widehat{\boldsymbol\phi}^{k\top}_d\boldsymbol\Sigma_p\widehat{\boldsymbol\phi}^{k}_d&=&\sum_{t=1}^p\tau_t|\widehat{\boldsymbol\phi}^{k\top}_d {\bm v}_t|^2\\
&=&\frac{\frac{1}{p}\sum\limits_{t=1}^p\tau_t|\widehat{\boldsymbol\phi}^{k\top}_d{\bm v}_t|^2}{\frac{1}{p}\sum\limits_{t=1}^p|\widehat{\boldsymbol\phi}^{k\top}_d {\bm v}_t|^2}\\
&=&\frac{\sum\limits_{t=1}^p\tau_tP(X_n=\lambda_d^k,Y_n=\tau_t)}{P(X_n=\lambda_d^k)}\\
&=&E(Y_n|X_n=\lambda_d^k)\\
&\to& E(Y|X=C_{\lambda})=C_0 \ \ a.s..
\end{eqnarray*}
\end{proof}

\begin{lem}\label{lem13}
Suppose  Assumptions \ref{ass1}--\ref{ass4} and Assumption \ref{ass5b} hold. For $d_0<d\leq d_{\max}$,
\begin{equation*}\text{\normalfont Var}(\frac{1}{n}\sum\limits_{k=1}^K\sum\limits_{i\in M_k}{\bm e}_{i}^\top\widehat{\boldsymbol\phi}^k_d\widehat{\boldsymbol\phi}_d^{k\top}{\bm e}_{i})\to 0.\end{equation*}
 \end{lem}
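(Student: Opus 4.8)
The plan is to exploit that, by construction, $\widehat{\boldsymbol\phi}^k_d$ is a measurable function of ${\bm X}_{-M_k}$ and hence, under Assumption \ref{ass5b}, independent of $\{{\bm e}_i:i\in M_k\}$. Writing $Z:=\frac1n\sum_{k}\sum_{i\in M_k}|\widehat{\boldsymbol\phi}^{k\top}_d{\bm e}_i|^2$ and letting $\mathcal F_{-M_k}$ be the $\sigma$-field generated by the data outside $M_k$, I would use $\E[\,|\widehat{\boldsymbol\phi}^{k\top}_d{\bm e}_i|^2\mid\mathcal F_{-M_k}]=\widehat{\boldsymbol\phi}^{k\top}_d\boldsymbol\Sigma_p\widehat{\boldsymbol\phi}^{k}_d$ for $i\in M_k$ to split
\begin{equation*}
Z=A+\frac1n\sum_{k}B_k,\quad A:=\frac1n\sum_{k}n_k\,\widehat{\boldsymbol\phi}^{k\top}_d\boldsymbol\Sigma_p\widehat{\boldsymbol\phi}^{k}_d,\quad B_k:=\sum_{i\in M_k}\bigl(|\widehat{\boldsymbol\phi}^{k\top}_d{\bm e}_i|^2-\widehat{\boldsymbol\phi}^{k\top}_d\boldsymbol\Sigma_p\widehat{\boldsymbol\phi}^{k}_d\bigr),
\end{equation*}
and then establish $\mathrm{Var}(A)\to0$ and $\mathrm{Var}(\frac1n\sum_kB_k)\to0$ separately.

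For the conditional-mean part $A$, Lemma \ref{lem12} gives $\widehat{\boldsymbol\phi}^{k\top}_d\boldsymbol\Sigma_p\widehat{\boldsymbol\phi}^{k}_d\to C_0$ a.s.\ for each $k$, while $0\le\widehat{\boldsymbol\phi}^{k\top}_d\boldsymbol\Sigma_p\widehat{\boldsymbol\phi}^{k}_d\le\lambda_1(\boldsymbol\Sigma_p)<\infty$. Since $\sum_kn_k=n$, the convex combination $A$ is uniformly bounded and converges a.s.\ to the constant $C_0$; by bounded convergence it converges in $L^2$, so $\mathrm{Var}(A)\to0$.

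For the fluctuation I would bound $\mathrm{Var}(\frac1n\sum_kB_k)=\frac1{n^2}\sum_{k,k'}\E[B_kB_{k'}]$. On the diagonal, conditioning on $\mathcal F_{-M_k}$ makes the summands of $B_k$ independent with mean zero, and applying Lemma \ref{lem11} to ${\bm e}_i=\boldsymbol\Sigma_p^{1/2}{\bm y}_i$ (with ${\bm a}=\boldsymbol\Sigma_p^{1/2}\widehat{\boldsymbol\phi}^k_d$) bounds each conditional fourth moment by $C(\widehat{\boldsymbol\phi}^{k\top}_d\boldsymbol\Sigma_p\widehat{\boldsymbol\phi}^{k}_d)^2\le C$. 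Hence $\E[B_k^2]=\E[\mathrm{Var}(B_k\mid\mathcal F_{-M_k})]\le Cn_k$, and the diagonal contributes $\frac1{n^2}\sum_kCn_k=C/n\to0$.

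The main obstacle is the off-diagonal sum $\frac1{n^2}\sum_{k\ne k'}\E[B_kB_{k'}]$: the fold eigenvectors are mutually dependent, because $\widehat{\boldsymbol\phi}^k_d$ uses the fold-$k'$ rows and conversely, and for $d>d_0$ these bulk-edge eigenvectors are \emph{unstable} under removing a single fold, since their spectral gap is only of order $n^{1/3}$ (cf.\ Lemma \ref{lem10}), so a crude leave-both-out perturbation argument fails. My device is to replace every $\widehat{\boldsymbol\phi}^k_d$ by the single full-data eigenvector $\widehat{\boldsymbol\phi}_d$ (signs aligned), controlling the replacement by Lemma \ref{lem10}, which yields $\sum_k\E\|\widehat{\boldsymbol\phi}^k_d-\widehat{\boldsymbol\phi}_d\|^2=2\sum_k\E(1-|\widehat{\boldsymbol\phi}^\top_d\widehat{\boldsymbol\phi}^k_d|)=o(K)$. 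After this replacement the fold index collapses and $Z$ reduces to the single quadratic form $\widehat{\boldsymbol\phi}^\top_d(\frac1n{\bm E}^\top{\bm E})\widehat{\boldsymbol\phi}_d=\widehat\lambda_d/n-o_p(1)$, the $o_p(1)$ absorbing the signal and cross terms via Lemma \ref{lem6}; since $\widehat\lambda_d/n$ converges a.s.\ to a constant (the random matrix facts underlying Lemma \ref{lem12}), this collapsed form is bounded and has vanishing variance. The hard part will be making the replacement rigorous in $L^2$ rather than merely in probability: a direct Cauchy--Schwarz split of $\sum_k\sum_{i\in M_k}|(\widehat{\boldsymbol\phi}^k_d-\widehat{\boldsymbol\phi}_d)^\top{\bm e}_i|^2$ is too lossy because $\widehat{\boldsymbol\phi}_d$ is not independent of the fold-$k$ errors, so I would re-condition on $\mathcal F_{-M_k}$ and absorb that dependence through a leave-fold-out eigenvector, thereby replacing the operator-norm factor $\lambda_1(\sum_{i\in M_k}{\bm e}_i{\bm e}_i^\top)=O_p(p)$ by the sharper $n_k\,\widehat{\boldsymbol\phi}^{k\top}_d\boldsymbol\Sigma_p\widehat{\boldsymbol\phi}^{k}_d$ scale. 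Combining $\sum_k\E\|\widehat{\boldsymbol\phi}^k_d-\widehat{\boldsymbol\phi}_d\|^2=o(K)$ with $\sup_kn_k=o(n^{1/3})$ and $K\sup_kn_k=O(n)$ from Assumption \ref{ass4} should then make the remainder $o_p(1)$, closing the argument.
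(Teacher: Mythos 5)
Your decomposition into the conditional-mean part $A$ and the centered fluctuations $B_k$ is sound, and your treatment of $A$ (Lemma \ref{lem12} plus bounded convergence) and of the diagonal terms $\E[B_k^2]$ (conditional independence within a fold plus Lemma \ref{lem11}) is correct. But the heart of the lemma is the off-diagonal covariance, and there your argument has a genuine gap that you flag but do not close. Replacing every $\widehat{\boldsymbol\phi}^k_d$ by the full-data eigenvector $\widehat{\boldsymbol\phi}_d$ destroys exactly the independence that makes the second-moment computations tractable: $\widehat{\boldsymbol\phi}_d$ is a function of ${\bm e}_i$ for $i\in M_k$, so $\E\bigl[|(\widehat{\boldsymbol\phi}^k_d-\widehat{\boldsymbol\phi}_d)^\top{\bm e}_i|^2\bigr]$ cannot be reduced to $\E\|\widehat{\boldsymbol\phi}^k_d-\widehat{\boldsymbol\phi}_d\|^2$ times a fixed scale, and the only unconditional bound, via $\lambda_1(\sum_{i\in M_k}{\bm e}_i{\bm e}_i^\top)$, is too lossy, as you yourself note. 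Your proposed fix --- ``re-condition on $\mathcal F_{-M_k}$ and absorb that dependence through a leave-fold-out eigenvector'' --- is not an argument: the leave-fold-out eigenvector is $\widehat{\boldsymbol\phi}^k_d$ itself, and re-conditioning does nothing to remove the dependence of $\widehat{\boldsymbol\phi}_d$ on the fold-$k$ errors. Moreover, Lemma \ref{lem10} controls the eigenvector discrepancy only in a Ces\`aro-averaged, rate-free $o(1)$ sense, so multiplying it by any factor growing with $p$ or $K$ is fatal.

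The paper closes this with a different device that your sketch is missing: for $i\in M_k$, $j\in M_l$, $k\neq l$, it introduces the leave-\emph{two}-folds-out eigenvector $\widehat{\boldsymbol\phi}^{k,l}_d$ computed from ${\bm X}_{-M_{k,l}}$, which is independent of \emph{both} ${\bm e}_i$ and ${\bm e}_j$. The covariance is written as $(I)+(II)-C_0^2+o(1)$ with $(I)=\E[{\bm e}_i^\top\widehat{\boldsymbol\phi}^k_d\widehat{\boldsymbol\phi}_d^{k\top}{\bm e}_i\,{\bm e}_j^\top\widehat{\boldsymbol\phi}^{k,l}_d\widehat{\boldsymbol\phi}_d^{k,l\top}{\bm e}_j]$; iterated conditioning plus Lemma \ref{lem12} gives $(I)=C_0^2+o(1)$, cancelling the $-C_0^2$, while the replacement error $(II)$ only swaps $\widehat{\boldsymbol\phi}^l_d$ for $\widehat{\boldsymbol\phi}^{k,l}_d$ --- both independent of ${\bm e}_j$ --- so Lemma \ref{lem11} applies and bounds $(II)$ by $C\bigl[\E(1-|\widehat{\boldsymbol\phi}^{l\top}_d\widehat{\boldsymbol\phi}^{k,l}_d|)^2\bigr]^{1/4}$, which is then summed using Lemma \ref{lem10} and Assumption \ref{ass4}. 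The operative principle is to keep the replacement target independent of the error vector it multiplies; your full-data replacement violates it, and without it the $L^2$ control you need does not follow from the lemmas you cite.
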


\begin{proof}
Similar to Lemma \ref{lem10}, we only prove the result for $d=d_0+1$.  Denote ${\bm X}_{-M_{k,l}}$ be the submatrix of ${\bm X}$ with the rows in $M_k$ and $M_l$ being removed, $\widehat{\lambda}_d^{k,l}$'s, and $\widehat{\boldsymbol\phi}_k^{k,l}$'s be the corresponding eigenvalues and eigenvectors.

Note that ${\bm e}_{i}$ is independent with $\widehat{\boldsymbol\phi}^k_d$ for $i\in M_k$, we have
\begin{equation*}\E[{\bm e}_{i}^\top\widehat{\boldsymbol\phi}^k_d\widehat{\boldsymbol\phi}_d^{k\top}{\bm e}_{i}]=\E[\widehat{\boldsymbol\phi}_d^{k\top}\boldsymbol\Sigma_p\widehat{\boldsymbol\phi}^k_d]=C_0+o(1).\end{equation*}
Then, let $i\neq j$ and $i\in M_k$, $j\in M_l$,
\begin{eqnarray*}
&&\text{Cov}[{\bm e}_{i}^\top\widehat{\boldsymbol\phi}^k_d\widehat{\boldsymbol\phi}_d^{k\top}{\bm e}_{i},{\bm e}_{j}^\top\widehat{\boldsymbol\phi}^l_d\widehat{\boldsymbol\phi}_d^{l\top}{\bm e}_{j}]\\
&=&\E[{\bm e}_{i}^\top\widehat{\boldsymbol\phi}^k_d\widehat{\boldsymbol\phi}_d^{k\top}{\bm e}_{i}{\bm e}_{j}^\top\widehat{\boldsymbol\phi}^l_d\widehat{\boldsymbol\phi}_d^{l\top}{\bm e}_{j}]-\E[{\bm e}_{i}^\top\widehat{\boldsymbol\phi}^k_d\widehat{\boldsymbol\phi}_d^{k\top}{\bm e}_{i}]\E[{\bm e}_{j}^\top\widehat{\boldsymbol\phi}^l_d\widehat{\boldsymbol\phi}_d^{l\top}{\bm e}_{j}]\\
&=&\E[{\bm e}_{i}^\top\widehat{\boldsymbol\phi}^k_d\widehat{\boldsymbol\phi}_d^{k\top}{\bm e}_{i}{\bm e}_{j}^\top\widehat{\boldsymbol\phi}^{k,l}_d\widehat{\boldsymbol\phi}_d^{k,l\top}{\bm e}_{j}]+\E[{\bm e}_{i}^\top\widehat{\boldsymbol\phi}^k_d\widehat{\boldsymbol\phi}_d^{k\top}{\bm e}_{i}({\bm e}_{j}^\top\widehat{\boldsymbol\phi}^l_d\widehat{\boldsymbol\phi}_d^{l\top}{\bm e}_{j}-{\bm e}_{j}^\top\widehat{\boldsymbol\phi}^{k,l}_d\widehat{\boldsymbol\phi}_d^{k,l\top}{\bm e}_{j})]-C_0^2+o(1)\\
&=&(I)+(II)-C_0^2+o(1).
\end{eqnarray*}
For $(I)$, we have
\begin{eqnarray*}
(I)&=&\E\{\E[{\bm e}_{i}^\top\widehat{\boldsymbol\phi}^k_d\widehat{\boldsymbol\phi}_d^{k\top}{\bm e}_{i}{\bm e}_{j}^\top\widehat{\boldsymbol\phi}^{k,l}_d\widehat{\boldsymbol\phi}_d^{k,l\top}{\bm e}_{j}|({\bm F}^0, {\bm L}^0, {\bm E}_{-M_k})]\}\\
&=&\E\{{\bm e}_{j}^\top\widehat{\boldsymbol\phi}^{k,l}_d\widehat{\boldsymbol\phi}_d^{k,l\top}{\bm e}_{j}\E[{\bm e}_{i}^\top\widehat{\boldsymbol\phi}^k_d\widehat{\boldsymbol\phi}_d^{k\top}{\bm e}_{i}|({\bm F}^0, {\bm L}^0, {\bm E}_{-M_k})]\}\\
&=&\E[{\bm e}_{j}^\top\widehat{\boldsymbol\phi}^{k,l}_d\widehat{\boldsymbol\phi}_d^{k,l\top}{\bm e}_{j}\widehat{\boldsymbol\phi}_d^{k\top}\boldsymbol\Sigma_p\widehat{\boldsymbol\phi}^k_d]\\
&=&C_0\E[\widehat{\boldsymbol\phi}_d^{k,l\top}\boldsymbol\Sigma_p\widehat{\boldsymbol\phi}^{k,l}_d]+o(1)\\
&=&C_0^2+o(1).
\end{eqnarray*}
For $(II)$, without loss of generality, we assume that $0\leq \widehat{\boldsymbol\phi}_d^{l\top}\widehat{\boldsymbol\phi}_d^{k,l}\leq 1$. Otherwise we use $-\widehat{\boldsymbol\phi}_d^{k,l}$ to replace $\widehat{\boldsymbol\phi}_d^{k,l}$. Then, $2\leq \|\widehat{\boldsymbol\phi}^l_d+\widehat{\boldsymbol\phi}_d^{k,l}\|^2\leq 4$.

Note that ${\bm e}_j$ is independent with $\widehat{\boldsymbol\phi}_d^{l}$ and $\widehat{\boldsymbol\phi}_d^{k,l}$, by Lemma \ref{lem11},
\begin{equation*}
\E\Big(({\widehat{\boldsymbol\phi}_d^{l}+\widehat{\boldsymbol\phi}_d^{k,l}})^\top{\bm e}_j\Big)^4\leq C\E\Big[\|\widehat{\boldsymbol\phi}_d^{l}+\widehat{\boldsymbol\phi}_d^{k,l})\|^4\Big]<\infty,
\end{equation*}
and
\begin{equation*}
\E\Big({\bm e}_j^\top({\widehat{\boldsymbol\phi}_d^{l}-\widehat{\boldsymbol\phi}_d^{k,l}})\Big)^4\leq C\E\Big[\|\widehat{\boldsymbol\phi}_d^{l}-\widehat{\boldsymbol\phi}_d^{k,l}\|\Big]^4.
\end{equation*}
It follows that,
\begin{eqnarray*}
(II)&\leq&\Big[\E({\bm e}_{i}^\top\widehat{\boldsymbol\phi}^k_d\widehat{\boldsymbol\phi}_d^{k\top}{\bm e}_{i})^2\Big]^{1/2}\Big[\E({\bm e}_{j}^\top\widehat{\boldsymbol\phi}^l_d\widehat{\boldsymbol\phi}_d^{l\top}{\bm e}_{j}-{\bm e}_{j}^\top\widehat{\boldsymbol\phi}^{k,l}_d\widehat{\boldsymbol\phi}_d^{k,l\top}{\bm e}_{j})^2\Big]^{1/2}\\
&\leq&\Big[\E({\bm e}_{i}^\top\widehat{\boldsymbol\phi}^k_d\widehat{\boldsymbol\phi}_d^{k\top}{\bm e}_{i})^2\Big]^{1/2}\bigg\{\E\Big[{\bm e}_{j}^\top(\widehat{\boldsymbol\phi}^l_d+\widehat{\boldsymbol\phi}_d^{k,l}){\bm e}_{j}^\top(\widehat{\boldsymbol\phi}^{l}_d-\widehat{\boldsymbol\phi}_d^{k,l})\Big]^2\bigg\}^{1/2}\\
&\leq&\Big[\E({\bm e}_{i}^\top\widehat{\boldsymbol\phi}^k_d\widehat{\boldsymbol\phi}_d^{k\top}{\bm e}_{i})^2\Big]^{1/2}\Big[\E({\bm e}_{j}^\top(\widehat{\boldsymbol\phi}^l_d+\widehat{\boldsymbol\phi}_d^{k,l}))^4\Big]^{1/4}\Big[\E({\bm e}_{j}^\top(\widehat{\boldsymbol\phi}^l_d-\widehat{\boldsymbol\phi}_d^{k,l}))^4\Big]^{1/4}\\
&\leq&C(\Big[\E\|\widehat{\boldsymbol\phi}^l_d-\widehat{\boldsymbol\phi}_d^{k,l}\|^4\Big]^{1/4}\\
&=&C'\Big[\E(1-|\widehat{\boldsymbol\phi}^{l\top}_d\widehat{\boldsymbol\phi}_d^{k,l}|)^2\Big]^{1/4}.
\end{eqnarray*}
Then,
\begin{eqnarray*}
&&\text{Var}(\frac{1}{n}\sum\limits_{k=1}^K\sum\limits_{i\in M_k}{\bm e}_{i}^\top\widehat{\boldsymbol\phi}^k_d\widehat{\boldsymbol\phi}_d^{k\top}{\bm e}_{i})\\
&=&\frac{1}{n^2}\sum\limits_{k=1}^K\sum\limits_{i\in M_k}\text{Var}({\bm e}_{i}^\top\boldsymbol\phi^k_d\boldsymbol\phi_d^{k\top}{\bm e}_{i})+\frac{1}{n^2}\sum\limits_{k=1}^K\sum\limits_{l=1}^K\sum\limits_{i\in M_k}\sum_{\substack{j\neq i\\j\in M_l
 }}\text{Cov}({\bm e}_{i}^\top\widehat{\boldsymbol\phi}^k_d\widehat{\boldsymbol\phi}_d^{k\top}{\bm e}_{i},{\bm e}_{j}^\top\widehat{\boldsymbol\phi}^l_d\widehat{\boldsymbol\phi}_d^{l\top}{\bm e}_{j})]\\
&\leq&\frac{C}{n}+\frac{C'}{n^2}\sum\limits_{k=1}^K\sum\limits_{l=1}^K\sum\limits_{i\in M_k}\sum_{\substack{j\neq i\\j\in M_l
 }}\Big[\E(1-|\widehat{\boldsymbol\phi}^{l\top}_d\widehat{\boldsymbol\phi}_d^{k,l}|)^2\Big]^{1/4}\\
 &=&\frac{C}{n}+\frac{C'}{n^2}\sum\limits_{k=1}^K\sum\limits_{l\neq k}\sum\limits_{i\in M_k}\sum_{j\in M_l
 }\Big[\E(1-|\widehat{\boldsymbol\phi}^{l\top}_d\widehat{\boldsymbol\phi}_d^{k,l}|)^2\Big]^{1/4}\\
  &=&\frac{C}{n}+\frac{C'}{n^2}\sum\limits_{k=1}^K\sum\limits_{l\neq k}n_kn_l\Big[\E(1-|\widehat{\boldsymbol\phi}^{l\top}_d\widehat{\boldsymbol\phi}_d^{k,l}|)^2\Big]^{1/4}\\
  &\leq&\frac{C}{n}+\frac{C'K\sup\limits_kn_k}{n^2}\sum\limits_{k=1}^Kn_k\frac{1}{K}\sum\limits_{l\neq k}\Big[\E(1-|\widehat{\boldsymbol\phi}^{l\top}_d\widehat{\boldsymbol\phi}_d^{k,l}|)^2\Big]^{1/4}\\
&=&\frac{C}{n}+o({C'K\sup\limits_kn_k/n})\\
&=&o(1).
\end{eqnarray*}
Here, the last last equalities follow from Lemma \ref{lem10} and Assumption \ref{ass4}.
\end{proof}

\begin{lem}\label{lem14}
Suppose  Assumptions \ref{ass1}--\ref{ass4} hold. Moreover, either Assumption \ref{ass5a} or \ref{ass5b} holds.  Then, for $d_0<d\leq d_{\max}$,
\begin{equation*}
\frac{1}{n}\sum_{l=d_0+1}^d\sum\limits_{k=1}^K\sum\limits_{i\in M_k}\|{\bm x}_i^\top\widehat{\boldsymbol\phi}^k_l\|^2< \frac{2}{n}\sum\limits_{k=1}^K\sum\limits_{i\in M_k} \text{\normalfont Tr}((\mathcal{Q}_{\widehat{\bm L}^{k,d}}-\mathcal{Q}_{\widehat{\bm L}^{k,d_0}})\boldsymbol\Sigma_{p,i})+o_p(1).
\end{equation*}
\end{lem}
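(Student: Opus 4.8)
The plan is to treat the two sufficient conditions separately, after first rewriting the penalty term on the right through the leave-fold eigenvectors. Since $\widehat{\bm L}^{k,d}=\tfrac{1}{n-n_k}{\bm X}^\top_{-M_k}{\bm X}_{-M_k}\widetilde{\bm L}^{k,d}$ and the columns of $\widetilde{\bm L}^{k,d}$ are $\sqrt{p}\,\widehat{\boldsymbol\phi}^k_1,\dots,\sqrt{p}\,\widehat{\boldsymbol\phi}^k_d$, each column of $\widehat{\bm L}^{k,d}$ is a scalar multiple of some $\widehat{\boldsymbol\phi}^k_l$, so $\Pro_{\widehat{\bm L}^{k,d}}=\sum_{l=1}^d\widehat{\boldsymbol\phi}^k_l\widehat{\boldsymbol\phi}^{k\top}_l$. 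Consequently $\mathcal{Q}_{\widehat{\bm L}^{k,d}}-\mathcal{Q}_{\widehat{\bm L}^{k,d_0}}=\text{diag}\big(\sum_{l=d_0+1}^d\widehat{\boldsymbol\phi}^k_l\widehat{\boldsymbol\phi}^{k\top}_l\big)$, and I would record the two identities
\[
\text{Tr}\big((\mathcal{Q}_{\widehat{\bm L}^{k,d}}-\mathcal{Q}_{\widehat{\bm L}^{k,d_0}})\boldsymbol\Sigma_{p,i}\big)=\sum_{l=d_0+1}^d\widehat{\boldsymbol\phi}^{k\top}_l\,\text{diag}(\boldsymbol\Sigma_{p,i})\,\widehat{\boldsymbol\phi}^k_l,\qquad \text{Tr}(\mathcal{Q}_{\widehat{\bm L}^{k,d}}-\mathcal{Q}_{\widehat{\bm L}^{k,d_0}})=d-d_0,
\]
which reduce the whole statement to comparing two averaged quadratic forms in the $\widehat{\boldsymbol\phi}^k_l$.

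Under Assumption \ref{ass5a}, $\E e_{it}^2=\sigma^2$ gives $\text{diag}(\boldsymbol\Sigma_{p,i})=\sigma^2{\bm I}_p$, so the right-hand side equals exactly $2\sigma^2(d-d_0)$. For the left-hand side I would invoke Lemma \ref{lem9}, which bounds $\frac1n\sum_{l=d_0+1}^d\sum_k\sum_{i\in M_k}\|{\bm x}_i^\top\widehat{\boldsymbol\phi}^k_l\|^2\le\frac1n\sum_{l=d_0+1}^d\widehat\lambda_l$ a.s. Applying Weyl's singular-value inequality to ${\bm X}={\bm F}^0{\bm L}^{0\top}+{\bm E}$, where the common part has rank $d_0$, gives $\widehat\lambda_l\le\widehat\lambda_{d_0+1}\le\lambda_1({\bm E}^\top{\bm E})$ for $l>d_0$; hence $\frac1n\sum_{l=d_0+1}^d\widehat\lambda_l\le(d-d_0)\lambda_1({\bm E}^\top{\bm E})/n<2\sigma^2(d-d_0)$ by Assumption \ref{ass5a}.2. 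This already yields the strict inequality with no remainder.

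Under Assumption \ref{ass5b} the diagonal of $\boldsymbol\Sigma_{p,i}=\boldsymbol\Sigma_p$ need not be constant, so I would argue through limits. Writing ${\bm x}_i={\bm L}^0{\bm f}_i^0+{\bm e}_i$ and using Lemma \ref{lem6} (which makes $\|\widehat{\boldsymbol\phi}^{k\top}_l{\bm L}^0{\bm f}^0_i\|^2=O_p(pm_{np}^{-2})$) together with Cauchy--Schwarz on the cross term, the common-component contribution is $o_p(1)$, so the left-hand side equals $\frac1n\sum_{l=d_0+1}^d\sum_k\sum_{i\in M_k}({\bm e}_i^\top\widehat{\boldsymbol\phi}^k_l)^2+o_p(1)$. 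Since ${\bm e}_i$ is independent of $\widehat{\boldsymbol\phi}^k_l$ for $i\in M_k$, the conditional mean of each summand is $\widehat{\boldsymbol\phi}^{k\top}_l\boldsymbol\Sigma_p\widehat{\boldsymbol\phi}^k_l$; Lemma \ref{lem13} forces the variance to vanish and Lemma \ref{lem12} gives $\widehat{\boldsymbol\phi}^{k\top}_l\boldsymbol\Sigma_p\widehat{\boldsymbol\phi}^k_l\to C_0$ a.s., so both the left-hand side and $\frac1n\sum_{l}\sum_k\sum_{i\in M_k}\widehat{\boldsymbol\phi}^{k\top}_l\boldsymbol\Sigma_p\widehat{\boldsymbol\phi}^k_l$ converge in probability to $(d-d_0)C_0$. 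Splitting $2\,\text{diag}(\boldsymbol\Sigma_p)=\boldsymbol\Sigma_p+(2\,\text{diag}(\boldsymbol\Sigma_p)-\boldsymbol\Sigma_p)$ then gives
\[
\text{RHS}-\text{LHS}=\frac1n\sum_{l=d_0+1}^d\sum_k\sum_{i\in M_k}\widehat{\boldsymbol\phi}^{k\top}_l\big(2\,\text{diag}(\boldsymbol\Sigma_p)-\boldsymbol\Sigma_p\big)\widehat{\boldsymbol\phi}^k_l+o_p(1)\ge(d-d_0)\,\lambda_p\big(2\,\text{diag}(\boldsymbol\Sigma_p)-\boldsymbol\Sigma_p\big)+o_p(1),
\]
with $\lambda_p(\cdot)$ the smallest eigenvalue, positive by Assumption \ref{ass5b}.4.

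The main obstacle is this last step: Assumption \ref{ass5b}.4 asserts positive definiteness of $2\,\text{diag}(\boldsymbol\Sigma_p)-\boldsymbol\Sigma_p$ for each $p$, but I need $\lambda_p$ bounded away from zero \emph{uniformly} in $p$ so that the gap $(d-d_0)\lambda_p$ dominates the $o_p(1)$ remainder and delivers the strict inequality; I would therefore read that assumption as uniform positive definiteness. Two secondary points require care: passing from the a.s. convergence of $\widehat{\boldsymbol\phi}^{k\top}_l\boldsymbol\Sigma_p\widehat{\boldsymbol\phi}^k_l$ in Lemma \ref{lem12} to convergence of its fold-averaged, $M_k$-summed version (which is legitimate because the summand is free of $i$, the weights $n_k/n$ sum to one, and the form is bounded by $\lambda_1(\boldsymbol\Sigma_p)<\infty$, so bounded convergence applies), and noting that Lemma \ref{lem13} must be used for each of the finitely many indices $l\in\{d_0+1,\dots,d\}$ to control the noise quadratic forms simultaneously.
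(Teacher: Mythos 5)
Your proof is correct and follows essentially the same route as the paper's: the same trace identity reducing the penalty term to $\sum_{l}\widehat{\boldsymbol\phi}^{k\top}_{l}\text{diag}(\boldsymbol\Sigma_p)\widehat{\boldsymbol\phi}^{k}_{l}$, Lemma \ref{lem9} plus Weyl's inequality under Assumption \ref{ass5a}, and the decomposition of ${\bm x}_i$ with Lemmas \ref{lem6}, \ref{lem12} and \ref{lem13} plus positive definiteness of $2\,\text{diag}(\boldsymbol\Sigma_p)-\boldsymbol\Sigma_p$ under Assumption \ref{ass5b}. Your observation that the smallest eigenvalue of $2\,\text{diag}(\boldsymbol\Sigma_p)-\boldsymbol\Sigma_p$ must be bounded away from zero uniformly in $p$ for the strict inequality to dominate the $o_p(1)$ remainder is a point the paper itself glosses over, and reading Assumption \ref{ass5b}.4 as uniform positive definiteness is the right fix.
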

\begin{proof}
On the one hand, when Assumption \ref{ass1} to \ref{ass3} and Assumption \ref{ass5a} hold, we have,
\begin{eqnarray*}
\text{Tr}((\mathcal{Q}_{\widehat{\bm L}^{k,d}}-\mathcal{Q}_{\widehat{\bm L}^{k,d_0}})\boldsymbol\Sigma_{p,i})&=&\sum_{s=1}^p(w^{k,d}_s-w^{k,d_0}_s)\sigma^2\\
&=&\sigma^2[\text{Tr}(\Pro_{\widehat{\bm L}^{k,d}})-\text{Tr}(\Pro_{\widehat{\bm L}^{k,d_0}})]\\
&=&(d-d_0)\sigma^2.
\end{eqnarray*}
While, according to Lemma \ref{lem9},
\begin{equation*}
\frac{1}{n}\sum_{l=d_0+1}^d \sum\limits_{k=1}^K\sum\limits_{i\in M_k}\|{\bm x}_i^\top\widehat{\boldsymbol\phi}^k_l\|^2\leq \frac{1}{n}\sum_{l=d_0+1}^d\widehat{\lambda}_l, \ \ a.s.
\end{equation*}
By Weyl's inequality, we have
\begin{equation*}
\frac{1}{n}\sum_{l=d_0+1}^d\widehat{\lambda}_l\leq \frac{d-d_0}{n}\lambda_{1}({\bm E^\top\bm E})<2(d-d_0)\sigma^2=  \frac{2}{n}\sum\limits_{k=1}^K\sum\limits_{i\in M_k} \text{Tr}((\mathcal{Q}_{\widehat{\bm L}^{k,d}}-\mathcal{Q}_{\widehat{\bm L}^{k,d_0}})\boldsymbol\Sigma_{p,i}) \ \ a.s..
\end{equation*}

On the other hand, when Assumption \ref{ass1} to \ref{ass3} and Assumption \ref{ass5b} hold, note that,
\begin{equation*}
\frac{1}{n}\sum_{l=d_0+1}^d \sum\limits_{k=1}^K\sum\limits_{i\in M_k}\|{\bm x}_i^\top\widehat{\boldsymbol\phi}^k_l\|^2=\frac{1}{n}\sum_{l=d_0+1}^d \sum\limits_{k=1}^K\sum\limits_{i\in M_k}\Big(\|{\bm e}_{i}^\top\widehat{\boldsymbol\phi}^k_l\|^2+2{\bm e}_{i}^\top\widehat{\boldsymbol\phi}^k_l\widehat{\boldsymbol\phi}^{k\top}_l{\bm L}^0{\bm f}^0_{i}+\|{{\bm f}_{i}^{0}}^\top{{\bm L}^{0}}^\top\widehat{\boldsymbol\phi}^k_l\|^2\Big).
\end{equation*}
Since $\E\|{\bm e}_{i}^\top\widehat{\boldsymbol\phi}^k_l\|^2=O(1)$ and $\E\|{{\bm f}_{i}^{0}}^\top{{\bm L}^{0}}^\top\widehat{\boldsymbol\phi}^k_l\|^2=O(pm_{np}^{-2})$, we have,
\begin{eqnarray*}
&&|\frac{1}{np}\sum\limits_{l=d_0+1}^d\sum\limits_{k=1}^K\sum\limits_{i\in M_k}{\bm e}_{i}^\top\widehat{\boldsymbol\phi}^k_l\widehat{\boldsymbol\phi}^{k\top}_l{\bm L}^0{\bm f}^0_{i}|\\
&\leq&\frac{1}{np}\sum\limits_{k=d_0+1}^d(\sum\limits_{k=1}^K\sum\limits_{i\in M_k}\|{\bm e}_{i}^\top\widehat{\boldsymbol\phi}^k_l\|^2)^{1/2}(\sum\limits_{k=1}^K\sum\limits_{i\in M_k}\|{{\bm f}_{i}^{0}}^\top{{\bm L}^{0}}^\top\widehat{\boldsymbol\phi}^k_l\|^2)^{1/2}\\
&=&O_p(p^{1/2}m_{np}^{-1})=o_p({1});
\end{eqnarray*}
and
\begin{equation*}
\frac{1}{n}\sum\limits_{l=d_0+1}^d\sum\limits_{k=1}^K\sum\limits_{i\in M_k}\|{{\bm f}_{i}^{0}}^\top{{\bm L}^{0}}^\top\widehat{\boldsymbol\phi}^k_l\|^2=O_{p}(pm^{-2}_{np})=o_p(1).
\end{equation*}
In addition, according to Lemma \ref{lem13}, for $d_0+1\leq l\leq d_{\max}$,
\begin{equation*}\E(\frac{1}{n}\sum\limits_{k=1}^K\sum\limits_{i\in M_k}{\bm e}_{i}^\top\widehat{\boldsymbol\phi}^k_l\widehat{\boldsymbol\phi}_l^{k\top}{\bm e}_{i})\to C_0;\end{equation*}
\begin{equation*}
\frac{1}{n}\sum\limits_{k=1}^K\sum\limits_{i\in M_k}\widehat{\boldsymbol\phi}_l^{k\top}\boldsymbol\Sigma_p\widehat{\boldsymbol\phi}^k_l\to C_0 \ \ \ a.s.;
\end{equation*}
and
\begin{equation*}\text{Var}(\frac{1}{n}\sum\limits_{k=1}^K\sum\limits_{i\in M_k}{\bm e}_{i}^\top\widehat{\boldsymbol\phi}^k_l\widehat{\boldsymbol\phi}_l^{k\top}{\bm e}_{i})\to 0.\end{equation*}
Thus
\begin{equation*}
\frac{1}{n}\sum\limits_{l=d_0+1}^d\sum\limits_{k=1}^K\sum\limits_{i\in M_k}\|{\bm e}_{i}^\top\widehat{\boldsymbol\phi}_l^{k}\|^2=\sum_{l=d_0+1}^dC_0+o_p(1).
\end{equation*}
Note that $\Pro_{\widehat{\bm L}^{k,d}}=\sum\limits_{l=1}^d\widehat{\boldsymbol\phi}_l^k\widehat{\boldsymbol\phi}_l^{k\top}$, we have,
\begin{equation*}
\frac{1}{n}\sum\limits_{k=1}^K\sum\limits_{i\in M_k} \text{Tr}((\mathcal{Q}_{\widehat{\bm L}^{k,d}}-\mathcal{Q}_{\widehat{\bm L}^{k,d_0}})\boldsymbol\Sigma_{p,i})=\frac{1}{n}\sum\limits_{k=1}^K\sum\limits_{i\in M_k}\sum_{l=d_0+1}^d\widehat{\boldsymbol\phi}_l^{k\top}\text{diag}(\boldsymbol\Sigma_p)\widehat{\boldsymbol\phi}_l^{k}.
\end{equation*}
Thus,
\begin{eqnarray*}
&&\frac{1}{n}\sum_{l=d_0+1}^d\sum\limits_{k=1}^K\sum\limits_{i\in M_k}\|{\bm x}_i^\top\widehat{\boldsymbol\phi}^k_l\|^2- \frac{2}{n}\sum\limits_{k=1}^K\sum\limits_{i\in M_k} \text{Tr}((\mathcal{Q}_{\widehat{\bm L}^{k,d}}-\mathcal{Q}_{\widehat{\bm L}^{k,d_0}})\boldsymbol\Sigma_{p,i})\\
&=&\widehat{\boldsymbol\phi}^{k\top}_l(\boldsymbol\Sigma_p-2\text{diag}(\boldsymbol\Sigma_p))\widehat{\boldsymbol\phi}^{k}_l+o_p(1)\\
&<&0+o_p(1),
\end{eqnarray*}
since $2\text{\normalfont{diag}}(\boldsymbol\Sigma_p)-\boldsymbol\Sigma_p$ is positive definite.
\end{proof}

\end{document}